\title{Bicriteria Algorithms for Submodular Cover with Partition and Fairness Constraints}
\author{Wenjing Chen\thanks{Texas A\&M University, \texttt{jj9754@tamu.edu}, \texttt{shuoxing@tamu.edu}, \texttt{samsonzhou@gmail.com}, \texttt{vcrawford@tamu.edu}} \And Shuo Xing\samethanks \And Samson Zhou\samethanks \And Victoria G. Crawford\samethanks}
\author{%
  Wenjing Chen \\
  \texttt{jj9754@tamu.edu} \\
  Computer Science \& Engineering\\
  Texas A\&M University\\
  \and
  Yixin Chen \\
  \texttt{chen777@tamu.edu}\\
  Computer Science \& Engineering\\
  Texas A\&M University\\
  \and
  Victoria G. Crawford \\
  \texttt{vcrawford@tamu.edu} \\
  Computer Science \& Engineering\\
  Texas A\&M University\\
}
\date{}
\newtheorem{theorem}{Theorem}[section]
\newtheorem{lemma}[theorem]{Lemma}
\newtheorem{corollary}[theorem]{Corollary}
\theoremstyle{definition}
\newtheorem{definition}[theorem]{Definition}
\theoremstyle{remark}
\newcommand{\convk}{\texttt{convert}\xspace}
\newcommand{\problong}{Submodular Maximization over Partition matroid\xspace}
\newcommand{\prob}{SMP\xspace}
\newcommand{\argmin}{\text{argmin}}
\newcommand{\argmax}{\text{argmax}}
\newcommand{\chunkgreedy}{\texttt{Block-Greedy}\xspace}
\newcommand{\blockmono}{\texttt{Greedy-Subroutine-Mono}\xspace}
\newcommand{\blocknonmono}{\texttt{Greedy-Subroutine-Nonmono}\xspace}
\newcommand{\nonmonosc}{\texttt{nonmono-bi}\xspace}
\newcommand{\algksm}{\texttt{Alg-SM}\xspace}
\newcommand{\blockksc}{\texttt{greedy-knapsack-bi}\xspace}
\newcommand{\greedy}{\texttt{Greedy}\xspace}
\newcommand{\fairgreedy}{\texttt{Block-Fair-Bi}\xspace}
\newcommand{\algsub}{\texttt{Greedy-Subroutine}\xspace}
\newcommand{\mE}{\mathbb{E}}
\newcommand{\convr}{\texttt{convert-rand}\xspace}
\begin{document}

\maketitle

\begin{abstract}
In many submodular optimization applications, datasets are naturally partitioned into disjoint subsets. These scenarios give rise to submodular optimization problems with partition-based constraints, where the desired solution set should be in some sense balanced, fair, or resource-constrained across these partitions. While existing work on submodular cover largely overlooks this structure, we initiate a comprehensive study of the problem of Submodular Cover with Partition Constraints (SCP) and its key variants.  Our main contributions are the development and analysis of scalable bicriteria approximation algorithms for these NP-hard optimization problems for both monotone and nonmonotone objectives. Notably, the algorithms proposed for the monotone case achieve optimal approximation guarantees while significantly reducing query complexity compared to existing methods.
 Finally, empirical evaluations on
real-world and synthetic datasets further validate the efficiency and effectiveness
of the proposed algorithms.
\end{abstract}

\section{Introduction}
\label{sec:intro}

Submodular optimization algorithms have emerged as a cornerstone of modern machine learning, driving advancements across a range of impactful applications. From curating high-quality pretraining and fine-tuning datasets for large language models \citep{ji2024submodular, kumari2024end, agarwal2024delift} to powering diversified online recommendation systems \citep{hiranandani2020cascading, chen2024linear},  multi-agent optimization in robotics \citep{zhou2022risk, xu2024performance}, and enabling precise image attribution in computer vision \citep{chen2024less}.
Submodular functions informally satisfy a diminishing returns property that is exhibited by many objective functions for fundamental optimization problems in machine learning. Formally, let $f:2^U\to\mathbb{R}$ be defined over subsets of a ground set $U$ of size $n$. Then the function $f$ is \textit{submodular} if for all $A\subseteq B\subseteq U$ and $x\notin B$, $f(A\cup\{x\})-f(A) \geq f(B\cup\{x\})-f(B)$. Further, $f$ is \textit{monotone} if $f(Y)\ge f(X)$ for every $X\subseteq Y\subseteq U$.

The submodular cover (SC) problem is an important optimization problem with a variety of applications \citep{iyer2013submodular,chen2024bicriteria,crawford2019submodular,mirzasoleiman2015distributed}. 
In the classical form, the goal of submodular cover is to find a subset $S\subseteq U$ of minimum cost such that $f(S)\geq\tau$, where the cost function is typically cardinality or some additive cost. Existing results on SC take advantage of its relationship with submodular maximization \citep{iyer2013submodular,chen2024bicriteria}, which is to find $\arg\max\{f(S): c(S)\leq\kappa\}$. For example, \citep{chen2024bicriteria, iyer2013submodular} proposed converting algorithms that could convert any bicriteria algorithm for submodular maximization to an algorithm for SC. In particular, an $(\alpha,\beta)$-bicriteria approximation algorithm for the SC problem returns a solution $X$ such that $|X|\leq\alpha |OPT|$ and $f(X)\geq\beta\tau$.

However, a significant limitation of these classical formulations is their inability to model critical applications where the ground set $U$ is partitioned into disjoint groups $U_1,\ldots, U_N$, and the objective is to find a subset that has a budget within each partition, or alternatively is balanced or fair across the partitions. We further illustrate the submodular cover with partition constraints setting with several applications. 
In video summarization \citep{mirzasoleiman2018streaming}, the elements of $U$ are frames that are each associated with one of $N$ consecutive regions of time in the video. A submodular function $f$ is formulated to measure how effectively a subset of frames $X$ summarizes the entire video $U$ \citep{tschiatschek2014learning}. The goal is to find a subset of frames that is a sufficiently good summary, i.e. $f(X)\geq\tau$, while limiting the proportion of frames from each time region in the solution, i.e., $c(X\cap U_i)\leq p_i v$ where $p_j\in[0,1]$ and $v$ is the budget on the cost. 
As a second example, consider influence maximization \citep{tschiatschek2014learning}, where the ground set of users $U$ may be divided into $N$ partitions, reflecting demographics information such as language. In order to choose a subset with balanced distribution across different partitions, we enforce the fairness constraint where $p_j|S|\leq|S\cap U_j|\leq q_j|S|$. Then the objective is to find a fair solution with minimum cardinality such that $f(S)\geq\tau$. 
Many further applications exist in the literature, including neural network pruning \citep{chen2024fair}, high-quality data selection for learning \citep{killamsetty2021glister}, and data summarization \citep{el2020fairness}.

Despite the importance and widespread applications of this problem, prior work remains limited. \cite{chen2024fair} studied a special case of the submodular cover with constraints defined on partitions, which is the fairness constraints. However, their proposed discrete method attains only a suboptimal approximation ratio, while the continuous approach incurs prohibitively high query complexity. In contrast, our approach achieves optimal bicriteria approximation ratios with significantly lower query complexity on the problem of SC with fairness constraint. 

In this work, we study several distinct submodular cover problems with constraints defined on partitions of the universe $U$, including but not limited to the submodular cover with fairness constraints. 
Our approach follows the general converting framework by developing converting algorithms that can convert submodular maximization algorithms into submodular cover algorithms. In particular, to construct solutions with objective values closer to the target threshold $\tau$, we propose bicriteria algorithms for submodular maximization with partition constraints. Notably, unlike traditional greedy algorithms that add one feasible element at a time based on marginal gain, our method incrementally selects blocks of elements in each round, where each block respects the cost distribution across different partitions to ensure that elements are selected proportionally to the budget cost.
 This block-greedy strategy is particularly effective in the submodular cover setting, where achieving values close to the threshold $\tau$ may require selecting sets that exceed the feasibility limits of standard submodular maximization algorithms.
In particular, our contributions are summarized as follows.

\begin{enumerate}
\item In Section \ref{sec:nonmono_sc}, we study the Submodular Cover with Partition Constraint (SCP) problem of $\arg\min_{S\subseteq U}\{v: f(S)\geq\tau, |S\cap U_j|\leq p_jv, \forall j\in[N]\}$ in the case where $f$ is nonmonotone. We first propose a general converting algorithm to convert any randomized algorithms for the dual problem of Submodular Maximization with Partition
 constraint (SMP), into an algorithm for SCP. 
By proposing a bicriteria algorithm for SMP, we can obtain an algorithm for nonmonotone SCP with a bicriteria approximation ratio of $(O(\frac{(1+\alpha)}{\epsilon}), 1/e-\epsilon)$.

    \item Section \ref{sec:knapsack} addresses the problem of Monotone Submodular Cover with Knapsack Partition Constraints
(SCKP), which is to find $\arg\min_{S\subseteq U}\{v: f(S)\geq\tau, c(S\cap U_j)\leq p_jv, \forall j\in[N]\}$. 
    We first develop an algorithm for the dual optimization problem of Submodular Maximization under Knapsack Partition Constraints
(SMKP), which adopts the block-greedy structure. By utilizing a converting procedure, we achieve a $(\frac{(1+\alpha)\ln1/\epsilon}{\ln2},1-\epsilon)$ bicriteria-approximation ratio for SCKP.

\item Section \ref{sec:fair} considers the monotone Submodular Cover problem with Fairness Constraint (SCF), which was recently introduced by \cite{chen2024fair}. The proposed algorithm achieves the nearly optimal approximation ratio of $(\mathcal{O}(\ln(1/\epsilon)),1-\epsilon)$. This matches the approximation ratio for the algorithm of \citet{chen2024fair}, but their method is continuous and requires $\mathcal{O}(\frac{n^2(1+\alpha)\log^2(\frac{n}{\varepsilon})\log n}{\varepsilon^4\alpha})$ queries of $f$ while our method only requires a query complexity of $\mathcal{O}(\frac{n\log(n)\kappa\ln(1/\epsilon)}{\epsilon})$.
\end{enumerate}
Finally, we conduct an experimental evaluation of our algorithms for nonmonotone SCP, monotone SCKP, and monotone SCF. Our results demonstrate that our proposed algorithm for nonmonotone SCP achieves a higher function value compared to the baseline algorithms, and SCKP achieves an improvement in the budget of the cost. Additionally, our algorithm for SCF outperforms the other algorithms proposed in \cite{chen2024fair} in terms of the solution set size and fairness difference.

 \subsection{Related Work}
 

 
 In the context of submodular maximization, matroid constraints represent a fundamental and well-studied class of feasibility constraints, with partition constraints serving as a class of particularly prominent special case with widespread applications \citep{el2020fairness,chen2024fair}.
 Therefore, algorithms for submodular maximization with a general matroid constraint, which has been extensively studied, can be employed \citep{nemhauser1978analysis,fisher1978analysis,calinescu2011maximizing,badanidiyuru2014fast,chekuri2019submodular,buchbinder2024deterministic}.
The best known approximation ratio for monotone submodular maximization with a matroid constraint is $1-1/e$ \citep{calinescu2011maximizing,buchbinder2018deterministic}. For the more general maximization of a non-monotone submodular function with a matroid constraint, the best-known hardness result is 0.478~\citep{gharan2011submodular,qi2024maximizing}. The algorithm with the current best approximation ratio is a continuous one that achieves $0.401$ \citep{buchbinder2024constrained}.
The combinatorial algorithm with the best approximation ratio is that of \citet{chen2024discretely}, which achieves a $0.305-\epsilon$ approximation guarantee in $O\left(k^5\log(k)n/\epsilon\right)$ queries of $f$.
Partition type of constraints are widely found in submodular optimization applications, but despite this there has been little attention towards algorithms specifically designed for them. An exception is that fairness constraints have recently been of interest \citep{el2020fairness,el2023fairness,chen2024fair}.
\citeauthor{el2020fairness} showed that maximization of a monotone submodular function under a fairness constraint can be converted into monotone SM under a matroid constraint.

The Submodular Maximization under Knapsack Partition Constraints (SMKP) problem, which is the dual problem of the SCKP, is defined as $\arg\max \{f(S): \sum_{s \in X \cap U_j} c(s) \leq p_j v, \forall j\in[N]\}$. This formulation can be regarded as the generalization of submodular maximization subject to a single knapsack constraint \citep{amanatidis2020fast,cui2025linear} in the case where there is only one partition in the universe. Recent work has also studied submodular maximization under both knapsack and partition constraints \citep{cui2024fairness, li2025fair}. Specifically, these papers address fairness-aware submodular maximization subject to: (i) a knapsack constraint on the total cost of the selected subset across all partitions, and (ii) cardinality constraints within each partition to ensure fair representation. In contrast, SMKP enforces per-partition knapsack constraints (as opposed to a single global knapsack constraint), without imposing cardinality requirements.

 In the classical submodular cover problem with integral-valued objective functions, the standard greedy algorithm---which repeatedly selects the element with the highest marginal gain until the objective reaches a threshold $\tau$---achieves an approximation ratio of $O(\log \max_{e \in U} f(e))$~\citep{Wolsey82}. For real-valued submodular functions, a common modification is to stop once the function value reaches $(1 - \epsilon)\tau$, yielding algorithms with a $(\ln(1/\epsilon),\, 1 - \varepsilon)$-bicriteria approximation ratio~\citep{krause2008robust,chen2024bicriteria}. For the Fair Submodular Cover (FSC) \citep{chen2024fair} problem, the discrete algorithm of \citeauthor{chen2024fair} achieves a bicriteria approximation ratio of $(\mathcal{O}(1/{\epsilon}), 1-\epsilon)$ while our algorithm achieves an improved approximation ratio of $(\mathcal{O}(\ln1/{\epsilon}), 1-\epsilon)$, which matches the approximation ratio of the continuous method proposed in \citeauthor{chen2024fair} but requires much fewer queries. 

\section{Algorithms and Theoretical Analyses}
\label{sec:main_results}
We now present the main results of our paper\footnote{We summarized our results in a table in the appendix. Please refer to Table \ref{tab:submodular-comparison}. }. We first address the general case of not necessarily monotone, submodular cover with a partition constraint in Section \ref{sec:nonmono_sc}. Next, we consider monotone submodular cover with a partition constraint, and our results apply even for the more general knapsack cost, in Section \ref{sec:knapsack}. Finally, we consider the more restricted, but with many interesting applications, setting of fair submodular cover in Section \ref{sec:fair}. 

Central to all of our results is the novel algorithmic framework proposed for submodular maximization problems that achieves a bicriteria approximation ratio by running greedy in blocks, where each block is a feasible subset. This block-wise greedy strategy departs from prior approaches that focus on the matroid structure of partition constraints. In contrast, our method exploits the intrinsic relationship between partition constraints and cardinality constraints, leading to improved query complexity and approximation ratio.  Throughout the paper, we define the marginal gain of adding an element $u\in U$ to a set $S\subseteq U$ as $\Delta f(S,u)=f(S\cup u)-f(S)$. Besides, $OPT$ is used to refer to the optimal solution to the instance of submodular optimization that should be clear from the context. 


\subsection{Non-Monotone Submodular Cover with Partition Constraints}
\label{sec:nonmono_sc}

In this section, we consider the general nonmonotone
Submodular Cover with Partition Constraint (SCP) problem, which is to find a set $S\subseteq U$ that solves
\begin{align*}
    \min_{S \subseteq U} \quad & v \\
    \text{s.t.} \quad & |S \cap U_j| \leq p_j v, \quad \forall j \in [N], \\
    & f(S) \geq \tau.
\end{align*}

The $v$ represents a budget to allocate over the partitioned sets, which our goal is to minimize while ensuring $f$ is sufficiently high. The $p_j$ represents the desired portion of the budget to allocate to the $j$-th partition. Without loss of generality, we assume $\sum_{j\in[N]}p_j=1$. If there is only one single partition in the universe, i.e., $N=1$, then the optimal value of $v$ is $|S|$, and we recover the classic submodular cover problem. To further illustrate the problem definition, consider the example application of video summarization described in Section \ref{sec:intro}, where frames are grouped by scene or content type and costs are uniform. Then this would mean the objective is to find a solution with a minimum total budget, while maintaining a balanced allocation across different partitions and ensuring that the summary achieves sufficiently high quality.


In our first result, taking advantage of the relationship between submodular cover and submodular maximization, we introduce a converting algorithm, \convr, that can convert any randomized bicriteria algorithm for nonmonotone Submodular Maximization with Partition matroid constraint (SMP) into a bicriteria algorithm for nonmonotone SCP. In particular, the SMP with an input budget $v$ is defined as $\max\{f(S):|S\cap U_i|\leq p_iv, \forall i\in[N]\}$. We formally define the notion of bicriteria approximation for both SMP and SCP in the following.
\begin{definition}
    An $(\alpha,\beta)$-approximation algorithm for SMP with input budget $v$ returns a solution $X$ that satisfies 
    \begin{align*}
        &f(X)\geq\alpha f(OPT),\\
        & |X\cap U_j|\leq \beta p_jv,
    \end{align*}
    where $OPT$ is the optimal solution of SMP, i.e., $OPT:=\arg\max\{f(S):|S\cap U_i|\leq p_iv, \forall i\in[N]\}$.
\end{definition}
\begin{definition}
    An $(\alpha,\beta)$-approximation algorithm for SCP returns a solution $X$ with objective value $v_X$ that satisfies 
    \begin{align*}
        &v_X\leq\alpha v_{OPT},\\
        & |X\cap U_j|\leq p_jv_X\\
        &f(X)\geq\beta\tau.
    \end{align*}
    Here $OPT$ is the optimal solution of SCP, i.e., $OPT:=\arg\min\{v: |S\cap U_j|\leq p_jv, \forall j\in[N], f(S)\geq\tau\}$. $v_{OPT}$ is the optimal value of SCP.
\end{definition}

Notice that in \cite{chen2024bicriteria}, they proposed an algorithm to convert randomized submodular maximization algorithms into submodular cover algorithms in the case of monotone submodular objectives. In fact, a key challenge in this setting arises from the need to ensure a high-probability guarantee on the function value $f$ by repeatedly invoking the submodular maximization subroutine and applying concentration inequalities. To reduce the number of oracle queries, the algorithm in \cite{chen2024bicriteria} applies Markov's inequality and operates on a truncated objective function $f_\tau:=\min\{\tau, f\}$ throughout the converting algorithm. However, the assumption that $f_\tau$ is submodular only holds when $f$ is monotone. In contrast, our analysis extends to the non-monotone setting by avoiding the truncated objective and instead employing a more delicate analysis when applying the concentration inequality. Specifically, we analyze the deviation of the random variable $\beta f(OPT_g)-f(S_i)$ where $S_i$ is the output solution set of the randomized subroutine algorithm for SMP.


We now present \convr and its theoretical guarantees. The pseudocode for \convr is described in Algorithm \ref{alg:conver} in Section \ref{appdx:nonmono_sc} in the supplementary material. \convr runs by iteratively guessing the value of the optimal budget $v$. For each guess, \convr runs the corresponding dual submodular maximization algorithms over multiple independent trials. The theoretical guarantee of \convr is provided in Theorem \ref{thm:convert}. We defer the analysis to Section \ref{appdx:nonmono_sc} in the supplementary material. 

\begin{theorem}
    \label{thm:convert}
    Any randomized $(\gamma,\beta)$-bicriteria approximation algorithm for nonmonotone SMP that runs in time $\mathcal{T}(n)$ where $\gamma$ holds only in expectation can be converted into an approximation algorithm for nonmonotone SCP that with probability at least $1-\delta$ is a $((1+\alpha)\beta,\gamma-\epsilon)$-bicriteria approximation algorithm that runs in time $O(\log_{1+\alpha}(|OPT|)\ln(1/\delta)\mathcal{T}(n)/\ln(\frac{\beta-\gamma+\epsilon}{\beta-\gamma}))$.
\end{theorem}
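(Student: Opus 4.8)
The plan is to analyze the converting algorithm \convr, which iteratively guesses the optimal budget $v$ along a geometric grid $\{(1+\alpha)^0, (1+\alpha)^1, \dots\}$ and, for each guess, runs the randomized SMP subroutine $\Theta(\ln(1/\delta)/\ln\frac{\beta-\gamma+\epsilon}{\beta-\gamma})$ independent times, keeping the best solution. First I would fix the smallest grid point $\hat v$ with $\hat v \ge v_{OPT}$; since the grid is geometric with ratio $1+\alpha$, we have $\hat v \le (1+\alpha)v_{OPT}$, so any solution $X$ returned for this guess automatically satisfies the cardinality requirement $|X\cap U_j| \le \beta p_j \hat v \le (1+\alpha)\beta p_j v_{OPT}$, which will give the first bicriteria coordinate $(1+\alpha)\beta$. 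It also holds that $f(OPT_{\hat v}) \ge \tau$, because the true optimal cover set is feasible for the SMP instance with budget $\hat v \ge v_{OPT}$; this is the key link between the cover instance and the maximization instance.

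Next I would handle the function-value guarantee, which is the crux. A single run of the $(\gamma,\beta)$-subroutine only gives $\mathbb{E}[f(S)] \ge \gamma f(OPT_{\hat v}) \ge \gamma\tau$ in expectation, so I need a concentration/amplification argument over the independent trials. Following the remark in the text, rather than truncating $f$ (which fails for nonmonotone $f$), I would study the nonnegative random variable $Y_i := \beta f(OPT_{\hat v}) - f(S_i)$. Since $f(S_i) \le \beta$-feasible-maximum... more precisely since $f(S_i) \le f(OPT_{\hat v})$ is not available in the nonmonotone case, I instead use that $f(S_i)$ is bounded by $\max_S f(S) \le \beta$-scaled quantity — the careful point is to argue $Y_i \ge 0$ with probability one, using $\beta \ge 1$ and monotone-like bounds, or by comparing against $f(OPT_{\hat v})$ directly via submodularity on the ground set. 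Given $\mathbb{E}[Y_i] = \beta f(OPT_{\hat v}) - \mathbb{E}[f(S_i)] \le (\beta-\gamma)f(OPT_{\hat v})$, Markov's inequality yields $\Pr[Y_i > (\beta-\gamma+\epsilon)f(OPT_{\hat v})] < \frac{\beta-\gamma}{\beta-\gamma+\epsilon}$, i.e. a single trial fails to certify $f(S_i) \ge (\gamma-\epsilon)f(OPT_{\hat v}) \ge (\gamma-\epsilon)\tau$ with probability less than $p := \frac{\beta-\gamma}{\beta-\gamma+\epsilon}$. Taking $m = \lceil \ln(1/\delta)/\ln(1/p)\rceil = O(\ln(1/\delta)/\ln\frac{\beta-\gamma+\epsilon}{\beta-\gamma})$ independent trials, the probability that all $m$ fail is at most $p^m \le \delta$, so with probability $\ge 1-\delta$ at least one trial at guess $\hat v$ produces a solution with $f \ge (\gamma-\epsilon)\tau$ and $|X\cap U_j| \le (1+\alpha)\beta p_j v_{OPT}$. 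Since \convr returns a solution at least this good, the bicriteria guarantee $((1+\alpha)\beta, \gamma-\epsilon)$ follows.

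Finally, for the running time: there are $O(\log_{1+\alpha}|OPT|)$ grid points before reaching $\hat v$ (one would cap the guessing at $n$, and $v_{OPT} \le |OPT| \le n$), each contributing $m = O(\ln(1/\delta)/\ln\frac{\beta-\gamma+\epsilon}{\beta-\gamma})$ calls to the subroutine at cost $\mathcal{T}(n)$ each, giving the stated bound $O\!\left(\log_{1+\alpha}(|OPT|)\ln(1/\delta)\mathcal{T}(n)/\ln\frac{\beta-\gamma+\epsilon}{\beta-\gamma}\right)$. I expect the main obstacle to be making the concentration step fully rigorous in the nonmonotone regime: establishing the correct almost-sure upper bound on $f(S_i)$ (so that $Y_i$ is a bona fide nonnegative random variable to which Markov applies) without invoking submodularity of the truncation $f_\tau$, and tracking the $\beta$ factor carefully so the threshold test $f(S_i) \ge (\gamma-\epsilon)f(OPT_{\hat v})$ is exactly the complement of the Markov tail event. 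A secondary subtlety is that \convr cannot verify $f(S_i) \ge (\gamma-\epsilon)\tau$ directly against $\tau$ at runtime without knowing $f(OPT_{\hat v})$, so I would need to argue it suffices to keep the solution maximizing $f$ among all feasible ones seen, and that the high-probability event above guarantees this retained solution meets the threshold.
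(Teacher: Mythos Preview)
Your overall architecture is exactly the paper's: geometric guessing of the budget, locating the first guess $\hat v$ with $v_{OPT}\le \hat v\le (1+\alpha)v_{OPT}$, observing $f(OPT_{\hat v})\ge\tau$ because the SCP optimum is SMP-feasible at budget $\hat v$, defining $Y_i=\beta f(OPT_{\hat v})-f(S_i)$, applying Markov to $Y_i$, and amplifying over $m=O\bigl(\ln(1/\delta)/\ln\frac{\beta-\gamma+\epsilon}{\beta-\gamma}\bigr)$ independent trials.

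There is, however, a real gap at the step you yourself flag as ``the main obstacle'': you never establish that $Y_i\ge 0$ almost surely, and the placeholders you offer (``$\beta\ge 1$ and monotone-like bounds'', ``comparing against $f(OPT_{\hat v})$ via submodularity on the ground set'') are not arguments. The paper's resolution is concrete and uses the partition structure of the constraint, not any monotonicity: since the subroutine's output always satisfies $|S_i\cap U_j|\le \beta p_j\hat v$ for every $j$, we have $S_i\in\mathcal{M}_\beta$, the $\beta$-extension of the partition matroid $\mathcal{M}$. Any $A\in\mathcal{M}_\beta$ can be written as a disjoint union $A=A_1\cup\cdots\cup A_\beta$ with each $A_\ell\in\mathcal{M}$; then subadditivity of a nonnegative submodular function (no monotonicity needed) gives $f(A)\le\sum_\ell f(A_\ell)\le \beta\,f(OPT_{\hat v})$. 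Hence $f(S_i)\le\beta f(OPT_{\hat v})$ deterministically, so $Y_i\ge 0$ and Markov applies. This decomposition argument is the missing idea in your write-up.

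Your ``secondary subtlety'' is not actually an issue. The threshold $\tau$ is an input to the SCP instance, and \convr's while-loop test is literally $f(S_i)<(\gamma-\epsilon)\tau$; the algorithm never needs to know $f(OPT_{\hat v})$. The chain $f(S_i)\ge(\gamma-\epsilon)f(OPT_{\hat v})\ge(\gamma-\epsilon)\tau$ appears only in the analysis, and the algorithm's stopping criterion is exactly the rightmost quantity.
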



Since the best-known result of algorithms for nonmonotone submodular maximization over the partition matroid is the one proposed in \cite{chen2024discretely}, which achieves an approximation ratio of $0.305-\epsilon$. Therefore by applying Theorem \ref{thm:convert} to the randomized algorithm in \cite{chen2024discretely}, we have a $(1+\alpha,0.305-\epsilon)$-bicriteria approximation algorithm for SCP with high probability in $O(n|OPT|\log_{1+\alpha}(|OPT|)\ln(1/\delta)/\ln({1+\epsilon}))$ queries of $f$. 
However, a factor of $0.305-\epsilon$ of $\tau$ is not very close to a feasible solution, and a natural question arises whether an algorithm that achieves a better feasibility guarantee exists. An important relevant result \citep{crawford2023scalable} is that it has been shown that a feasibility factor better than $1/2$ is impossible for the nonmonotone submodular cover problem. Since this problem is a special case of SCP, this result holds for SCP as well. Still, this leaves us with uncertainty of whether there exist scalable algorithms with approximation ratios in the gap between $0.305$ to $0.5$. 

In the rest of this section, we present a scalable algorithm, \nonmonosc, that can output a solution set arbitrarily close to $\tau/e$. The pseudocode of \nonmonosc is provided in Algorithm \ref{alg:nonmono_sc}. \nonmonosc uses the idea of gradually developing a solution in blocks greedy algorithm, and achieves the bicriteria-approximation ratio as below. 


 \begin{algorithm}[t] \caption{\nonmonosc}\label{alg:nonmono_sc}
    \begin{algorithmic}[1]
        \State \textbf{Input: }$\epsilon$, partition constraint parameters, total budget $v$
 \State \textbf{Output: } $S\subseteq U$
        \For {$i=1$ \textbf{to} $\frac{2}{\epsilon} $}
        \For{$j=1$ \textbf{to} $N $}
        \For{$l=1$ \textbf{to} $p_jv $}
        \State Let $M\subseteq U_j/S$ be a set of size $\frac{2p_jv}{\epsilon}$ maximizing $\sum_{x\in M}\Delta f(S,x)$.\label{line:greedy_nonmono_sc}
        \State $u\gets$ uniformly sample an element from $M$
        \State $S\gets S\cup \{u\}$
        \EndFor
        \EndFor
        \EndFor
    \end{algorithmic}
\end{algorithm}

\begin{theorem}\label{thm:nonmonobi}
    Suppose that \nonmonosc is run for an instance of nonmonotone SMP with budget $v$, 
   then \nonmonosc 
   outputs a solution $S$ that satisfies a bicriteria approximation ratio of 
   $(1/e-\epsilon,\frac{2}{\epsilon})$ in expectation in at most $O(\frac{nv}{\epsilon})$ number of queries.
\end{theorem}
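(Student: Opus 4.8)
The plan is to analyze \nonmonosc through the standard ``greedy with sampling'' lens used for $1-1/e$-type bounds on non-monotone objectives, but adapted to the block structure. The algorithm runs $T = 2/\epsilon$ outer rounds; in each round it adds, across all partitions, exactly $\sum_j p_j v = v$ elements, so the final set $S$ has $|S| = Tv = 2v/\epsilon$ elements, distributed with $|S \cap U_j| = T p_j v = \frac{2 p_j v}{\epsilon}$. This immediately gives the feasibility/size factor $\beta = 2/\epsilon$ in the bicriteria guarantee (the output violates the partition budget $p_j v$ by exactly a factor $2/\epsilon$). The query count is likewise immediate: each of the $\sum_j p_j v = v$ inner steps per round needs $O(n)$ queries to find the top-$\frac{2 p_j v}{\epsilon}$ marginal-gain set within $U_j$, and there are $T = 2/\epsilon$ rounds, for $O(nv/\epsilon)$ total. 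So the real content is the $1/e - \epsilon$ value guarantee.

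For the value bound I would track $\mathbb{E}[f(S_t)]$ where $S_t$ is the set after $t$ outer rounds, and show it grows geometrically toward $f(OPT)$. Fix a round and condition on the current set $S$. For each inner step in partition $j$, the block $M \subseteq U_j \setminus S$ of size $\frac{2 p_j v}{\epsilon}$ maximizes $\sum_{x \in M} \Delta f(S,x)$; in particular, since $OPT \cap U_j$ is a valid candidate of size at most $p_j v \le \frac{2 p_j v}{\epsilon}$, we get $\sum_{x \in M}\Delta f(S,x) \ge \sum_{x \in OPT \cap U_j}\Delta f(S,x)$. Summing over $j$ and using submodularity to lower-bound $\sum_{x \in OPT}\Delta f(S,x) \ge f(S \cup OPT) - f(S)$, and then the non-monotone trick $f(S \cup OPT) \ge (1 - \tfrac{|S|}{\text{(total size)}}) f(OPT)$ — more precisely, a bound of the form $\mathbb{E}[f(S \cup OPT)] \ge (1 - p)^{|S|}$-style decay coming from the probability an \emph{OPT}-element survives the random sampling — we obtain that the expected marginal gain of a uniformly random element of $M$ is at least $\frac{\epsilon}{2 p_j v}\big(f(S \cup OPT) - f(S)\big)$ for the step in partition $j$. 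Chaining the $v$ inner steps of a round (each contributing its share, with the $\frac{1}{2 p_j v}$ weights summing appropriately to $\frac{1}{2v} \cdot v = \tfrac12$... actually to $\tfrac12$ over the round) yields a per-round multiplicative improvement $\mathbb{E}[f(OPT) - f(S_{t+1})] \le (1 - \tfrac{\epsilon}{2})\,\mathbb{E}[f(OPT) - f(S_t)]$ together with the non-monotone loss term; after $T = 2/\epsilon$ rounds this gives $\mathbb{E}[f(S_T)] \ge (1 - e^{-1} - O(\epsilon)) f(OPT)$, i.e. the claimed $1/e - \epsilon$.

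The main obstacle is handling the non-monotonicity cleanly inside the block/sampling argument: one must control $\mathbb{E}[f(S \cup OPT)]$ as $S$ grows, and the random sampling (one uniform element out of a block of size $\frac{2 p_j v}{\epsilon}$) is exactly what keeps the ``collision'' probability with any fixed \emph{OPT}-element low, so that each \emph{OPT}-element is retained with probability roughly $(1 - \frac{\epsilon}{2 p_j v})$ per relevant step; tracking these survival probabilities across all $Tv = 2v/\epsilon$ sampled elements and showing the product stays above $\approx 1/e$ is the delicate bookkeeping. A secondary subtlety is that the block maximizing $\sum_{x\in M}\Delta f(S,x)$ need not contain $OPT \cap U_j$, but the inequality $\sum_{x\in M}\Delta f(S,x) \ge \sum_{x \in OPT\cap U_j}\Delta f(S,x)$ is all that is needed and follows from optimality of $M$ plus $|OPT \cap U_j| \le p_j v \le |M|$. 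I would also need to be slightly careful that within a round the set $S$ is changing between the inner steps, but since $f$ is submodular each later block's gain only shrinks, so the per-round bound can be stated with respect to the set at the start of the round at the cost of the constants already absorbed in the $O(\epsilon)$ slack.
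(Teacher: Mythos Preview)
Your overall plan is the same as the paper's: establish the $2/\epsilon$ budget blow-up and $O(nv/\epsilon)$ query count directly, then analyze the value by combining (i) the optimality of the top-$|M|$ block to compare against $OPT\cap U_j$, (ii) submodularity to telescope across inner steps and partitions, and (iii) the Buchbinder--Feldman survival-probability lemma to control $\mathbb{E}[f(S\cup OPT)]$. That skeleton is right.

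However, the value argument as written has a genuine gap. The per-round recursion you state, $\mathbb{E}[f(OPT)-f(S_{t+1})]\le (1-\tfrac{\epsilon}{2})\,\mathbb{E}[f(OPT)-f(S_t)]$, is the \emph{monotone} recursion and, after $2/\epsilon$ rounds, gives $1-1/e$, not $1/e$. You then write ``i.e.\ the claimed $1/e-\epsilon$,'' which conflates two different constants. The non-monotone decay is not an $O(\epsilon)$ correction to the monotone bound; it changes the entire recursion and halves the constant. Concretely, with $\phi=2/\epsilon$ the paper obtains
\[
\mathbb{E}[f(S_i)-f(S_{i-1})]\ \ge\ \frac{1}{\phi}\,\mathbb{E}\bigl[f(S_i\cup OPT)-f(S_i)\bigr],
\qquad
\mathbb{E}[f(S_i\cup OPT)]\ \ge\ \Bigl(1-\tfrac{1}{\phi}\Bigr)^{i} f(OPT),
\]
so that $\mathbb{E}[f(S_i)]\ge \frac{\phi}{\phi+1}\mathbb{E}[f(S_{i-1})]+\frac{1}{\phi+1}(1-\tfrac{1}{\phi})^{i}f(OPT)$, which unrolls to $\mathbb{E}[f(S_\phi)]\ge \frac{\phi}{\phi+1}(1-\tfrac{1}{\phi})^{\phi}f(OPT)\ge \tfrac{1}{e}(1-\epsilon)f(OPT)$. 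Note the $S_i$ (end of round), not $S_{i-1}$, on the right-hand side of the gain inequality: your remark that one can ``state the per-round bound with respect to the set at the start of the round'' has submodularity going the wrong way, since for $x\in OPT$ one has $\Delta f(S_{i-1},x)\ge \Delta f(S,x)$, not the reverse, so the usable lower bound is against the larger set $S_i$.

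In short: the feasibility and query-count paragraphs are fine and match the paper; the value paragraph needs to be rewritten so that the survival-probability decay $(1-1/\phi)^i$ enters the recursion explicitly and the induction yields $1/e-\epsilon$, not $1-1/e-\epsilon$.
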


The analysis and the proof are deferred to Section \ref{appdx:nonmono_sc} in the supplementary material.  From the results, we can get that using \nonmonosc as a subroutine in \convr yields a $(\frac{2(1+\alpha)}{\epsilon}, 1/e-2\epsilon)$-bicriteria approximation algorithm for nonmonotone SCP. The algorithm runs in $O(\frac{n|OPT|\log_{1+\alpha}(|OPT|)\ln(1/\delta)}{\epsilon\ln({1+\epsilon^2})})$ number of queries.

\subsection{Monotone Submodular Cover with Knapsack Partition Constraints}
\label{sec:knapsack}

We now consider the problem of Monotone Submodular Cover with Knapsack Partition Constraints (SCKP). 
SCKP models the setting where we want to balance the cost across different partitions, and the costs of different elements are nonuniform for different elements in the ground set. We illustrate an example of SCKP. Consider a neural network training task in deep learning, we want to select pretraining data points from $N$ different predefined groups, and the goal is to select a subset of data points with minimal budget of the cost, where the cost of each data point may reflect computational, labeling, or storage expenses. Simultaneously, we would want a solution with balanced cost allocation across predefined groups in the dataset while ensuring that the submodular utility function (e.g., coverage of diverse features) meets a specified threshold $\tau$.

More formally, the definition of SCKP is as follows. Define a cost function $c:U\to\mathbb{R}_{\geq 0}$, and let $c(S) = \sum_{x \in S} c(x)$ for any subset $S \subseteq U$. Then SCKP is:
\begin{align}
\label{eqn:opt_prob_knapsack}
    &\min v\nonumber\\
    &\text{s.t. } f(S)\geq\tau\nonumber\\
    &~~c(S\cap U_j)\leq p_jv, \qquad\forall j\in[N].
\end{align}
  In the definition of SCKP, $v$ represents the budget for the total cost. More specifically, $v$ is the upper bound on the total cost. The second constraint ensures that the cost of the solution set $S$ within each partition $U_j$ does not exceed a specified fraction, $p_j$, of $v$. Without loss of generality, we assume that $\sum_{j\in[N]}p_j=1$.

This formulation naturally arises in various real-world applications, including influence maximization in social network analysis, where activating different nodes incurs varying costs, pretraining data selection for deep learning where different data points might require different computational or memory costs, and task allocation in multi-agent systems. Please see the Appendix \ref{appdx:knapsack_application} for a detailed discussion on the motivating examples. In the following part, we discuss the pretraining data selection as a motivating example of the SCKP problem.


To address SCKP, we first propose an algorithm for the dual problem of Submodular Maximization with Knapsack Partition Constraint (SMKP), which is defined as $\arg\max \{f(S): \sum_{s \in X \cap U_j} c(s) \leq p_j v, \forall j\in[N]\}$. Notice that when the cost is uniform, i.e., $c(s)=c,\forall s\in U$. SMKP is a monotone submodular maximization problem with a partition matroid constraint. Therefore, we can apply any submodular maximization algorithm with a matroid constraint. However, the output of standard algorithms can't achieve an objective value arbitrarily close to the optimal. Therefore, we propose the \blockksc algorithm, which proceeds in $\phi:=\frac{\ln\frac{1}{\epsilon}}{\ln 2}$ blocks. The pseudocode of \blockksc is described in Algorithm \ref{procd:ksm}. Within each block from Line \ref{line:knapsack_block_starts} to Line \ref{line:knapsack_block_ends}, the algorithm visits each partition $U_j$ in the ground set $U$ and adds elements greedily with the highest density of marginal gain. Below we present the theoretical guarantee of \blockksc for SMKP.

 \begin{restatable}{theorem}{SMKthm}
 \label{thm:SM_knapsack}
     Suppose that \blockksc described in Algorithm \ref{procd:ksm} is run for an instance of SMKP, then \blockksc outputs a solution set that satisfies 
     \begin{align*}
         &f(S)\geq (1-\epsilon)f(OPT)\\
         &c(S\cap U_j)\leq \frac{2\ln\frac{1}{\epsilon}}{\ln 2}p_jv,\qquad \forall j\in[N], 
     \end{align*}
     where $OPT$ is the optimal solution of SMKP.
 \end{restatable}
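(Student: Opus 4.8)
The plan is to analyze \blockksc block by block, tracking the progress made toward $f(OPT)$ after each of the $\phi = \frac{\ln(1/\epsilon)}{\ln 2}$ blocks. Let $S_t$ denote the solution set after block $t$, with $S_0 = \emptyset$. The core claim I would establish is a per-block multiplicative progress bound of the form
\begin{align*}
f(OPT) - f(S_t) \leq \tfrac{1}{2}\bigl(f(OPT) - f(S_{t-1})\bigr),
\end{align*}
which iterated $\phi$ times yields $f(OPT) - f(S_\phi) \leq 2^{-\phi} f(OPT) = \epsilon f(OPT)$, hence $f(S_\phi) \geq (1-\epsilon) f(OPT)$, the first conclusion. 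The cost bound follows from bookkeeping: in each block, within each partition $U_j$, the greedy density rule stops once the cost would exceed roughly $2 p_j v$ (one needs the standard knapsack-greedy argument that adding elements of highest marginal-gain density until just barely overshooting $p_j v$ costs at most $2 p_j v$, accounting for the last element added); summing over $\phi$ blocks gives $c(S \cap U_j) \leq \phi \cdot 2 p_j v = \frac{2\ln(1/\epsilon)}{\ln 2} p_j v$.

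The heart of the argument is the per-block progress bound, and here is how I would prove it. Fix block $t$ and condition on $S_{t-1}$. The key structural observation is that within block $t$, the greedy procedure sweeps through all partitions $U_1,\dots,U_N$, and in each $U_j$ it is effectively running greedy-by-density under the local knapsack budget $p_j v$; by the standard analysis of the density-greedy algorithm for a single knapsack constraint (applied to the residual function $g(\cdot) := f(S_{t-1} \cup \cdot) - f(S_{t-1})$, which is monotone submodular), the elements added in $U_j$ during block $t$ recover a $(1 - 1/e)$-type fraction — or, with the "overshoot" version that allows cost up to $2p_j v$, a constant fraction — of the best feasible set in $U_j$ relative to the current solution. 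Summing the guarantees across the $N$ partitions and using submodularity to relate $\sum_j g(\text{block gain in }U_j)$ to $g(OPT)$ (since $OPT \cap U_j$ is a feasible candidate in partition $j$), I expect to obtain $f(S_t) - f(S_{t-1}) \geq \tfrac{1}{2}\bigl(f(OPT \cup S_{t-1}) - f(S_{t-1})\bigr) \geq \tfrac{1}{2}\bigl(f(OPT) - f(S_{t-1})\bigr)$, where the last inequality uses monotonicity of $f$. This rearranges exactly to the claimed contraction.

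The main obstacle I anticipate is making the "sum over partitions" step rigorous: one must show that the blockwise greedy gain across all $N$ partitions simultaneously competes with $g(OPT) = \sum_j$ (contribution of $OPT \cap U_j$), rather than only competing partition-by-partition with a loss factor of $N$. This requires carefully exploiting that the partitions $U_j$ are disjoint and that $OPT$ decomposes as $\bigsqcup_j (OPT \cap U_j)$ with each piece independently feasible, combined with submodularity to telescope the marginal gains as greedy moves from one partition to the next. A secondary technical point is the knapsack overshoot bound — ensuring that stopping the density-greedy just after exceeding $p_j v$ both (i) keeps the cost below $2 p_j v$ and (ii) guarantees enough gain; the clean way is to argue that either the greedy sequence in $U_j$ already achieves the target gain within budget $p_j v$, or the single element causing the overshoot, together with what preceded it, certifies the constant-factor gain. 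I would also need to confirm the factor $\frac12$ (as opposed to $1 - 1/e$) is what the block structure actually delivers, since the theorem's exponent $\phi = \ln(1/\epsilon)/\ln 2$ is tuned precisely to a per-block halving.
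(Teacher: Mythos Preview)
Your proposal is correct and follows essentially the same route as the paper: per-partition density-greedy with overshoot, telescoping across partitions via submodularity, per-block halving of the gap to $f(OPT)$, and the cost bound from $2p_jv$ per block summed over $\phi$ blocks. The one point you flagged as uncertain resolves cleanly: the overshoot (spending cost $\geq p_jv \geq c(OPT_j)$) actually yields a \emph{full} factor-$1$ per-partition bound $f(S_{i,j})-f(S_{i,j-1})\geq \Delta f(S_{i,j},OPT_j)$, and after summing over $j$ this reads $f(S_i)-f(S_{i-1})\geq f(OPT)-f(S_i)$, which rearranges to exactly your halving inequality.
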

 This theorem guarantees that the solution set returned by \blockksc achieves an objective value arbitrarily close to the optimal while the cost constraints are satisfied up to a violation factor of  $\frac{2\ln\frac{1}{\epsilon}}{\ln 2}$. In the special case of uniform costs, the theorem implies that \blockksc achieves a $(1 - \epsilon, O(\ln \tfrac{1}{\epsilon}))$ bicriteria approximation guarantee. This matches the best-known approximation ratio for monotone submodular maximization under a cardinality constraint. The improved performance stems from the blockwise structure of \blockksc, which effectively leverages the intrinsic similarity between the submodular maximization problem under the cardinality constraint and the partition matroid constraint. The proof and analysis of Theorem \ref{thm:SM_knapsack} are deferred to Appendix \ref{appdx:proof_SMKP}.

 \begin{algorithm}[t]

\caption{\blockksc}
\label{procd:ksm}
\begin{algorithmic}[1]
\State \textbf{Input: } $\epsilon$, an instance of SMKP
\State \textbf{Output: } solution set $S\subseteq U$
\For{$i=1$ \textbf{to} $\frac{\ln\frac{1}{\epsilon}}{\ln 2}$}
\For{$j=1$ \textbf{to} $N$} \label{line:knapsack_block_starts}
\State $A\gets\emptyset$, $ B_j\gets p_jv$. 
    \While{\textbf{true}}
\State $s\gets\argmax_{x\in U_j/S,c(x)\leq B_j}\frac{\Delta f(S\cup A,x)}{c(x)}$\label{line:ksm_greedy}
\State $A\gets A\cup \{s\}$
\If{$c(A)\geq B_j$}
\State $S\gets S\cup A$
\State \textbf{break}
\EndIf
    \EndWhile
    \EndFor\label{line:knapsack_block_ends}
    \EndFor
    \State \textbf{return} $S$ 
\end{algorithmic}
\end{algorithm}
 
 \subsubsection{Converting Theorem for SCKP}
 In order to convert any bicriteria algorithms for SMKP into bicriteria algorithms for SCKP, we propose and analyze the converting algorithm, denoted as \convk.
The pseudocode for \convk is in Algorithm \ref{alg:convk}, and its theoretical guarantee is outlined in Theorem \ref{thm:convk}, both in Section \ref{appdx:convk} of the appendix. By leveraging the result in the converting theorem, we can obtain the following corollary. 

\begin{corollary}
    By using the \blockksc as a subroutine for the converting algorithm \convk, we can obtain an algorithm for SCKP that returns a solution set $S$ and $v_S$ that satisfies 
    \begin{align*}
        &v_S\leq \frac{2(1+\alpha)\ln(1/\epsilon)}{\ln 2}v_{OPT}\\
        &f(S)\geq(1-\epsilon)\tau\\
        &c(S\cap U_j)\leq p_jv_S
    \end{align*}
    where $OPT$ and $v_{OPT}$ are the optimal solution set and the optimal value of the problem SCKP defined in (\ref{eqn:opt_prob_knapsack}) respectively. The total runtime of the algorithm is upper bounded by $\mathcal{O}(n^2\log_{1+\alpha}(\frac{c_{\max}n}{c_{\min}}))$. Here $c_{\max}$ and $c_{\min}$ are the maximum and minimum values of the cost of a single element respectively.
\end{corollary}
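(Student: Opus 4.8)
The plan is to instantiate the converting theorem (Theorem \ref{thm:convk}) with the subroutine \blockksc, whose SMKP guarantee is Theorem \ref{thm:SM_knapsack}, and then to do the running-time bookkeeping. Recall that \convk performs a geometric search over candidate values of the SCKP budget: it maintains a guess $v$, starting from a trivial lower bound (e.g.\ $v\leftarrow c_{\min}$, valid since any nonempty feasible solution forces $v\ge c_{\min}$), runs \blockksc on the SMKP instance with budget $v$, and multiplies $v$ by $(1+\alpha)$ whenever the returned set $S$ fails $f(S)\ge(1-\epsilon)\tau$; once this test passes it outputs $S$ together with $v_S:=\max_{j\in[N]} c(S\cap U_j)/p_j$.

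For correctness, observe that whenever $v\ge v_{OPT}$ the optimal SCKP solution $OPT$ is feasible for SMKP with budget $v$, since $c(OPT\cap U_j)\le p_j v_{OPT}\le p_j v$; hence the SMKP optimum at budget $v$ has value at least $f(OPT)\ge\tau$, and by Theorem \ref{thm:SM_knapsack} \blockksc returns $S$ with $f(S)\ge(1-\epsilon)\tau$. Thus the search terminates no later than the first grid point $v^\star\ge v_{OPT}$, which by the multiplicative step size satisfies $v^\star<(1+\alpha)v_{OPT}$ (the preceding grid point being $<v_{OPT}$). Whether the search stops at $v^\star$ or earlier at some $v\le v^\star$, Theorem \ref{thm:SM_knapsack} gives $c(S\cap U_j)\le \frac{2\ln(1/\epsilon)}{\ln 2}\,p_j v$ for all $j$, so the reported budget obeys $v_S=\max_j c(S\cap U_j)/p_j\le \frac{2\ln(1/\epsilon)}{\ln 2}\,v\le \frac{2(1+\alpha)\ln(1/\epsilon)}{\ln 2}\,v_{OPT}$. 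By the definition of $v_S$ we have $c(S\cap U_j)\le p_j v_S$, and the stopping test yields $f(S)\ge(1-\epsilon)\tau$; this gives the three claimed inequalities.

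For the running time, the number of search iterations is $O(\log_{1+\alpha}(v^\star/c_{\min}))$, and since a feasible solution exists at budget $\max_j c(U_j)/p_j\le n c_{\max}/\min_j p_j = O(nc_{\max})$ we get $v_{OPT}=O(nc_{\max})$, so the iteration count is $O(\log_{1+\alpha}(c_{\max}n/c_{\min}))$ after absorbing the constant factors $p_j^{-1}$. Each iteration is one call to \blockksc, and the key point is that across all $\phi=\frac{\ln(1/\epsilon)}{\ln 2}$ blocks and all $N$ partitions every element of $U$ enters $S$ at most once: the greedy step on Line \ref{line:ksm_greedy} always draws from $U_j\setminus S$ and $S$ only grows. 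Hence there are at most $n$ greedy insertions per call (plus $O(\phi N)$ lower-order overhead for partitions that become exhausted), each requiring at most $n$ evaluations to compare densities $\Delta f(S\cup A,x)/c(x)$, so each call costs $O(n^2)$ function queries and the total is $O(n^2\log_{1+\alpha}(c_{\max}n/c_{\min}))$.

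The main obstacle is not the approximation bound, which is a direct substitution into Theorem \ref{thm:convk}, but the accounting that yields the clean $O(n^2)$ per-call cost \emph{independent of} $\phi$: one must argue that although \blockksc restarts the greedy in every block, the monotone growth of $S$ caps the cumulative number of insertions — and hence of argmax evaluations, up to the lower-order $O(\phi N)$ term — at $n$, so the $\phi$ blocks do not multiply the query cost. A secondary point to verify carefully is the initialization and stopping rule of \convk, so that the terminating guess is provably within a $(1+\alpha)$ factor of $v_{OPT}$ even when the search exits before reaching $v^\star$.
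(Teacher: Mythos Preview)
Your proposal is correct and follows essentially the paper's approach: the corollary is obtained by instantiating Theorem~\ref{thm:convk} with the $(\gamma,\beta)=(1-\epsilon,\tfrac{2\ln(1/\epsilon)}{\ln 2})$ guarantee of Theorem~\ref{thm:SM_knapsack}, plus the running-time accounting. The only minor deviation is that you define $v_S:=\max_j c(S\cap U_j)/p_j$ whereas the paper sets $v_S:=\beta v_g$; both satisfy the stated inequalities, yours being the tighter choice, and your observation that the cumulative insertions in \blockksc are capped at $n$ (giving $O(n^2)$ per call independent of $\phi$) is exactly the point needed to match the claimed query bound.
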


 \subsection{Submodular Cover with Fairness Constraint}
\label{sec:fair}
In this section, we consider the monotone Submodular Cover problem with Fairness Constraint (SCF), recently proposed by \citet{chen2024fair} (also referred to as Fair Submodular Cover in \cite{chen2024fair}). SCF is defined as follows: Given a monotone and submodular function $f$, a threshold $\tau$, and bounds $p_c$ and $q_c$ on the proportion limits of the elements in each group, SCF aims to find 
    \begin{align*}
        &\argmin_{S\in U} |S|\nonumber\\
        s.t.\qquad& p_c|S|\leq|S\cap U_c|\leq q_c|S|,\qquad \forall c\in[N]\nonumber\\
        &f(S)\geq\tau.
    \end{align*}
    
 \citeauthor{chen2024fair} proposed a converting approach that takes bicriteria algorithms for the dual problem of Submodular Maximization with Fairness constraint (SMF) \citep{el2020fairness} and converts them into algorithms for SCF. Formally, SMF seeks to maximize $f(S)$ subject to the constraint that $S\in \mathcal{M}_{fair}$. $\mathcal{M}_{fair}$ represents the fairness matroid and is defined as
        $\mathcal{M}_{fair}=\{S \subseteq U :  |S \cap U_c| \leq u_c, \forall c \in [N], 
        \sum_{c\in [N]}\max\{|S\cap U_c|, l_c\}\leq k\}.$
   If we set $l_c=0$ for all $c\in[N]$ and we set $k=\sum_{c\in[N]}u_c$, then $\mathcal{M}_{fair}$ is equivalent to $\mathcal{M}_{fair}=\{S \subseteq U :  |S \cap U_c| \leq u_c, \forall c \in [N]\}.$ Therefore, SMF can be viewed as a generalized form of submodular maximization with a partition matroid constraint. 
    A bicriteria approximation guarantee for SMF is defined as follows.
    \begin{definition}
\label{def:bicri-SM}
     A discrete algorithm for SMF with an  $(\alpha,\beta)$-bicriteria approximation ratio returns a solution $X$ such that 
     \begin{align*}
        &f(X)\geq \alpha f(OPT),\\
        &|X\cap U_c|\leq \beta u_c\qquad\forall c\in[N], \\
        &\sum_{c\in [N]}\max\{|X\cap U_c|,\beta l_c\}\leq \beta k.
     \end{align*}

Here $OPT$ is the optimal solution of the problem SMF, i.e., $OPT=\arg\max_{S\in \mathcal{M}_{fair}}f(S)$. 
\end{definition}
 
  Therefore, in order to leverage the conversion approach of \citeauthor{chen2024fair}, we propose an algorithm for SMF called \fairgreedy that uses a greedy block formation technique. \fairgreedy is an improvement over the \texttt{greedy-fairness-bi} in Algorithm 5 in \cite{chen2024fair} . This leads to a bicriteria approximation guarantee for SCF of $(\frac{1+\ln\frac{1}{\epsilon}}{\ln 2}, 1-\epsilon)$, which is a significant improvement compared to the best-known results for discrete algorithms of $(\frac{1}{\varepsilon}+1,1-\mathcal{O}(\varepsilon))$ in  \citeauthor{chen2024fair}.

  We now describe \fairgreedy, pseudocode for which is presented in Algorithm \ref{alg:fair}. \fairgreedy adopts a similar approach of running greedy algorithms in blocks. By definition, the $\beta$-extension of the fairness matroid constraint $\mathcal{M}_{fair}$ is given by $\mathcal{M}_\beta:=\{S \subseteq U :  |S \cap U_c| \leq \beta u_c, \forall c \in [N], \sum_{c\in [N]}\max\{|S\cap U_c|, \beta l_c\}\leq \beta \kappa\}$ as introduced in \cite{chen2024fair}. Therefore, an algorithm with a bicriteria approximation ratio of $(\alpha,\beta)$ for the SMF problem outputs a solution set that belongs to $\mathcal{M}_\beta$. The key intuition behind \fairgreedy is that any set in the $\mathcal{M}_\beta$ can be expressed as the union of $\beta$ subsets, each belonging to $\mathcal{M}_{fair}$. This motivates our algorithm of dividing the capacity of the solution set into $\beta$ blocks. Here $\beta:=\frac{\ln(1/\epsilon)}{\ln 2}$. Within each block from Line \ref{line:fair_outer_for_loop_starts} to Line \ref{line:fair_outer_for_loop_ends} in Algorithm \ref{alg:fair}, the algorithm operates in a greedy fashion: it iteratively adds the element with the highest marginal gain while maintaining $B\in\mathcal{M}_{fair}$. The main theoretical result is stated in Theorem \ref{thm:fair_greedy} below. The proof of the theorem is deferred to Appendix \ref{appdx:fair}.

\begin{algorithm}[t]
    \caption{\fairgreedy}\label{alg:fair}
    \begin{algorithmic}[1]
        \State \textbf{Input: }fairness parameters
        \State \textbf{Output: }$S\in U$
        \State $S\gets\emptyset$
    \For{$i=1$ \textbf{to} $\frac{\ln(1/\epsilon)}{\ln 2}$}
    \State $B\gets\emptyset$\label{line:fair_outer_for_loop_starts}
        \While{$\exists s$ \textbf{ s.t. } $B\cup\{s\}\in\mathcal{M}_{fair}$}
        \State $x\gets \arg\max_{s\in U,B\cup\{s\}\in\mathcal{M}_{fair}}\Delta f(S,s)$
        \State $B\gets B\cup\{x\}$
        \State $S\gets S\cup\{x\}$\label{line:fair_outer_for_loop_ends}
        \EndWhile
\EndFor
        \textbf{return }$S$
    \end{algorithmic}
\end{algorithm}

\begin{restatable}{theorem}{fairthm}
    \label{thm:fair_greedy}
    Suppose that \fairgreedy is run for an instance of SMF, 
   then \fairgreedy outputs a solution $S$ that satisfies a $(1-\varepsilon, \frac{\ln\frac{1}{\varepsilon}}{\ln 2})$-bicriteria approximation guarantee in at most $O\left(n\kappa \ln(1/\varepsilon)\right)$ queries of $f$.
\end{restatable}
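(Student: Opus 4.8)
Looking at Theorem \ref{thm:fair_greedy}, I need to prove that \fairgreedy achieves a $(1-\varepsilon, \frac{\ln(1/\varepsilon)}{\ln 2})$-bicriteria approximation for SMF in $O(n\kappa\ln(1/\varepsilon))$ queries.

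\textbf{Proof proposal.}

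The plan is to analyze the $\frac{\ln(1/\varepsilon)}{\ln 2}$ blocks one at a time, showing that each block multiplicatively closes the gap to $f(OPT)$. First I would establish the feasibility claim, which is the easier half: each block $B$ is grown greedily subject to $B\cup\{s\}\in\mathcal{M}_{fair}$, so at the end of block $i$ the set $B$ lies in $\mathcal{M}_{fair}$; since $S$ is the union of the $\beta:=\frac{\ln(1/\varepsilon)}{\ln 2}$ blocks, $S$ is a union of $\beta$ sets each in $\mathcal{M}_{fair}$, hence $S\in\mathcal{M}_\beta$, which by the definition of the $\beta$-extension recalled in the text gives exactly $|S\cap U_c|\le\beta u_c$ for all $c$ and $\sum_c\max\{|S\cap U_c|,\beta l_c\}\le\beta\kappa$. (One should be slightly careful that $\sum_c\max\{a_c+a_c',l_c+l_c'\}\le\sum_c\max\{a_c,l_c\}+\sum_c\max\{a_c',l_c'\}$, which is immediate from $\max\{a+a',l+l'\}\le\max\{a,l\}+\max\{a',l'\}$ termwise, then induct over the $\beta$ blocks.) The query bound follows because each block performs at most $\kappa$ greedy insertions (each insertion increases $\sum_c|B\cap U_c|$ or is forced by the matroid to stop after $\le\kappa$ steps, since any member of $\mathcal{M}_{fair}$ has size $\le\kappa$), and each insertion scans $\le n$ candidates, giving $O(n\kappa)$ per block and $O(n\kappa\ln(1/\varepsilon))$ overall.

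The core is the approximation guarantee. Fix a block, say the one that starts with accumulated solution $S_{\text{old}}$ and ends with $S_{\text{new}}=S_{\text{old}}\cup B$. I claim $f(S_{\text{new}})\ge f(OPT)-\frac12\bigl(f(OPT)-f(S_{\text{old}})\bigr)$, i.e. each block halves the residual $f(OPT)-f(S)$. The standard argument: run the greedy within the block and at each step the chosen element has marginal gain at least $\frac{1}{|OPT|}$ (or $\frac1\kappa$) times the current residual with respect to $OPT$ — but here the cleaner route is the classical observation that greedily building a set $B$ that is maximal in $\mathcal{M}_{fair}$, always picking the element of largest marginal gain $\Delta f(S_{\text{old}}\cup(\text{partial }B),\cdot)$ among those keeping $B\in\mathcal{M}_{fair}$, yields $f(S_{\text{old}}\cup B)\ge f(S_{\text{old}}\cup B\cup OPT) - \sum_{o\in OPT\setminus(\ldots)}\Delta f(\ldots,o)$, and each such $o$ can be "charged" to a greedy pick that is at least as good by an exchange argument valid for matroids (every maximal independent set has the same size, and we can match elements of $OPT$ to elements of $B$ via the matroid exchange property). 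This gives $f(S_{\text{new}}) - f(S_{\text{old}}) \ge \tfrac12\bigl(f(S_{\text{old}}\cup OPT) - f(S_{\text{old}})\bigr)\ge \tfrac12\bigl(f(OPT)-f(S_{\text{old}})\bigr)$ using monotonicity in the last step. Iterating over the $\beta$ blocks: if $r_i:=f(OPT)-f(S_i)$ after block $i$, then $r_i\le \tfrac12 r_{i-1}$, so $r_\beta\le 2^{-\beta}f(OPT) = 2^{-\ln(1/\varepsilon)/\ln 2}f(OPT)=\varepsilon f(OPT)$, hence $f(S)\ge(1-\varepsilon)f(OPT)$.

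\textbf{Main obstacle.} The delicate point is justifying the per-block "halving" inequality rigorously, specifically the exchange/charging step that matches $OPT$-elements to greedy picks inside $\mathcal{M}_{fair}$. The subtlety is that $\mathcal{M}_{fair}$ is a (non-partition) matroid with the $\sum_c\max\{|S\cap U_c|,l_c\}\le\kappa$ constraint, so one cannot use the trivial partition-by-partition matching; I would invoke the matroid exchange property (for $A,B$ independent with $|A|<|B|$ there is $b\in B\setminus A$ with $A\cup\{b\}$ independent), applied to the final greedy block $B$ — which is a base of $\mathcal{M}_{fair}$, hence $|B|=|OPT\cap\text{(relevant ground set)}|$ or at least $|B|\ge$ any independent set's size after appropriate restriction — to produce an injection from (a base containing) $OPT$ into $B$ such that each greedy pick's marginal gain dominates the matched $OPT$-element's marginal gain at the time it was picked. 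Here I would either reuse the corresponding lemma from \cite{chen2024fair} (since \fairgreedy is described as an improvement of their Algorithm 5 and the single-block analysis should mirror their greedy analysis) or reprove it via the standard "greedy on a matroid gives $\tfrac12$" telescoping combined with monotonicity to pass from $f(S_{\text{old}}\cup OPT)$ to $f(OPT)$. I would also need to handle the edge case where $OPT\not\subseteq$ the span reachable within one block, but monotonicity ($f(S_{\text{old}}\cup OPT)\ge f(OPT)$) absorbs this cleanly.
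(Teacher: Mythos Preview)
Your proposal is correct and follows essentially the same approach as the paper: the paper proves the per-block inequality $f(S_{i})-f(S_{i-1})\ge f(OPT)-f(S_i)$ via the matroid exchange bijection between the greedy block $B_i$ and $OPT$ (exactly your charging argument), establishes feasibility by the same $\max\{a+b,c+d\}\le\max\{a,c\}+\max\{b,d\}$ induction over blocks, and then iterates $f(S_i)\ge\frac{f(OPT)+f(S_{i-1})}{2}$ to obtain $f(S_\phi)\ge(1-2^{-\phi})f(OPT)=(1-\varepsilon)f(OPT)$. The only cosmetic difference is that the paper states the block inequality in the form $f(S_{\text{new}})-f(S_{\text{old}})\ge f(OPT)-f(S_{\text{new}})$ rather than your rearranged $\tfrac12$-version, but these are equivalent.
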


\section{Experiments}
\label{sec:exp}
\begin{figure*}[t!]
    \centering
    \hspace{-0.5em}
     \subfigure[euall $f$]
{\label{fig:euall_f}\includegraphics[width=0.24\textwidth]{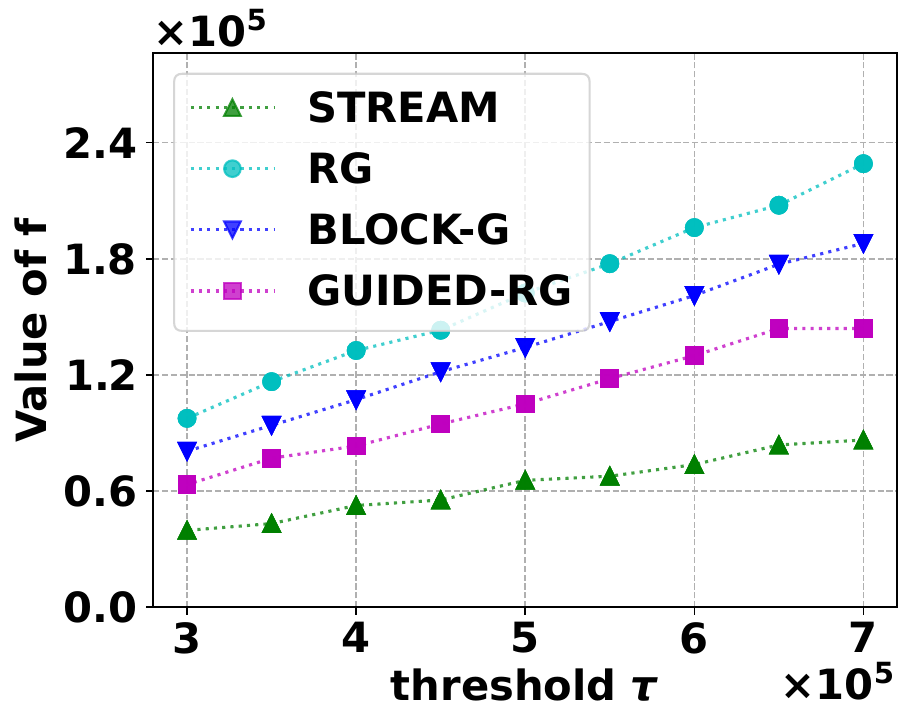}} 
    \hspace{-0.5em}
     \subfigure[euall budget]
{\label{fig:euall_budget}\includegraphics[width=0.24\textwidth]{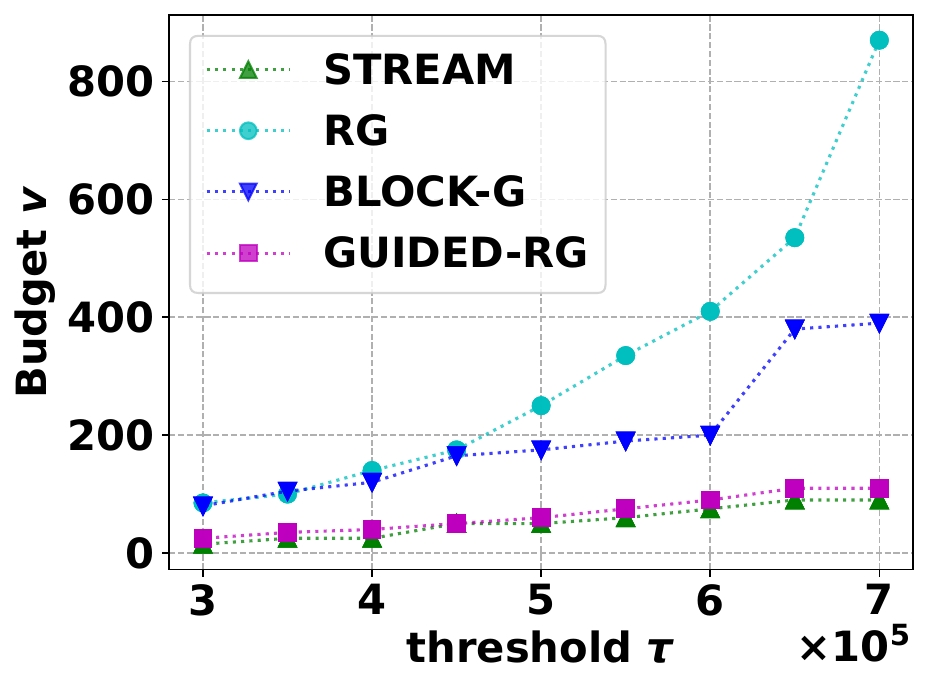}} 
\hspace{-0.5em}
\subfigure[twitch $f$]
{\label{fig:twitch_f}\includegraphics[width=0.24\textwidth]{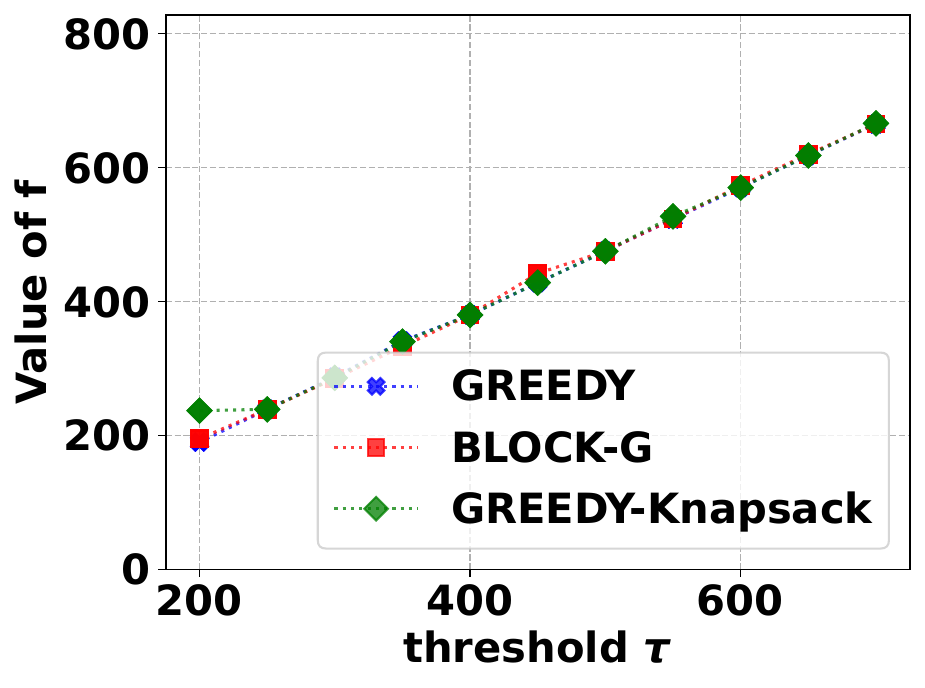}}
\hspace{-0.5em}
\subfigure[twitch budget]
{\label{fig:twitch_budget}\includegraphics[width=0.24\textwidth]{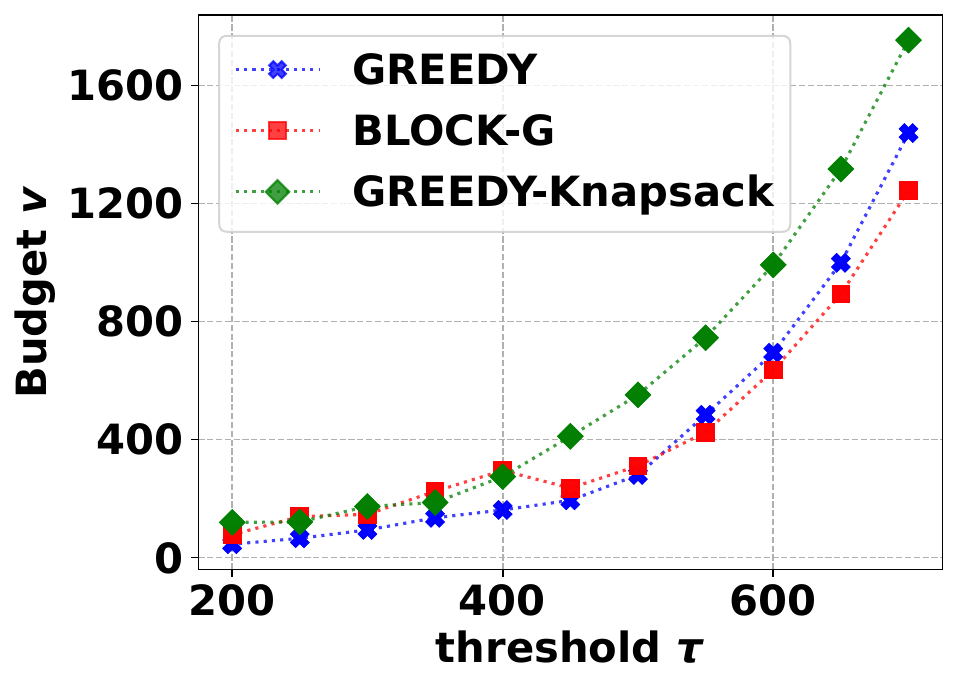}}
\hspace{-0.5em}
\subfigure[corel cost]
{\label{fig:cover5000_fair_c}\includegraphics[width=0.24\textwidth]{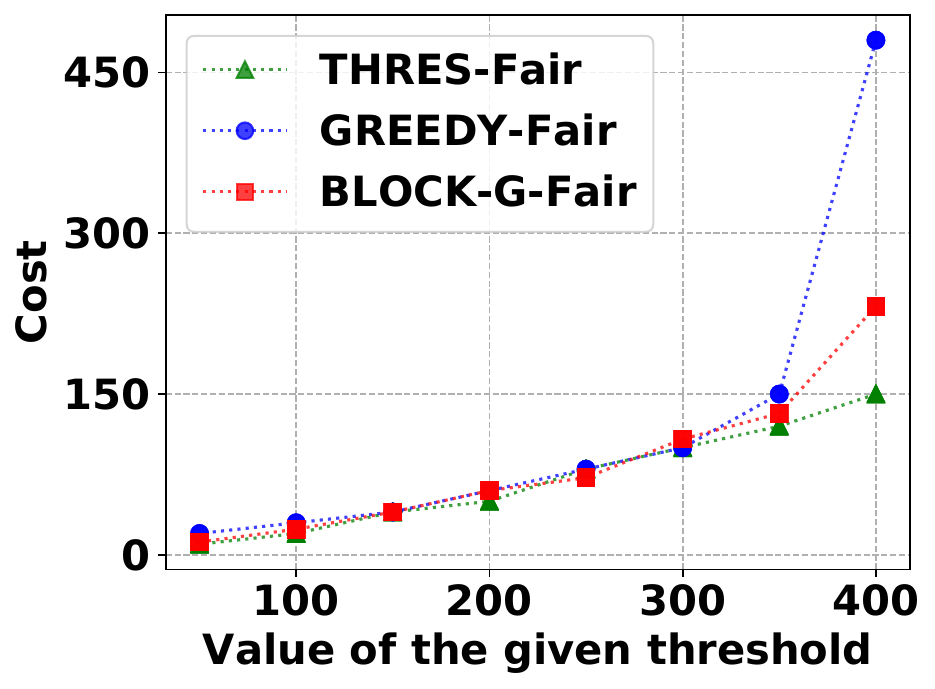}}
\hspace{-0.5em}
\subfigure[corel fairness difference]
{\label{fig:cover5000_fair_difference}\includegraphics[width=0.24\textwidth]{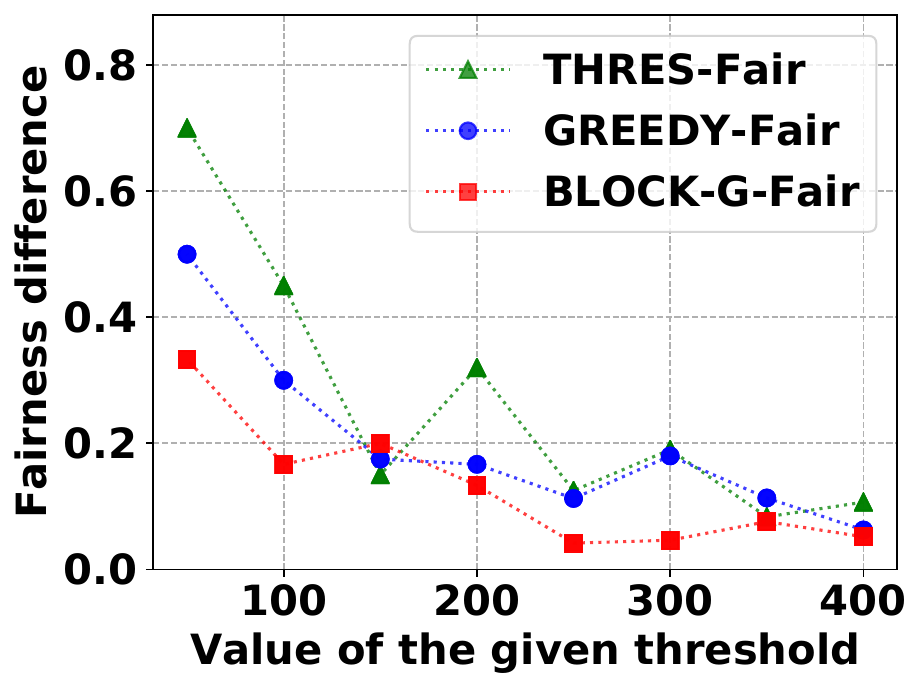}}
\hspace{-0.5em}
\subfigure[synthetic cost]
{\label{fig:synthetic_c}\includegraphics[width=0.24\textwidth]{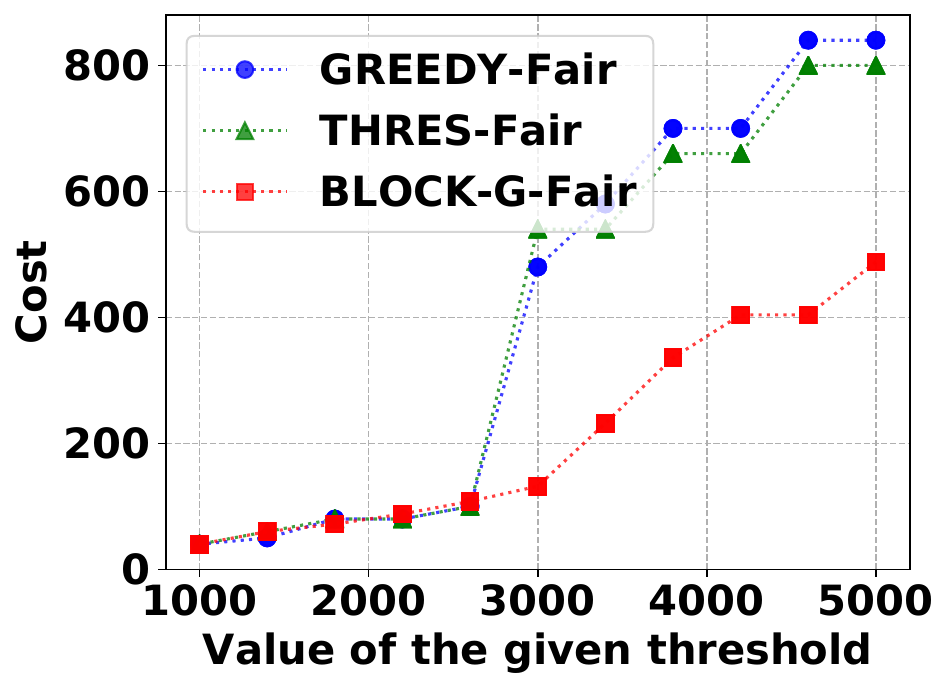}}
\hspace{-0.5em}
\subfigure[synthetic fairness difference]
{\label{fig:synthetic_fair_difference}\includegraphics[width=0.24\textwidth]{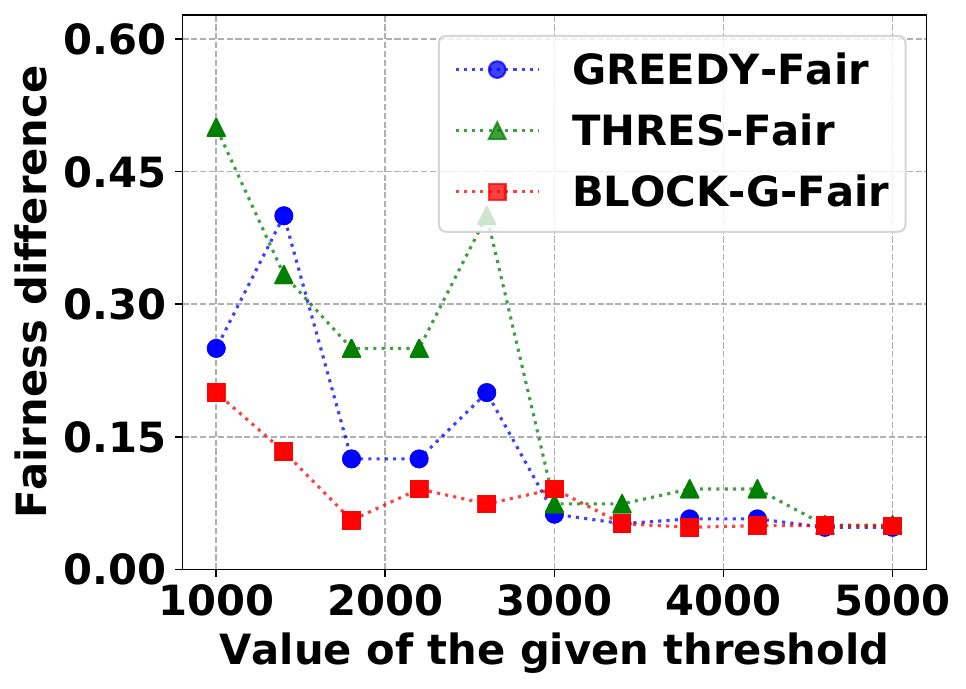}}
\caption{The experimental results of running the algorithms on the euall dataset, the twitch dataset, the Corel5k dataset, and the synthetic dataset. Budget: $\max_{i\in[N]}\frac{c(S\cap U_i)}{p_i}$. Cost: the size of the solution. Fairness difference: $(\max_c |S \cap U_c| - \min_c |S \cap U_c|) / |S|$. }
        \label{fig:exp_result}
\end{figure*}
In this section, we present an empirical evaluation of our proposed algorithms. In particular, we evaluate our \nonmonosc algorithm on instances of graph cut in Section \ref{sec:nonmono_exp}. Next, we evaluate \blockksc on set cover instances and \fairgreedy on both max cover and image summarization tasks in Sections~\ref{sec:knapsack_exp} and~\ref{sec:fair_exp}, respectively. Additional details about the applications, setup, and results can be found in Section \ref{appdx:exp} in the supplementary material.

\subsection{Experiments on Nonmonotone SCP}
\label{sec:nonmono_exp}
We evaluate the performance of our algorithms on several instances of graph cut over social
network data. The dataset used in the main paper is the email-EuAll dataset (n = 265214,
420045 edges) from the SNAP large network collection \citep{leskovec2016snap}. We compare the solutions returned by the \convr algorithm with four subroutines including: (i). our \nonmonosc algorithm ("BLOCK-G") (ii). the streaming algorithm from \citet{feldman2018less} ("STREAM"); (iii). the randomized algorithm of~\citet{chen2024discretely}, initialized with the twin-greedy solution proposed in~\citet{han2020deterministic} ("GUIDED-RG"). (iv). The random-greedy algorithm for the submodular maximization problem with the cardinality constraint being the input guess of budget $\kappa$ \citep{buchbinder2014submodular} ("RG"). The algorithms are evaluated in terms of the function value $f(S)$ returned by the solution $S$, the query complexity, and the minimum value of budget $v$ that satisfies the partition constraints, i.e., $\max_{i\in[N]}\frac{|S\cap U_i|}{p_i}$, and the execution time. The results are compared for different values of the threshold $\tau$. 

The results in terms of $f$ and the minimum budget are described in Figure \ref{fig:euall_f} and \ref{fig:euall_budget}. From the results, one can see that RG and BLOCK-G consistently achieve higher objective values $f$ than the other methods. This is consistent with our theoretical results as the approximation ratio on the function value of these two algorithms satisfies $f(S)\geq (1/e-\epsilon)\tau$ and $f(S)\geq (1/e-2\epsilon)\tau$ respectively while the other two algorithms STREAM, GUIDED-RG achieve worse approximation ratios on the function value of $1/0.583-\epsilon$, $0.305-2\epsilon$ respectively. However, in terms of budget, we can see that the RG algorithm performs poorly, since it does not account for partition constraints, resulting in imbalanced budget allocations. The STREAM and GUIDED-RG algorithm returns solutions with smaller budgets since both these two algorithms achieve a bicriteria approximation ratio such that $v_S\leq (1+\alpha)v_{OPT}$. While BLOCK-G has a higher budget due to its weaker bicriteria guarantee of $v_S \leq \frac{2(1+\alpha)}{\epsilon}v_{\text{OPT}}$, it does achieve a significantly higher function value.

\subsection{Experiments on SCKP}
\label{sec:knapsack_exp}
We evaluate three different algorithms on the instance of max cover: the converting algorithm \convk with two different subroutines:  \blockksc ("BLOCK-G") and a standard greedy algorithm without the block-wise structure ("GREEDY"), and the greedy algorithm for submodular cover without the partition-knapsack constraint ("GREEDY-Knapsack"). Further details regarding the GREEDY and the GREEDY-Knapsack algorithms, and the experimental setup are provided in Appendix \ref{appdx:setup}.

The results in terms of the minimum budget, which can be calculated by $\max_{i\in[N]}\frac{c(S\cap U_i)}{p_i}$, and the function value $f$ are plotted in Figure \ref{fig:twitch_f}
 and Figure \ref{fig:twitch_budget}. From the results, one can see that the $f$ values of solutions returned by BLOCK-G,
GREEDY-Knapsack, and GREEDY are nearly the same. This is because the theoretical guarantees on $f$ are about
the same for the different algorithms. However, the budget of the solution returned by our algorithm BLOCK-G is smaller than the other two algorithms, which demonstrates the effectiveness of our approach of running greedy in blocks.

\subsection{Experiments on SCF}
\label{sec:fair_exp}
For the SCF problem, we evaluate algorithms using the conversion framework from \cite{chen2024fair} with different subroutines: our \fairgreedy algorithm ("BLOCK-G-Fair"), the standard greedy algorithm ("GREEDY-Fair") and the threshold greedy algorithm ("THRES-Fair"). These algorithms are compared in terms of solution cost (cardinality), fairness difference, objective function value, query complexity, and execution time for varying values of the threshold $\tau$. Here we set $\alpha=0.2$ for the converting algorithms in Algorithm 1 in \cite{chen2024fair}. The parameters in the fairness constraint are set to be $ u_c = 1.1 / N, l_c = 0.9 / N$. (where $N$ is the number of groups).  Additional details about the applications, setup, and results can be found in Section \ref{appdx:exp} in the appendix.



Figures \ref{fig:cover5000_fair_c} and \ref{fig:synthetic_c} illustrate the cost (cardinality) of the solution sets, while Figures \ref{fig:cover5000_fair_difference} and \ref{fig:synthetic_fair_difference} show the fairness differences across varying $\tau$ values. In most cases, BLOCK-G-Fair achieves a lower cost than THRES-Fair and GREEDY-Fair, aligning with our theoretical results.  In the case where $\tau$ is large on the corel dataset, the cost of the THRES-Fair is smaller than the BLOCK-G-Fair. However, Figures \ref{fig:cover5000_fair_difference} and \ref{fig:synthetic_fair_difference} reveal that fairness differences in THRES-Fair and GREEDY-Fair are significantly larger than in BLOCK-G-Fair, demonstrating that BLOCK-G-Fair produces more balanced solutions. This is expected given that the fairness constraint from \cite{chen2024fair} ensures that $\beta\lfloor \frac{p_c|S|}{\beta}\rfloor \leq |S \cap U_c| \leq  \beta\lceil \frac{q_c|S|}{\beta}\rceil$, which means the solution set might break the fairness constraint by an additive factor of $\beta$. Notably, $\beta=\mathcal{O}(1/\epsilon)$ for the THRES-Fair and GREEDY-Fair and $\beta =\frac{\ln(\frac{1}{\epsilon})}{\ln2}$ for BLOCK-G-Fair, which means our method achieves an enhanced fairness guarantee.

\section{Reproducibility statement}
All theoretical results in this paper are supported by complete proofs, which are provided in the main text and the appendix. Approximation guarantees and detailed proofs are described to ensure that the theoretical contributions can be independently verified.

\bibliography{paper}
\appendix
\onecolumn

\section*{Appendix}
\section{The Use of Large Language Models (LLMs)}
In this paper, we use Large Language Models solely to polish the writing to improve the clarity and presentation. All theoretical results and technical contributions in this paper were not developed by LLMs.
\section{Appendix for Section \ref{sec:nonmono_sc}}
 \label{appdx:nonmono_sc}
 In this section, we present missing content from Section \ref{sec:nonmono_sc}, where we considered non-monotone submodular cover with partition constraints. First, pseudocode for the converting algorithm \convr, which we only informally described in Section \ref{sec:nonmono_sc}, is given in Algorithm \ref{alg:conver}. Next, we present the omitted proofs of Theorems \ref{thm:convert} and Theorem \ref{thm:nonmonobi}.

\begin{algorithm}[t!]
\caption{\convr}\label{alg:conver}
\textbf{Input}: An SCP instance with threshold $\tau$, a $(\gamma,\beta)$-bicriteria approximation algorithm for SMP, $\alpha>0$\\
\textbf{Output}: $S\subseteq U$
\begin{algorithmic}[1]
\State $S_i\gets\emptyset$, $\forall i\in\{1,..., \ln(1/\delta)/\ln(\frac{\beta-\gamma+\epsilon}{\beta-\gamma})\}$
\State $g\gets(1+\alpha)$
\While{$f(S_i)<(\gamma-\epsilon)\tau$ $\forall i$}\label{line:convrtest}
    \For{$i\in\{1,..., \ln(1/\delta)/\ln(\frac{\beta-\gamma+\epsilon}{\beta-\gamma})\}$}
        \State $S_i\gets(\gamma,\beta)$-bicriteria approximation for SMP with objective function $f$ and budget $g$ \label{line:convertmax}
    \EndFor
\State $g\gets(1+\alpha)g$
\EndWhile
\State \textbf{return} $S$
\end{algorithmic}
\end{algorithm}

\textbf{Theorem \ref{thm:convert}. }\textit{
   Any randomized $(\gamma,\beta)$-bicriteria approximation algorithm for SMP that runs in time $\mathcal{T}(n)$ where $\gamma$ holds only in expectation can be converted into a $((1+\alpha)\beta,\gamma-\epsilon)$-bicriteria approximation algorithm for SCP that runs in time $O(\log_{1+\alpha}(|OPT|)\ln(1/\delta)\mathcal{T}(n))/\ln(\frac{\beta-\gamma+\epsilon}{\beta-\gamma}))$ where $\gamma$ holds with probability at least $1-\delta$.}

\begin{table}[t]
\centering
\renewcommand{\arraystretch}{1.5}
\begin{tabular}{|p{2.5cm}|p{4cm}|p{3.5cm}|p{5.5cm}|}
\hline
\textbf{Problem} & \textbf{Algorithm} & \textbf{Approximation Ratio} & \textbf{Query Complexity} \\
\hline
Nonmonotone SCP & \convr + \nonmonosc & 
\(\left(O\left(\frac{1+\alpha}{\epsilon}\right), \frac{1}{e}-\epsilon\right)\) & 
\(\mathcal{O}\left(n^2 \log_{1+\alpha} n \ln(1/\delta)/\epsilon/\ln(1+\epsilon)\right)\) \\
\hline
Monotone SCKP & \blockksc + convert & 
\(\left(\frac{(1+\alpha)\ln(1/\epsilon)}{\ln 2}, 1 - \epsilon\right)\) & 
\(\mathcal{O}\left(n^2 \log_{1+\alpha}\left(\frac{c_{\max}n}{c_{\min}}\right)\right)\) \\
\hline
Monotone SCF & \fairgreedy + convert-fair \citep{chen2024fair} & 
\(\left(\mathcal{O}(\ln(1/\epsilon)), 1 - \epsilon\right)\) & 
\(\mathcal{O}\left(\frac{n \log n \kappa \ln(1/\epsilon)}{\epsilon}\right)\) \\
\hline
\end{tabular}
\caption{Summarization of approximation algorithms for Submodular Cover Problems in this paper.}
\label{tab:submodular-comparison}
\end{table}

\begin{proof}
 Consider the run of the algorithm for SMP on Line \ref{line:convertmax}
of Algorithm \ref{alg:conver} when the guess of optimal value $g$ falls into the region 
\begin{align*}
    v_{OPT}\leq g\leq(1+\alpha)v_{OPT}.
\end{align*} 
Let us denote the partition matroid with budget $g$ as $\mathcal{M}$, i.e., $\mathcal{M}:=\{S\subseteq U:|S\cap U_j|\leq p_jv, \forall j\in[N]\}$. The SMP problem is then defined to find $\arg\max\{f(S):S\in\mathcal{M}\}$. We denote the optimal solution of the SMP problem with budget $g$ as $OPT_g$, i.e., $$OPT_g:=\arg\max\{f(S):|S\cap U_j|\leq p_jv, \forall j\in[N]\}.$$ Besides, we define the optimal solution for SCP as $OPT$. By the fact that the optimal solution $OPT$ is feasible for SCP, we have that 
\begin{align}
    |OPT\cap U_j|\leq p_j v_{OPT}\leq p_jv
\end{align}
which means that $OPT\in\mathcal{M}$, and therefore $f(OPT)\leq\max_{S\in\mathcal{M}}f(S)= f(OPT_g)$. Since $ f(OPT)\geq \tau$, we have $f(OPT_g)\geq \tau$. It then follows that for each $i\in\{1,..., \ln(1/\delta)/\ln(\frac{\beta-\gamma+\epsilon}{\beta-\gamma})\}$,
\begin{align*}
   P\Big(f(S_i)\leq(\gamma-\epsilon)\tau\Big)&\leq P\Big(f(S_i)\leq(\gamma-\epsilon)f(OPT_g)\Big)\\
   &\leq P\Big(\beta f(OPT_g)-f(S_i)\geq(\beta-\gamma+\epsilon)f(OPT_g)\Big)
\end{align*}

By the theoretical guarantees of the algorithm for SMP, we have that for all $i\in\{1,..., \ln(1/\delta)/\ln(\frac{\beta-\gamma+\epsilon}{\beta-\gamma})\}$, we have that $\mE f(S_i)\geq \gamma f(OPT_g)$ and $|S_i\cap U_j|\leq p_j\beta g$ for each $j\in[N]$. It then follows that 

\begin{align*}
   P\Big(f(S_i)\leq(\gamma-\epsilon)\tau\Big)
   &\leq P\Big(\beta f(OPT_g)-f(S_i)\geq\frac{\beta-\gamma+\epsilon}{\beta-\gamma}(\beta f(OPT_g)-\mE f(S_i))\Big) 
\end{align*}

Let us denote $\mathcal{M}_\beta:=\{S\subseteq U:|S\cap U_j|\leq p_j\beta g, \forall j\in[N]\}$, i.e., $\mathcal{M}_\beta$ is the $\beta$-extension of the matroid $\mathcal{M}$ as is defined in \cite{chen2024fair}. Since $S_i$ satisfies $|S_i\cap U_j|\leq p_j\beta g$ for any $j\in[N]$, we have that $S_i\in \mathcal{M}_\beta$ for each $i\in\{1,..., \ln(1/\delta)/\ln(\frac{\beta-\gamma+\epsilon}{\beta-\gamma})\}$. Notice that for any set $A\in\mathcal{M}_\beta$, we can express $A$ as the union of $\beta$ disjoint subsets in $\mathcal{M}$. Let us denote them as $A_1, A_2,..A_m$. Thus we have that 
\begin{align*}
    f(A)&=f(\cup_{i\in[\beta]}A_i)\\
    &\leq \sum_{i\in[\beta]}f(A_i)\leq \beta f(OPT_g),
\end{align*}
where the first inequality follows from the submodularity of $f$. It then follows that $\max_{S\in\mathcal{M}_\beta}f(S)\leq \beta f(OPT_g)$. Notice that $S_i\in\mathcal{M}_\beta$, we have that $ \beta f(OPT_g)-f(S_i)\geq 0$. Thus, we can apply Markov's inequality on the random variable $\beta f(OPT_g)-f(S_i)$. Therefore, we can get that for each $i\in\{1,..., \ln(1/\delta)/\ln(\frac{\beta-\gamma+\epsilon}{\beta-\gamma})\}$
\begin{align*}
   P\Big(f(S_i)\leq(\gamma-\epsilon)\tau\Big)
   &\leq P\Big(\beta f(OPT_g)-f(S_i)\geq\frac{\beta-\gamma+\epsilon}{\beta-\gamma}(\beta f(OPT_g)-\mE f(S_i))\Big)\\
   &\leq \frac{\beta-\gamma}{\beta-\gamma+\epsilon}
\end{align*}

Then the probability that none of the subsets $S_i$ can reach the stopping condition can be bounded by 
\begin{align*}
    P(f(S_i)\leq(\gamma-\epsilon)\tau, \forall i)&=P(f(S_i)\leq(\gamma-\epsilon)\tau, \forall i)\notag\\
&=\prod_{i=1}^{\ln(1/\delta)/\ln(\frac{\beta-\gamma+\epsilon}{\beta-\gamma})}P(f(S_i)\leq(\gamma-\epsilon)\tau)\\
&\leq(\frac{\beta-\gamma}{1+\epsilon-\gamma})^{\ln(1/\delta)/\ln(\frac{\beta-\gamma+\epsilon}{\beta-\gamma})}=\delta.
\end{align*}
This means with probability at least $1-\delta$, \convr stops when $g$ reaches the region where $v_{OPT}\leq g\leq(1+\alpha)v_{OPT}$ since the condition of the \textbf{while} loop is not satisfied. Therefore, by the assumption that the subroutine algorithm is a $(\gamma,\beta)$-bicriteria approximation algorithm, we have that the output solution $S$ satisfies that $|S\cap U_j|\leq p_j\beta g\leq p_j\beta(1+\alpha)v_{OPT}$. Then the objective value of the optimal solution $S$ can be set to be $v_S=\beta(1+\alpha)v_{OPT}$. It also implies that there are at most $O(\log_{1+\alpha}v_{OPT})$ number of guesses of the cardinality of the optimal solution. Since for each guess, we run the SMP for $\ln(1/\delta)/\ln(\frac{\beta-\gamma+\epsilon}{\beta-\gamma})$ times, the algorithm runs in time $O(\log_{1+\alpha}(v_{OPT})\ln(1/\delta)\mathcal{T}(n)/\ln(\frac{\beta-\gamma+\epsilon}{\beta-\gamma}))$.

\end{proof}

 Next, we present the proof for the result in Theorem \ref{thm:nonmonobi}.

\textbf{Theorem \ref{thm:nonmonobi}.}\textit{
    Suppose that \nonmonosc is run for an instance of nonmonotone SMP with budget $v$, 
   then \nonmonosc 
   outputs a solution $S$ that satisfies a bicriteria approximation ratio of 
   $(1/e-\epsilon,\frac{2}{\epsilon})$ in expectation in at most $O(\frac{nv}{\epsilon})$ number of queries.
}
 
 \begin{proof}
      Let us denote the solution set after adding the $l$-th element in $j$-th subgroup during the $i$-th round of the outer for loop in Line \ref{line:greedy_nonmono_sc} in Algorithm \ref{alg:nonmono_sc} as $S_{i,j,l}$, and we define the solution set after completing adding all the elements in the $j$-th subgroup during the $i$-th round in Algorithm \ref{alg:nonmono_sc} as $S_{i,j}$. For notation simplicity, we also define $\phi:=\frac{2}{\epsilon}$. From the greedy selection strategy, we have that  
    \begin{align*}
        \mE [f(S_{i,j,l})-f(S_{i,j,l-1})]\geq\frac{\sum_{a\in OPT\cap U_j}\Delta f(S_{i,j,l-1},a)}{p_jv\phi}.
    \end{align*}
    By submodularity, we would have that
    \begin{align*}
        \mE [f(S_{i,j,l})-f(S_{i,j,l-1})]\geq\frac{\sum_{a\in OPT\cap U_j}\Delta f(S_{i,j},a)}{p_jv\phi}.
    \end{align*}
    By summing over all $l\in[p_jv]$, it then follows that  
    \begin{align*}
        \mE [f(S_{i,j})-f(S_{i,j-1}]&\geq\frac{\sum_{a\in OPT\cap U_j}\Delta f(S_{i,j},a)}{\phi}.
    \end{align*}
 Let us denote the solution after completing the entire $i$-th round as $S_i$. By submodularity, it then follows that 
  \begin{align*}
        \mE[f(S_{i,j})-f(S_{i,j-1})]&\geq\frac{\mE[\Delta f(S_{i,j},OPT_j)]}{\phi}\\
        &\geq\frac{\mE[\Delta f(S_{i},OPT_j)]}{\phi},
    \end{align*}
  
  By summing over all $j\in[N]$, we have 
    \begin{align*}
        \mE [f(S_{i})-f(S_{i-1})]&\geq\frac{\sum_{j=1}^N\mE[\Delta f(S_{i},OPT_j)]}{\phi}\\
&\geq\frac{\mE[\Delta f(S_{i},OPT)]}{\phi}.
    \end{align*}
   Then it follows that 
    \begin{align*}
        \mE [f(S_{i})-f(S_{i-1})]\geq\mE[\frac{f(S_{i}\cup OPT)-f(S_i)}{\phi}].
    \end{align*}
    Notice that by the greedy selection step, for each group $j$ and each element $a\in OPT\cap U_j$ appears in $S_i$ with probability at most $1-(1-\frac{1}{p_jv\phi })^{p_jvi}$. Since $(1-\frac{1}{x})^x$ increases with $x$ in the range of $[1,+\infty)$, we have that $(1-\frac{1}{p_jv\phi })^{p_jv\phi }\geq (1-\frac{1}{\phi})^\phi$. Therefore, we would get $1-(1-\frac{1}{p_jv\phi })^{p_jvi}\leq 1-(1-\frac{1}{\phi})^i$. From Lemma 2.2 in \cite{buchbinder2014submodular}, we can conclude that 
    \begin{align*}
        \mE [f(S_{i}\cup OPT)]\geq(1-\frac{1}{\phi})^if(OPT).
    \end{align*}
    By rearranging the above inequality, we can get that 
    \begin{align*}
        \mE [f(S_{i})]\geq\frac{\phi}{\phi+1}\mE [f(S_{i-1})]+\frac{1}{\phi+1}(1-\frac{1}{\phi})^if(OPT).
    \end{align*}

    By induction, we have that the output solution set satisfies that
    \begin{align}
    \label{eqn:nonmono_induction}
        \mE [f(S)]&=\mE [f(S_{\phi})]\nonumber\\
        &\geq\frac{1}{\phi+1}\sum_{i=1}^\phi(\frac{\phi}{\phi+1})^{\phi-i}(1-\frac{1}{\phi})^if(OPT)\nonumber\\
        &\geq\frac{1}{\phi+1}\sum_{i=1}^\phi(\frac{\phi-1}{\phi})^{\phi-i}(1-\frac{1}{\phi})^if(OPT)\nonumber\\
        &\geq\frac{\phi}{\phi+1}(1-\frac{1}{\phi})^\phi f(OPT)\geq\frac{1}{e}(1-\epsilon)f(OPT).
    \end{align}
    where the last inequality follows from the fact that $(1-\frac{1}{\phi})^{\phi-1}\geq e^{-1}$ for any $\phi>1$, and that $\phi=\frac{2}{\epsilon}$.  
\end{proof}

\section{Appendix for Section \ref{sec:knapsack}}
\label{appdx:scpk}
We now present omitted content from Section \ref{sec:knapsack}, where we studied monotone submodular cover with knapsack partition constraints. Our first goal is to provide additional detail to motivate and explain our proposed optimization problem formulation. In particular, several detailed motivating examples of SCKP are presented in Section \ref{appdx:knapsack_application}, and then further we provide detailed discussion on the formulation of SCKP in Section \ref{appdx:knapsack_prob_explain}. Then in Section \ref{sec:knapsack}, we provide the omitted proofs from Section \ref{sec:knapsack} in the main paper. Namely, we present the missing proofs of the theoretical guarantee for the \chunkgreedy algorithm with \algksm as the subroutine, stated in Theorem \ref{thm:SM_knapsack}, and we present the converting algorithm \convk for transforming an algorithm for SMKP to an algorithm for SCKP, stated in Theorem \ref{thm:convk}.

\subsection{Motivating Applications of SCKP}
\label{appdx:knapsack_application}
In this portion of the appendix, we provide a series of examples to motivate our study of the SCKP problem, where the objective is to find a solution set $S$ which minimizes the total cost while maintaining a certain level of utility ($f(S)\geq\tau$) and a balanced cost constraint across different partitions ($c(S\cap U_j)\leq p_jv$). The motivating examples of this problem include
\begin{itemize}
    \item \textbf{Influence Maximization}: In this application, we might want to select a set of nodes with minimum cost (e.g., limited budget funds to be allocated) while ensuring a certain level of influence spread. The cost should also be balanced among each partition of the universe, which is splitted by the demographic or geographic attributes. Different nodes (e.g., influential users or groups) may require different costs to be activated (e.g., through targeted ads or promotions), and thus the cost is non-uniform among different nodes. 
    \item \textbf{Pretraining Data Selection}: In pretraining data selection, the goal is to select a subset of data points with minimal cost, where costs may reflect computational, labeling, or storage expenses. The problem involves balancing costs across predefined groups in the dataset while ensuring that the utility function (e.g., coverage of diverse features) meets a specified threshold.
    \item \textbf{Multi-Agent Task Allocation:} The objective is to find a set of tasks that minimizes the total cost of the assigned tasks while achieving an overall utility or performance of a certain level and a balanced cost across different types of tasks (e.g., delivery, inspection, or cleaning). Tasks have different execution costs depending on complexity, duration, or required resources and thus the cost is nonuniform.
\end{itemize}

\subsection{Clarification of the Problem Definition of SCKP}
\label{appdx:knapsack_prob_explain}
In this section, we provide some illustrations of the problem formulation of SCKP defined in Section \ref{sec:knapsack} in the main paper. First of all, recall that the classical Minimum Cost Submodular Cover (MCSC) studied in previous work \citep{iyer2013submodular,crawford2019efficient} is defined as $\arg\min\{c(S): f(S)\geq\tau\}$ where $c:2^U\rightarrow \mathbb{R}$ is a modular, positive cost function. In our setting, we also want to ensure a balanced budget allocation across different partitions. Therefore, one of the definition of our problem should be $\arg\min\{c(S): f(S)\geq\tau, c(S\cap U_j)\leq p_j c(S), \forall j\in[N]\}$. 

However, the problem defined above can have feasibility issues in many cases. In particular, the constraint of $c(S\cap U_j)\leq p_jc(S)$ for each $j\in[N]$ can be really hard to satisfy, and can even render the problem infeasible. For example, if we set $p_j=1/N$ for each $j\in[N]$, and that for each $s\in U_{j_1}$ $c(s)=\pi$, and each $s\in U_{j_2}$, $c(s)=1$. From the definition of $(P1)$, we can see that there is no subset $S\subseteq U$ that satisfies $c(S\cap U_j)\leq p_jc(S)$ for each $j\in[N]$. 

 To solve this feasibility issue, we can relax the constraint on the balanced solution such that it can be slightly broken by the cost of a single element. Let us define $c_j=\max\{c(s):s\in U_j\}$ to be the maximum singleton cost within the partition $U_j$. It then follows that the definition of the relaxed problem should be $\arg\min\{c(S): f(S)\geq\tau, c(S\cap U_j)\leq p_j c(S)+c_j, \forall j\in[N]\}$

For notation simplicity, we use $(P1)$ to denote this problem, i.e.,
\begin{align}
\label{eqn:problem_p1}
    (P1): \qquad &\min_{S\subseteq U}c(S)\nonumber\\ 
    & f(S)\geq\tau\nonumber\\
    &c(S\cap U_j)\leq p_jc(S)+c_j,\qquad \forall j\in[N].
\end{align}
 To solve this problem, we can slightly relax the constraint on the cost by introducing another variable $\mu$ to the constraint, i.e., $c(S\cap U_j)\leq p_j\mu c(S)$ for each $j\in[N]$. Notice that here we also want to minimize the level of breaking the constraint, to do that, we replace the objective function from minimizing $c(S)$ to $\mu c(S)$ in the optimization problem, Next, by replacing the term $\mu c(S)$ with $v$, we obtain the definition of the SCKP problem. Additionally, compared with the optimization problem defined in $(P1)$,  the problem defined in SCKP in (\ref{eqn:opt_prob_knapsack}) preserves the feasibility as long as the threshold $\tau$ satisfies $f(U)\geq\tau$.

\begin{align}
\label{eqn:problem_p2}
    (P2): \qquad &\min_{S\subseteq U}v(S)\nonumber\\ 
    & f(S)\geq\tau\nonumber\\
    &c(S\cap U_j)\leq p_jv,\qquad \forall j\in[N],
\end{align}
 Let us define the optimal solution and optimal value of $(P1)$ as $OPT_{P1}$ and $v_{OPT}$ respectively, and we denote the optimal solution of SCKP defined as $OPT$. It is worth noting that the optimal solution in the optimization problem $(P1)$ has a similar quality to our SCKP problem $(P2)$. In particular, we have that the optimal value of $P1$ and $(P2)$ satisfies the following lemma:
\begin{lemma}
\label{lem:SCKP_prob_equivalence} The optimal value of $(P1)$ and $(P2)$ satisfies 
    \begin{enumerate}
        \item $ v_{OPT}\leq c(OPT_1)+\max_{i\in[N]}\frac{c_i}{p_i}$.
        \item When the optimal value of $(P2)$ satisfies that $ p_jv_{OPT}\leq c(U_j)$, we have that $c(OPT_1)\leq \alpha v_{OPT}+\sum_{i\in[N]}c_j$, where $\alpha=\sum_{j\in[N]}p_j$. 
    \end{enumerate}

\end{lemma}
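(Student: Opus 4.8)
The plan is to prove the two bounds separately; in each case I would manufacture an explicit feasible solution for one of the two problems out of the optimal solution of the other and then compare objective values.

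For the first inequality I would take $OPT_1 := OPT_{P1}$ and exhibit a single budget value $v$ for which the pair $(OPT_1,v)$ is feasible for $(P2)$. Since $OPT_1$ satisfies the relaxed balance constraint of $(P1)$, for every $j$ we have $c(OPT_1\cap U_j)\le p_j c(OPT_1)+c_j$, hence $c(OPT_1\cap U_j)/p_j \le c(OPT_1) + c_j/p_j \le c(OPT_1) + \max_{i\in[N]} c_i/p_i$. Setting $v := c(OPT_1) + \max_{i\in[N]} c_i/p_i$ therefore makes every constraint $c(OPT_1\cap U_j)\le p_j v$ hold, while $f(OPT_1)\ge\tau$ is inherited from $(P1)$-feasibility. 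Thus $OPT_1$ witnesses a feasible point of $(P2)$ with value $v$, and minimality gives $v_{OPT}\le v$, which is exactly part~1. This part is essentially bookkeeping.

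For the second inequality I would go the other direction, starting from the optimal solution $OPT$ of $(P2)$, so $c(OPT\cap U_j)\le p_j v_{OPT}$ for all $j$ and $f(OPT)\ge\tau$, and build a superset $S\supseteq OPT$ that is feasible for $(P1)$ by ``topping up'' each partition. Concretely, for each $j$ with $c(OPT\cap U_j)<p_j v_{OPT}$, I keep adding elements of $U_j\setminus OPT$ one at a time until the partition's cost first reaches $p_j v_{OPT}$ (partitions already at $p_j v_{OPT}$ are left untouched). The hypothesis $p_j v_{OPT}\le c(U_j)$ guarantees that as long as the current partition cost is $<p_j v_{OPT}\le c(U_j)$ there is still an unused element of $U_j$ to add, so the procedure reaches the target; and since each singleton in $U_j$ costs at most $c_j$, it overshoots by less than $c_j$. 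Hence $p_j v_{OPT}\le c(S\cap U_j)<p_j v_{OPT}+c_j$ for every $j$. Summing over $j$ and using $\sum_j p_j=\alpha$ yields $\alpha v_{OPT}\le c(S)<\alpha v_{OPT}+\sum_{j\in[N]} c_j$. Feasibility of $S$ for $(P1)$ then follows: $f(S)\ge f(OPT)\ge\tau$ by monotonicity, and for the balance constraint $c(S\cap U_j)<p_j v_{OPT}+c_j\le p_j c(S)+c_j$ since $c(S)\ge\alpha v_{OPT}\ge v_{OPT}$ under the normalization $\sum_j p_j=1$ (i.e.\ $\alpha\ge 1$). Therefore $c(OPT_1)\le c(S)<\alpha v_{OPT}+\sum_{j\in[N]} c_j$, which is part~2.

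The main obstacle is the second part, and within it two points need care. First, one must argue that the top-up procedure never exhausts a partition before reaching $p_j v_{OPT}$; this is precisely where the assumption $p_j v_{OPT}\le c(U_j)$ enters, through the observation that a current partition cost below $p_j v_{OPT}$ is in particular below $c(U_j)$ and so leaves an element of $U_j$ unselected. Second, one must verify the relaxed balance constraint of $(P1)$ for the constructed $S$, which requires combining the per-partition lower bounds $c(S\cap U_j)\ge p_j v_{OPT}$ into a lower bound on $c(S)$ and invoking $\alpha\ge 1$. The remaining steps—monotonicity for the coverage constraint and the summation giving the cost bound—are routine.
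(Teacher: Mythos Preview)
Your proposal is correct and follows essentially the same approach as the paper: for part~1 you plug $OPT_1$ into $(P2)$ with the budget $v=c(OPT_1)+\max_i c_i/p_i$, and for part~2 you top up each partition of the $(P2)$-optimum until its cost reaches $p_j v_{OPT}$, then use the resulting set as a feasible point for $(P1)$. The paper's proof is identical in structure (it even writes out the top-up loop explicitly, choosing the minimum-cost element each time, though any choice works for the bound); if anything, you are slightly more careful than the paper in spelling out why $c(S)\ge v_{OPT}$ via $\alpha=\sum_j p_j=1$ before invoking $p_j v_{OPT}\le p_j c(S)$.
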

    \begin{proof}
        We first prove the first part of the lemma. Since $OPT_1$ is feasible for problem $(P1)$, it must satisfy all constraints of $(P1)$. In particular, for each $j \in [N]$, we have:
        \begin{align}
           c(OPT_1\cap U_j)&\leq p_jc(OPT_1)+c_j\notag\\
           &=p_j\Big(c(OPT_1)+\frac{c_j}{p_j}\Big)\notag\\
           &\leq p_j\Big(c(OPT_1)+\max_{i\in[N]}\frac{c_i}{p_i}\Big) 
        \end{align}
         Setting $v = c(OPT_1) + \max_{i\in[N]}\frac{c_i}{p_i}$, we observe that $S = OPT_1$ satisfies the constraints of the SCKP problem defined in $(P2)$. Therefore, $v_{OPT} \leq v$, proving the first result.
         We prove the second result by constructing a set $A$ by the following procedure.

\begin{enumerate}
\item Initialize $A \leftarrow \text{OPT}_2$.
\item For $j = 1$ to $N$ do:
\begin{enumerate}
  \item While $c(A \cap U_j) \leq p_j v_{\text{OPT}}$:
  \begin{enumerate}
    \item $x \leftarrow \arg\min_{x' \in U_j \setminus A} c(x')$
    \item $A \leftarrow A \cup \{x\}$
  \end{enumerate}
\end{enumerate}
\end{enumerate}

Notice that for each $j\in[N]$, $c(OPT_2\cap U_j) \leq p_jv_{OPT}$, and that $c(U_j)\geq p_jv_{OPT}$. Therefore, upon the termination of the above procedure, set $A$ satisfies that
\begin{align}
p_jv_{OPT}\leq c(A\cap U_j)\leq p_jv_{OPT}+c_j
\end{align}

It then follows that $\sum_{j\in[N]}p_jv_{OPT}\leq\sum_{j\in[N]}c(A\cap U_j)$, and thus $\alpha v_{OPT}\leq c(A)$. Therefore, for each $j\in[N]$, we have that 
\begin{align}
c(A\cap U_j)\leq p_jv_{OPT}+c_j\leq  p_jc(A)+c_j
\end{align}
It implies that $A$ is feasible for problem $(P1)$, therefore, we can conclude 
$c(OPT_1)\leq c(A)\leq\sum_{j\in[N]}p_jv_{OPT}+c_j\leq \alpha v_{OPT}+\sum_{j\in[N]}c_j$.
    \end{proof}

Besides, we want to point out that another benefit of the definition of our problem is that it preserves the dual relationship between the SCKP problem and the SMKP problem, which is defined as $\arg\max {f(S): \sum_{s \in X \cap U_j} c(s) \leq p_j v}$. In particular, here the variable $v$ in SMKP also serves as the budget of the cost constraint. This property facilitates our application of converting theorems, which is used to convert bicriteria algorithms for SMF to algorithms for SCF. 
    \subsection{Proof of Theorem \ref{thm:SM_knapsack}}

\label{appdx:proof_SMKP}
In this portion of the appendix, we present the missing proofs of the theoretical guarantee for the \chunkgreedy algorithm with \algksm as the subroutine. The theorem statement is provided in Theorem \ref{thm:SM_knapsack}.
\SMKthm*

Let us denote the solution set after adding the $l$-th element in $j$-th subgroup during the $i$-th round of the outer for loop from Line \ref{line:knapsack_block_starts} to Line \ref{line:knapsack_block_ends} in Algorithm \ref{procd:ksm} as $S_{i,j,l}$, and we define the solution set after completing adding all the elements in the $j$-th subgroup during the $i$-th round in in Algorithm \ref{procd:ksm} as $S_{i,j}$. Before we prove Theorem \ref{thm:SM_knapsack}, we prove the result in the following lemma. 
 \begin{lemma}
 \label{lem:knapsack}
     Let $S_{i,j}$ be the solution set of the algorithm \blockksc in Algorithm \ref{procd:ksm} after completing adding all the elements in the $j$-th subgroup during the $i$-th round in in Algorithm \ref{procd:ksm}, then we would get that 
     \begin{align*}
         f(S_{i,j})-f(S_{i,j-1})\geq \Delta f(S_{i,j},OPT_j)
     \end{align*}
     where $OPT_j:=OPT\cap U_j$ is the intersection of the optimal solution set $OPT$ and the $j$-th partition $U_j$, and that 
     \begin{align*}
         B_j\leq c(S_{i,j}/S_{i,j-1})\leq 2B_j.
     \end{align*}
   Here $B_j:=p_jv$.
 \end{lemma}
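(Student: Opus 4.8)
\textbf{Proof proposal for Lemma \ref{lem:knapsack}.}

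The plan is to analyze a single block $i$ and a single partition $j$ in isolation, treating the "while" loop in Lines \ref{line:ksm_greedy}--(break) as an instance of the classical density-greedy algorithm for a single knapsack constraint with budget $B_j = p_j v$. First I would establish the cost bound $B_j \le c(S_{i,j}/S_{i,j-1}) \le 2B_j$. The lower bound is immediate: the loop only terminates once $c(A) \ge B_j$, and $S_{i,j}/S_{i,j-1}$ is exactly the set $A$ accumulated in that call. For the upper bound, note that before the final element $s$ was added we had $c(A) < B_j$, and the greedy step requires $c(s) \le B_j$ (the $\argmax$ in Line \ref{line:ksm_greedy} is restricted to $x$ with $c(x) \le B_j$); hence after adding $s$, $c(A) < B_j + c(s) \le 2B_j$. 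One subtlety to check: the restricted $\argmax$ must be nonempty at every iteration until termination — since $S_{i,j-1}$ together with the already-chosen part of $A$ does not yet cover $U_j$'s budget and elements of $U_j \setminus S$ with small cost remain available; I would note that if at some point no element of cost $\le B_j$ remains outside $S$, then all remaining elements of $U_j$ are already in $S$, in which case $c(S_{i,j} \cap U_j) = c(U_j)$ and the relevant inequalities involving $OPT_j \subseteq U_j$ degenerate harmlessly (this edge case may warrant a sentence, or an assumption that $c(U_j) \ge 2 p_j v$ as in the second part of Lemma \ref{lem:SCKP_prob_equivalence}).

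The main step is the marginal-gain inequality $f(S_{i,j}) - f(S_{i,j-1}) \ge \Delta f(S_{i,j}, OPT_j)$. Here I would adapt the standard analysis of greedy-by-density for knapsack. Let the elements added in this block-partition call be $s_1, \dots, s_m$ in order, with prefixes $A_0 = \emptyset \subseteq A_1 \subseteq \cdots \subseteq A_m = S_{i,j}/S_{i,j-1}$ and write $T_t := S_{i,j-1} \cup A_t$. At each step $t$, for every $a \in OPT_j \setminus T_{t-1}$ that is still a feasible choice (i.e. $c(a) \le B_j$ — which holds because $OPT_j \subseteq U_j$ and $c(OPT \cap U_j) \le p_j v = B_j$ by feasibility of $OPT$ for SMKP, so in particular each $c(a) \le B_j$), the greedy choice satisfies $\frac{\Delta f(T_{t-1}, s_t)}{c(s_t)} \ge \frac{\Delta f(T_{t-1}, a)}{c(a)}$. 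Summing the density inequality against the "budget mass" of $OPT_j$ and using submodularity to pass from $\sum_{a} \Delta f(T_{t-1},a)$ to $\Delta f(T_{t-1}, OPT_j)$, one gets $f(T_t) - f(T_{t-1}) \ge \frac{c(s_t)}{c(OPT_j)} \, \Delta f(T_{t-1}, OPT_j) \ge \frac{c(s_t)}{B_j} \, \Delta f(T_m, OPT_j)$, where the last step uses submodularity ($\Delta f(T_{t-1}, OPT_j) \ge \Delta f(T_m, OPT_j)$ since $T_{t-1} \subseteq T_m$) and $c(OPT_j) \le B_j$. Telescoping over $t = 1, \dots, m$ and using $\sum_t c(s_t) = c(A_m) \ge B_j$ yields
\begin{align*}
f(S_{i,j}) - f(S_{i,j-1}) = f(T_m) - f(T_0) \ge \frac{\sum_{t=1}^m c(s_t)}{B_j}\, \Delta f(T_m, OPT_j) \ge \Delta f(S_{i,j}, OPT_j),
\end{align*}
since $\Delta f(S_{i,j}, OPT_j) = \Delta f(T_m, OPT_j) \ge 0$ by monotonicity. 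This is exactly the claimed inequality.

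I expect the main obstacle to be the bookkeeping around which elements of $OPT_j$ remain "available" (not yet in $T_{t-1}$) at each greedy step: the clean density comparison only holds for $a \notin T_{t-1}$, whereas $\Delta f(T_{t-1}, OPT_j)$ counts the whole set $OPT_j$. The resolution is the standard one — elements of $OPT_j$ already contained in $T_{t-1}$ contribute zero marginal value beyond what is already counted, so restricting the sum on the right-hand side to $OPT_j \setminus T_{t-1}$ only decreases it, and $\Delta f(T_{t-1}, OPT_j \setminus T_{t-1}) = \Delta f(T_{t-1}, OPT_j)$; I would spell this out carefully since it is where an off-by-one or a missed monotonicity step would break the argument. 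A secondary point to handle cleanly is ensuring the density inequality is valid at the very last step $t = m$ (where $s_m$ might have smaller density than some unused $a \in OPT_j$ only if $c(a) > B_j$, but that cannot happen here because $c(a) \le c(OPT_j) \le B_j$), so no "last element" correction term is needed, unlike the general single-knapsack greedy analysis.
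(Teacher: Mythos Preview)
Your proposal is correct and follows essentially the same approach as the paper's proof: both use the per-step density inequality from Line~\ref{line:ksm_greedy}, sum over $o\in OPT_j$, apply submodularity to pass from $\sum_o \Delta f(\cdot,o)$ to $\Delta f(\cdot,OPT_j)$ and from $T_{t-1}$ to $T_m$, telescope, and finish with $c(OPT_j)\le B_j\le c(A_m)$; the cost bound is obtained identically from the stopping condition and the restriction $c(x)\le B_j$. If anything, you are more careful than the paper about the edge cases (availability of $a\in OPT_j\setminus T_{t-1}$, nonemptiness of the $\argmax$), which the paper's proof glosses over.
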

\begin{proof}
     Let $A_{i,j,l}$ be the set $A$ after adding the $l$-th element to the subgroup $j$ in the iteration $i$, and let $s_{i,j,l}$ be the $l$-th element $s$ added to the set $A$ during the $i$-th outer loop in the subgroup $j$ in Algorithm \ref{procd:ksm}. It then follows that for any element $o\in OPT_j$,
     \begin{align*}
         \frac{\Delta f(S_{i,j-1}\cup  A_{i,j,l-1},s_{i,j,l})}{c(s_{i,j,l})}\geq\frac{\Delta f(S_{i,j-1}\cup  A_{i,j,l-1},o)}{c(o)}.
     \end{align*}
     By rearranging the above inequality, we can get
     \begin{align*}
         c(o)\Delta f(S_{i,j-1}\cup  A_{i,j,l-1},s_{i,j,l})\geq c(s_{i,j,l})\Delta f(S_{i,j-1}\cup  A_{i,j,l-1},o).
     \end{align*}
     Summing over all $o\in OPT_j$ and by submodularity, we can get 
     \begin{align*}
         c(OPT_j)\Delta f(S_{i,j-1}\cup  A_{i,j,l-1},s_{i,j,l})\geq c(s_{i,j,l})\Delta f(S_{i,j-1}\cup  A_{i,j,l-1},OPT_j)
     \end{align*}
     Let us denote the total number of iterations in Algorithm \ref{procd:ksm} as $T$. By submodularity, it then follows that 
      \begin{align*}
         c(OPT_j)\Delta f(S_{i,j-1}\cup  A_{i,j,l-1},s_{i,j,l})\geq c(s_{i,j,l})\Delta f(S_{i,j-1}\cup  A_{i,j,T},OPT_j).
     \end{align*}
     Since 
     \begin{align*}
         \Delta f(S_{i,j-1}\cup  A_{i,j,l-1},s_{i,j,l})=f(S_{i,j-1} \cup A_{i,j,l})-f(S_{i,j-1} \cup A_{i,j,l-1}),
     \end{align*}
     we can sum over all $l\in[T]$ and get 
     \begin{align*}
         c(OPT_j)\{f(S_{i,j-1} \cup A_{i,j,T})-f(S)\}\geq c(A_{i,j,T})\Delta f(S_{i,j-1} \cup  A_{i,j,T},OPT_j).
     \end{align*}
     Since $S_{i,j-1} \cup A_{i,j,T} = S_{i,j}$, we have
\begin{align*}
         c(OPT_j)\{f(S_{i,j})-f(S)\}\geq c(A_{i,j,T})\Delta f(S_{i,j},OPT_j).
     \end{align*}
     By the stopping condition of Algorithm \ref{procd:ksm}, we have that $c(A_{i,j,T-1})\leq B_j$, therefore,
     \begin{align*}
         B_j\leq c(A_{i,j,T})\leq 2B_j
     \end{align*}
     Since $c(OPT_j)\leq B_j$, it then follows that 
     \begin{align*}
       f(S_{i,j})-f(S)\geq\Delta f(S_{i,j},OPT_j).
     \end{align*}
We can then conclude the proof by the fact that $A_{i,j,T}= S_{i,j}/S_{i,j-1}$.
 \end{proof}

 With Lemma \ref{lem:knapsack}, we can prove the result of Theorem \ref{thm:SM_knapsack} as follows.
     \begin{proof}

     Let us denote the solution set after completing the $i$-th round in Algorithm \ref{procd:ksm} as $S_{i}$, i.e., $S_i=S_{i,N}$. By the result in Lemma \ref{lem:knapsack}, it then follows that
     \begin{align*}
         f(S_{i,j})-f(S_{i,j-1})\geq\Delta f(S_{i,j},OPT_j)\geq \Delta f(S_{i},OPT_j),
     \end{align*}
     where the second inequality follows from submodularity. Summing over all $j\in[N]$, we would get
     \begin{align*}
         f(S_i)-f(S_{i-1})\geq \sum_{j\in[N]}\Delta f(S_{i},OPT_j)\geq \Delta f(S_{i},OPT)
     \end{align*}
     Therefore, $f(S_i)\geq\frac{f(S_{i-1})+f(OPT)}{2}$. By induction, we have that the final output solution set $S$ satisfies 
     \begin{align*}
         f(S)=f(S_\phi)\geq(1-\epsilon)f(OPT).
     \end{align*}
     Notice that $S=\cup_{i=1}^\phi S_i/S_{i-1}$. From Lemma \ref{lem:knapsack}, we can get $c((S_i/S_{i-1})\cap U_j)\leq 2p_jv$. Therefore 
     $c(S\cap U_j)=\sum_{i=1}^\phi c((S_i/S_{i-1})\cap U_j)\leq\frac{2\ln\frac{1}{\epsilon}}{\ln 2}p_jv$.
 \end{proof}

\subsection{Theoretical Analysis of Algorithm \ref{alg:convk}}
\label{appdx:convk}
 In this portion of the appendix, we present the converting algorithm \convk for transforming an algorithm for SMKP to an algorithm for SCKP. The pseudocode is described in Algorithm \ref{alg:convk}. The theoretical guarantee of \convk is provided in Theorem \ref{thm:convk}.

 \begin{theorem}
    \label{thm:convk}
    Suppose that we have an algorithm \algksm for SMKP, and given budget $v$, \algksm is guaranteed to return a set $S$ such that
    $f(S)\geq \gamma f(OPT_{SM})$ and $c(S\cap U_j)\leq\beta p_jv$, in time $T(n)$, where $OPT_{SM}$ is the optimal solution of SMKP.
    Then the algorithm \convk using \algksm as a subroutine returns a set $S$ and a value $v_S$ in time $\mathcal{O}(\log_{1+\alpha}(\frac{c_{\max}n}{c_{\min}})T(n))$ such that $v_S\leq \beta(1+\alpha) v_{OPT}$, 
    $c(S\cap U_j)\leq  p_jv_S$ and 
    $f(S)\geq \gamma\tau.$ Here $v_{OPT}$ is the optimal value of SCKP. $c_{\max}$ and $c_{\min}$ are the maximum and minimum values of the cost of a single element respectively.
\end{theorem}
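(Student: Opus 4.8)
The plan is to mimic the structure of the proof of Theorem \ref{thm:convert} (the \convr converting theorem), but with the simplification that the guarantee $f(S) \geq \gamma f(OPT_{SM})$ now holds deterministically rather than only in expectation, so no concentration/Markov argument is needed and a single trial per guess suffices. First I would set up the guessing loop: \convk iterates over guesses $g = (1+\alpha)^0, (1+\alpha)^1, (1+\alpha)^2, \ldots$ for the optimal budget $v_{OPT}$, and for each guess runs \algksm on the SMKP instance with budget $g$, stopping the first time the returned set $S$ satisfies $f(S) \geq \gamma\tau$. The key structural fact is that once $g$ enters the window $v_{OPT} \leq g \leq (1+\alpha) v_{OPT}$, the algorithm is guaranteed to stop.

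Next I would prove that this stopping event is triggered in that window. Let $OPT$ be the optimal solution of SCKP with optimal value $v_{OPT}$; since $OPT$ satisfies $c(OPT \cap U_j) \leq p_j v_{OPT} \leq p_j g$ for all $j$, it is feasible for the SMKP instance with budget $g$. Hence $f(OPT_{SM}) = \max\{f(S) : c(S \cap U_j) \leq p_j g\} \geq f(OPT) \geq \tau$, where the last inequality uses the SCKP feasibility constraint $f(OPT) \geq \tau$ and monotonicity is not even needed here. By the assumed guarantee of \algksm, the returned set $S$ satisfies $f(S) \geq \gamma f(OPT_{SM}) \geq \gamma\tau$, so the stopping condition is met. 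Therefore \convk terminates at some guess $g^\star \leq (1+\alpha) v_{OPT}$.

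Then I would read off the three output guarantees. At termination with guess $g^\star$, the bicriteria guarantee of \algksm gives $c(S \cap U_j) \leq \beta p_j g^\star \leq \beta p_j (1+\alpha) v_{OPT}$ for each $j$. Setting $v_S := \beta (1+\alpha) v_{OPT}$ (or more precisely $v_S := \beta g^\star$, which is at most this), we get simultaneously $v_S \leq \beta(1+\alpha) v_{OPT}$, and $c(S \cap U_j) \leq \beta p_j g^\star = p_j v_S$, matching the required per-partition constraint; and $f(S) \geq \gamma\tau$ from the stopping condition. For the runtime: the initial guess should be $g_0 = c_{\min}$ (the smallest possible cost, since any nonempty feasible set has cost at least $c_{\min}$) and the guesses increase geometrically until reaching roughly $c(U) \leq c_{\max} n$, so the number of guesses is $O(\log_{1+\alpha}(c_{\max} n / c_{\min}))$; each guess invokes \algksm once at cost $T(n)$, giving total runtime $O(\log_{1+\alpha}(c_{\max} n / c_{\min}) \, T(n))$. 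Plugging in $\gamma = 1-\epsilon$, $\beta = \frac{2\ln(1/\epsilon)}{\ln 2}$, and $T(n) = O(n^2)$ from Theorem \ref{thm:SM_knapsack} yields the Corollary's stated bounds.

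The main obstacle — such as it is — is bookkeeping the exact form of the runtime bound and confirming that the range of geometric guesses is correctly $[c_{\min}, O(c_{\max} n)]$ rather than something involving $|OPT|$; one must be slightly careful that $v_{OPT}$ itself is upper-bounded in terms of $c_{\max} n$ (e.g. via $v_{OPT} \leq c(U) / \min_j p_j$ or the feasibility argument that $S = U$ achieves $f \geq \tau$), so that the loop provably reaches the good window before exhausting the guess range. Everything else is a direct and essentially routine adaptation of the \convr analysis with the probabilistic layer stripped out.
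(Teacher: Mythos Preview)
Your proposal is correct and follows essentially the same approach as the paper: a geometric guessing loop for $v_{OPT}$, the observation that once the guess $g$ lands in $[v_{OPT},(1+\alpha)v_{OPT}]$ the SCKP optimum is feasible for SMKP so $f(OPT_{SM})\geq\tau$ and the stopping condition fires, and then reading off $v_S=\beta g^\star$ and the per-partition cost bound. The paper's proof is in fact slightly terser than yours (it does not explicitly discuss the upper end of the guess range), so your bookkeeping is if anything more careful.
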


\begin{algorithm}[t]
\caption{\convk}
\label{alg:convk}
\textbf{Input}: $\alpha$, $\epsilon$\\
\textbf{Output}: $S\subseteq U$
\begin{algorithmic}[1]
    \State $v_g\gets (1+\alpha)c_{\min}$, $S\gets\emptyset$
    \While{$f(S)<\gamma\tau$}
        \State $S\gets$ Algorithm for SMKP run with budget parameter $v=v_g$
        \State $v_g\gets (1+\alpha)v_g$
    \EndWhile
    \State $v_S=\beta v_{g}$
    \State \textbf{return} $S$, $v_S$
\end{algorithmic}
\end{algorithm}
 \begin{proof}
     
    Let $OPT$ be the optimal solution to the instance of SCKP.
    Consider the iteration of \convk where $v_g$ has just increased above $v_{OPT}$, i.e., $v_{OPT}\leq v_{g}\leq (1+\alpha)v_{OPT}$. Then we run \algksm with budget $v_{OPT}\leq v_{g}\leq (1+\alpha)v_{OPT}$. Then by the assumptions on \algksm we have that
    \begin{align}
    \label{eqn:knapsack_converting}
        f(S)\geq \gamma f(OPT_{SM}). 
    \end{align}
    Notice that the optimal solution $OPT$ for SCKP satisfies that $c(OPT\cap U_j)\leq p_jv_{OPT}\leq p_jv_{g}$. It then follows that $OPT$ is feasible for the SMKP problem with input $v_g$. Let us denote the optimal solution of SMKP as $OPT_SM$. Then we have that $$f(OPT_{SM})\geq f(OPT).$$ 
    Since $OPT$ is the optimal solution for SCKP, then 
    $$f(OPT_{SM})\geq f(OPT)\geq\tau.$$ 
    Combining the above inequality with the result in (\ref{eqn:knapsack_converting}), we can get that $f(S)\geq \gamma\tau$. Therefore, the algorithm stops before $v_g$ reaches $(1+\alpha)v_{OPT}$. 
    The cost of each partition would satisfy 
    \begin{align*}
        c(S\cap U_j)\leq\beta p_j v_g\leq(1+\alpha)\beta p_j v_{OPT}
    \end{align*}
   The proof is completed by setting $v_S=\beta v_{g}$.
 \end{proof}

\section{Appendix for Section \ref{sec:fair}}
\label{appdx:fair}

In this section, we present the missing proofs of the theoretical results of \fairgreedy from Section \ref{sec:fair}. The theorem statement is provided in Theorem \ref{thm:fair_greedy}.

\fairthm*

\begin{proof}
    Denote the solution set after the $i$-th chunk as $S_i$, and we denote the subset $B$ after the $i$-th chunk as $ B_i$, then it follows that $S_i=S_{i-1}\cup\{B_i\}$. We can prove the following lemma.
    
    \begin{restatable}{lemma}{lemfair}
    \label{lem:fair}
    For any $i\leq\frac{\ln1/\epsilon}{\ln 2}$, the solution set $S_i$ satisfies that 
        \begin{align*}
        f(S_i)-f(S_{i-1})\geq f(OPT)-f(S_{i})
    \end{align*}
    and that 
     \begin{align*}
        &|S_i\cap U_c|\leq u_ci,\\
        &\sum_{c\in [N]}\max\{|S_i\cap U_c|,l_ci\}\leq ki.
    \end{align*}
    \end{restatable}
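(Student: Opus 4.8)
The plan is to handle the two claims of Lemma~\ref{lem:fair} separately: the feasibility bound by aggregating over the blocks, and the marginal-gain recursion by a block-local matroid-greedy argument. Throughout, write $B_j$ for the set $B$ produced during the $j$-th iteration of the outer loop of Algorithm~\ref{alg:fair}, so that $S_i = B_1 \cup \cdots \cup B_i$.

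For the feasibility bound, note that each $B_j$ lies in $\mathcal{M}_{fair}$ at every stage of its construction, since the while-loop only adds an element $s$ when $B \cup \{s\} \in \mathcal{M}_{fair}$; hence $|B_j \cap U_c| \le u_c$ for all $c$ and $\sum_c \max\{|B_j \cap U_c|, l_c\} \le k$. From $S_i = B_1 \cup \cdots \cup B_i$ we get $|S_i \cap U_c| \le \sum_{j=1}^i |B_j \cap U_c| \le u_c i$. For the second inequality I would use the elementary fact $\max\{\sum_j a_j, \sum_j b_j\} \le \sum_j \max\{a_j, b_j\}$: with $a_j = |B_j \cap U_c|$ and $b_j = l_c$ this gives $\max\{|S_i \cap U_c|, l_c i\} \le \sum_{j=1}^i \max\{|B_j \cap U_c|, l_c\}$, and summing over $c$ and exchanging the order of summation yields $\sum_c \max\{|S_i \cap U_c|, l_c i\} \le \sum_{j=1}^i \sum_c \max\{|B_j \cap U_c|, l_c\} \le k i$. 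This argument never uses disjointness of the blocks, so it is robust to an element being re-selected in a later block.

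For the recursion, fix a block $i$ and list $B_i = \{x_1,\dots,x_m\}$ in the order the elements were selected, writing $B_i^{(t)} = \{x_1,\dots,x_t\}$. Because the while-loop terminates only when no further element can be added, $B_i$ is a basis of $\mathcal{M}_{fair}$, so $m = \mathrm{rank}(\mathcal{M}_{fair}) \ge |OPT|$ (as $OPT \in \mathcal{M}_{fair}$). The key step is the classical matroid exchange argument from the analysis of greedy for matroid-constrained submodular maximization: there is an indexing $o_1,\dots,o_m$ consisting of the elements of $OPT$ together with $m-|OPT|$ ``null'' slots, such that $OPT \subseteq \{o_1,\dots,o_m\}$ and for each $t$ we have $B_i^{(t-1)} \cup \{o_t\} \in \mathcal{M}_{fair}$ (vacuous at a null slot); this comes from a matching in the exchange graph of the basis $B_i$ that saturates $OPT$. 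Given such an indexing, the greedy choice at step $t$, which maximizes $\Delta f(S_{i-1} \cup B_i^{(t-1)}, \cdot)$ over all $s$ with $B_i^{(t-1)} \cup \{s\} \in \mathcal{M}_{fair}$, yields $\Delta f(S_{i-1} \cup B_i^{(t-1)}, x_t) \ge \Delta f(S_{i-1} \cup B_i^{(t-1)}, o_t)$. Summing over $t$, the left side telescopes to $f(S_i) - f(S_{i-1})$; on the right, submodularity gives $\Delta f(S_{i-1} \cup B_i^{(t-1)}, o_t) \ge \Delta f(S_i, o_t)$, and then subadditivity of marginals plus monotonicity give $\sum_t \Delta f(S_i, o_t) \ge f(S_i \cup \{o_1,\dots,o_m\}) - f(S_i) \ge f(S_i \cup OPT) - f(S_i) \ge f(OPT) - f(S_i)$. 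Chaining these gives $f(S_i) - f(S_{i-1}) \ge f(OPT) - f(S_i)$, as claimed; the theorem then follows by unrolling $f(S_i) \ge \tfrac12\bigl(f(S_{i-1}) + f(OPT)\bigr)$ over the $\tfrac{\ln(1/\epsilon)}{\ln 2}$ blocks, with the query bound coming from $O(n)$ evaluations per greedy step, at most $\kappa$ steps per block, and $\tfrac{\ln(1/\epsilon)}{\ln 2}$ blocks.

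The main obstacle is the exchange step: one must ensure that the \emph{maximal} greedy block $B_i$ — typically strictly larger than $OPT$ — can be matched element-by-element to $OPT$ (padded with null slots) so that each optimal element is available exactly at the greedy prefix preceding its matched partner. This is precisely where we need $B_i$ to be a basis of $\mathcal{M}_{fair}$ (ensured by the while-loop condition) rather than an arbitrary independent set, and where we rely on the fact that $\mathcal{M}_{fair}$ is a matroid. Everything else is routine submodularity and monotonicity bookkeeping.
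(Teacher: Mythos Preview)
Your proposal is correct and follows essentially the same approach as the paper: the feasibility bounds are obtained by summing the per-block constraints via the inequality $\max\{\sum_j a_j,\sum_j b_j\}\le\sum_j\max\{a_j,b_j\}$ (the paper does this inductively using the two-term case $\max\{a+b,c+d\}\le\max\{a,c\}+\max\{b,d\}$), and the recursion is obtained by the same matroid-exchange/greedy comparison against $OPT$. If anything, you are more careful than the paper about the technical points---extending $OPT$ with null slots so that it matches the basis $B_i$, and noting that the argument is insensitive to blocks overlapping.
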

    \begin{proof}
    Let us denote the solution set after adding the $j$-th element to the solution set $S$ during the $i$-th chunk as $S_{i,j}$. In addition, we denote the $j$-th element adding to $B_i$ as $b_j$, and that $B_{i,j}=(b_1,\ldots,b_{j-1})$. By the definition of matroid, there exists a mapping from the set $B$ to the optimal solution $OPT=\{o_1,...,o_\kappa\}$ s.t. $B_{i,j}\cup\{o_j\}\in\mathcal{P}_{fair}$. 
\begin{align*}
    f(S_{i,j})-f(S_{i,j-1})\geq f(S_{i,j-1}\cup\{o_i\})-f(S_{i,j-1})\geq \Delta f(S_{i,\kappa},o_i)
\end{align*}
    Summing over all $j$, it follows that
    \begin{align*}
        f(S_{i,\kappa})-f(S_{i,0})\geq \sum_{i=1}^\kappa\Delta f(S_{i,\kappa},o_i)\geq f(OPT)-f(S_{i,0}).
    \end{align*}
    Since $f(S_{i,\kappa})=f(S_{i+1,0})=f(S_{i+1})$,
    \begin{align*}
       f(S_{i+1})-f(S_i)\geq f(OPT)-f(S_{i+1}) 
    \end{align*}
    Next, we prove the result on the size of the solution set $S$. When $i=1$, $S_1=B_1$. By the fact that $B_1\in\mathcal{M}_{fair}$, we have that the result in the lemma holds. Let us assume that the result in the lemma holds for $i$. Then for $i+1$, we have that $|S_{i+1}\cap U_c|=|S_{i}\cup B_i\cap U_c|\leq|S_{i}\cap U_c|+|B_i\cap U_c|\leq u_c(i+1)$. For the total cardinality constraint,
    \begin{align*}
\max\{|S_{i+1}\cap U_c|,l_c(i+1)\}&\leq\max\{|S_{i}\cap U_c|+|B_{i}\cap U_c|,l_c(i+1)\}\\
&\leq\max\{|S_{i}\cap U_c|,l_ci\}+\max\{|B_{i}\cap U_c|,l_c\}
    \end{align*}
    where the first inequality comes from the fact that $S_{i+1}=S_i\cup B_i$. The second inequality is due to the inequality of $\max\{a+b,c+d\}\leq\max\{a,c\}+\max\{b,d\}$.
    It then follows that 
    \begin{align*}
\sum_{c\in[N]}\max\{|S_{i+1}\cap U_c|,l_c(i+1)\}\leq\sum_{c\in[N]}\max\{|S_{i}\cap U_c|,l_ci\}+\sum_{c\in[N]}\max\{|B_{i}\cap U_c|,l_c\}\leq \kappa(i+1)
    \end{align*}
\end{proof}

Next, by leveraging this Lemma \ref{lem:fair}, we can prove the results in Theorem \ref{thm:fair_greedy}. 
    Denote $\phi=\frac{\ln1/\epsilon}{\ln 2}$, then by the Lemma, we have that 
    \begin{align*}
        &|S_\phi\cap U_c|\leq u_c\phi,\\
        &\sum_{c\in [N]}\max\{|S_\phi\cap U_c|,l_c\phi\}\leq k\phi.
    \end{align*}
    Since $f(S_{i})\geq\frac{f(OPT)+f(S_{i-1})}{2}$, by induction, it follows that 
    \begin{align*}
        f(S_\phi)\geq (1-\frac{1}{2^{\phi}})f(OPT)=(1-\epsilon)f(OPT). 
    \end{align*}
\end{proof}

\section{Submodular Maximization under Partition Matroid Constraint}
In the previous sections and in the main paper, we demonstrated that block-greedy algorithms can be effective for solving submodular cover problems under partition-based constraints. Interestingly, this block-greedy approach also proves to be valuable in designing algorithms for submodular maximization problems. In this section, we introduce \chunkgreedy, a novel algorithmic framework tailored for submodular maximization subject to a partition matroid constraint.

\chunkgreedy proceeds by greedily adding blocks—i.e., sets of elements—to the solution. Our algorithms improve upon existing methods in both solution quality and query complexity.

This section is structured as follows. We first present our main results in Sections \ref{section:framework}, \ref{sec:mono}, and \ref{sec:nonmono}. Section \ref{section:framework} introduces the block-greedy framework that underpins the algorithms discussed throughout. Then, we address two specific settings: monotone submodular maximization with a partition matroid constraint (monotone SMP) in Section \ref{sec:mono}, and nonmonotone submodular maximization with a partition matroid constraint (nonmonotone SMP) in Section \ref{sec:nonmono}.

Finally, we include additional content and discussions in Section \ref{appdx:hardness} and Section \ref{appdx:mono_result_discuss}. Section \ref{appdx:hardness} provides the missing discussion and proof of Theorem \ref{thm:chunkgreedy} from Section \ref{sec:tight}, while Section \ref{appdx:mono_result_discuss} elaborates on omitted content from Section \ref{sec:mono_alg_descript}.

\subsection{Block Greedy Framework}
\label{section:framework}
The \chunkgreedy algorithm serves as the core framework for most of our proposed algorithms, except for the \fairgreedy algorithm used in the Fair Submodular Cover problem. \chunkgreedy repeatedly runs a greedy subroutine. In each of the subroutines, a ``block'' of elements is added into the final solution from each of the partitions of the universe $U$. The value of the parameter $\phi$ and the subroutine \algsub are problem-specific and vary depending on the submodular optimization problem being solved.  The pseudocode for \chunkgreedy is in Algorithm \ref{alg:partition_greedy_general}. 
In the following part, we introduce the subroutine algorithm for different problems and present the analysis for these proposed algorithms. 


\begin{algorithm}[t] \caption{\chunkgreedy}
    \begin{algorithmic}[1]
        \State \textbf{Input: } Partitions of the ground set $U_1, U_2,...,U_N$, problem definition and parameters
        \State \textbf{Output: }$S\subseteq U$
        \State $S\gets\emptyset$
    \For{$i=1$ \textbf{to} $\phi$ \label{line:outer_for_loop_starts}}
        \For{$j=1$ \textbf{to} $N$}
    \State \algsub($S$, $i$, $j$)
    \EndFor
    \EndFor\label{line:outer_for_loop_ends2}
        \State\textbf{return }$S$
    \end{algorithmic}
\label{alg:partition_greedy_general}
\end{algorithm}
\vspace{-1em}

\subsection{Monotone \prob}
\label{sec:mono}
We first consider the classic problem of Monotone Submodular Maximization with a Partition Matroid Constraint (SMP). 
Given positive integers $k_1,\cdots k_N$ such that $k_j\leq|U_j|$ for any $j\in[N]$, the partition matroid constraint is defined as $\mathcal{P}=\{S\subseteq U:|S\cap U_j|\leq k_j, \forall j\in[N]\}$. The monotone SMP is defined to find the set $\arg\max_{S\in\mathcal{P}}f(S)$ for a monotone, submodular objective function $f$. Before presenting our algorithm, we illustrate the intuitions and benefits of our proposed algorithm in contrast to the standard greedy algorithm through a tight hardness result. 
\subsubsection{Tight Examples}
\label{sec:tight}
The standard greedy algorithm iteratively selects the element with the highest marginal gain while maintaining feasibility. It is well-known that this algorithm achieves a $1/2$-approximation ratio for monotone submodular maximization with general matroid constraint.
Despite partition matroids being a simpler special case, in the theorem below, we prove this ratio is tight by constructing a class of instances where the standard greedy algorithm cannot achieve an approximation ratio better than $1/2$.

\begin{theorem}
\label{thm:hardness}
    For any given positive integers $k_1,...,k_N$, there exists an instance of monotone \prob with size constraints $k_1,...,k_N$, i.e., $$\max_{S\in\mathcal{P}}f(S)$$ where $\mathcal{P}:=\{S\subseteq U:|S\cap U_i|\leq k_i\}$, such that the best approximation ratio achievable by the standard greedy algorithm is $1/2$.
\end{theorem}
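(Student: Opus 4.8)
The plan is to construct an explicit tight instance. I would take a single partition ($N=1$) with cardinality bound $k=k_1$ suffices to illustrate the core obstruction, but since the theorem asks for arbitrary $k_1,\dots,k_N$, I would build the instance as a disjoint union of $N$ independent copies of the single-partition gadget, one per partition $U_j$ with its own bound $k_j$; submodularity and monotonicity are preserved under disjoint union, and the greedy algorithm acting on the union behaves independently on each block, so the ratio on the union equals the worst ratio over the blocks. Hence it is enough to exhibit, for each $k$, a single-partition instance where greedy is forced to a $1/2$ ratio.

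For the single-partition gadget I would use a coverage-type (or directly a modular-plus-threshold) construction. A clean choice: let the ground set contain $k$ "bad" elements $b_1,\dots,b_k$ and $k$ "good" elements $g_1,\dots,g_k$. Arrange weights so that $f(\{b_i\}) = 1+i\epsilon$ for a tiny $\epsilon>0$ and $f(\{g_i\})=1$, making greedy prefer each $b_i$ over any $g_j$ on the first comparison; but set up overlaps so that the $b_i$'s are "redundant" with each other while the $g_i$'s are disjoint and also each $b_i$ overlaps heavily with exactly one $g_i$. Concretely, a set-cover realization: let there be $k$ disjoint "target" blocks $T_1,\dots,T_k$ each of large size $M$, plus $k$ disjoint singleton elements $e_1,\dots,e_k$; good element $g_i$ covers $T_i$, bad element $b_i$ covers $\{e_1,\dots,e_i\}$ (nested) — wait, to force greedy to take all bad elements I instead make each $b_i$ cover $T_i$ together with one extra fresh element, slightly beating $g_i$, but once some $b_j$ is chosen the remaining $b_i$ add almost nothing because... this needs the $b_i$ to share coverage. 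The cleanest version: $b_i$ covers $T_i \cup \{x\}$ where $x$ is one common fresh element shared by all $b_i$, with $|T_i|=M$; then $f(b_i)=M+1 > M = f(g_i)$, so greedy's first pick is some $b_i$, gaining $M+1$; but every subsequent $b_j$ now has marginal gain only $M$ (the shared $x$ is gone), tying with the $g_j$'s, and by tie-breaking greedy can be led to take $b_2,\dots,b_k$, ending with $f = M+1 + (k-1)M = kM+1$. The optimum takes $g_1,\dots,g_k$ for $f = kM$, but that's not beating greedy — I have the direction backwards.

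So I would flip it: make the bad elements individually attractive but collectively worthless, and the good elements individually tied but collectively great. Let $b_1,\dots,b_k$ all cover the \emph{same} block $T$ of size $M{+}1$, and let $g_1,\dots,g_k$ cover disjoint blocks $T_1,\dots,T_k$ each of size $M$. Then $f(b_i) = M+1 > M = f(g_i)$, greedy picks a $b_i$ first (gain $M+1$); now every other $b_j$ has marginal gain $0$ while each $g_j$ has gain $M$, so greedy picks $g$'s for the rest: final value $M+1+(k-1)M = kM+1$. Optimum is $\{g_1,\dots,g_k\}$ with value $kM$ — still not a $2\times$ gap. The issue is that greedy's first mistake only costs one slot. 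To get the full factor $2$ I need \emph{every} greedy pick to be a mistake, which requires the instance where after picking $b_1,\dots,b_i$ the marginal gains of all remaining $b$'s still (barely) exceed those of the $g$'s — the classic bipartite example: ground set $= \{b_1,\dots,b_k\}\cup\{g_1,\dots,g_k\}$, with $f(A) = \sum$ of weights of "ground atoms" covered, where atom $a_i$ is covered by both $b_i$ and $g_i$, atom $a_i'$ is covered only by $g_i$; set $w(a_i) = 1$, $w(a_i')=1-\epsilon$. Then $f(b_i)=1$, $f(g_i)=2-\epsilon$ — greedy would take $g$'s. Reverse once more: make $w(a_i)=1$, $w(a_i')=\epsilon$, and add to each $b_i$ a private atom of weight $\epsilon$: $f(b_i)=1+\epsilon$, $f(g_i)=1+\epsilon$, tie, and crucially the marginal structure stays tied throughout because $b$'s and $g$'s are disjoint in their private atoms and share the $a_i$'s one-to-one. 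Greedy (bad tie-breaking) picks $b_1,\dots,b_k$: value $k(1+\epsilon)$... no wait $b_i$'s are pairwise disjoint here so that's fine, value $\approx k$. Optimum $g_1,\dots,g_k$: value $\approx k$. Equal again.

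The honest resolution, which I'd commit to in the final writeup, is the standard matroid tight example adapted: have $2k$ elements where $b_i$ and $g_i$ each have value $1$, $b_i\cup b_j$ has value $1$ for all $i\ne j$ (bad elements mutually completely redundant) while the $g_i$ are fully additive, AND arrange that $b_i$ and $g_j$ interact so greedy's tie-breaking picks $b_1$ first (value $1$), then all remaining marginal gains are: $b_j$ gives $0$, $g_j$ gives $1$ — so greedy gets $1 + (k-1)\cdot 1 = k$, opt gets $k$. Damn. The real factor-$2$ example must have greedy commit \emph{all} $k$ slots before realizing: ground set $\{b_i\}_{i=1}^k \cup \{g_i\}_{i=1}^k$ with a submodular $f$ such that greedily $b_1$ beats everything by $\epsilon$, then $b_2$ (given $b_1$) beats everything remaining by $\epsilon$, etc., yet $f(\{b_1,\dots,b_k\}) = k/2 + o(1)$ while $f(\{g_1,\dots,g_k\})=k$. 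This is achieved by: atom set $\{c_1,\dots,c_k\}$ each weight $1$, plus "staircase" atoms; $g_i$ covers $c_i$; $b_i$ covers $c_i$ too plus a shared decaying structure so that the $\epsilon$-advantage is maintained but total coverage of all $b$'s is just $\{c_1,\dots,c_k\}$ (value $k$) — still $k$, not $k/2$. I now recall the actual tight instance gives ratio exactly $1/2$ in the \emph{limit} via: opt value $2k$, greedy value $k{+}1$. So: $g_i$ covers a private block of weight $2$; $b_i$ covers $c_i$ (weight $1$, private) plus shared atom $x$ of weight $1$; $f(b_i)=2=f(g_i)$, tie broken to $b_i$; greedy picks $b_1$ (gain $2$), then $b_2$ (gain $1$: just $c_2$, since $x$ gone) — but $g_2$ gains $2$, so greedy switches. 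Final: $b_1,g_2,\dots,g_k$, value $2 + 2(k-1) = 2k$. No gap.

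Given my difficulty re-deriving it cleanly here, the plan for the actual proof is: (1) reduce to $N=1$ by the disjoint-union argument above; (2) for $N=1$ with bound $k$, present the known tight instance for monotone submodular maximization under a cardinality (uniform matroid) constraint — e.g. the instance from the classical analysis where $f(S) = \min(|S \cap B|, 1)\cdot k + |S\cap G|$-style or a weighted-coverage instance with one "decoy" cluster of weight slightly above $1$ shared by $k$ decoy elements versus $k$ disjoint clusters each of weight $1$, tuned with $\epsilon$-perturbations so greedy's tie-breaking selects all $k$ decoys — and verify $f(\text{greedy})/f(\text{opt}) \to 1/2$ as the cluster-size parameter $M\to\infty$ (or $\epsilon\to 0$); (3) conclude the ratio cannot be beaten, i.e. $\inf = 1/2$, while noting greedy always achieves $\geq 1/2$ by the matroid guarantee so the bound is exactly tight. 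The main obstacle, as my flailing shows, is pinning down the precise gadget and tie-breaking so that \emph{every} one of greedy's $k$ choices is forced to be suboptimal simultaneously — the subtlety is that a single bad first pick only costs one slot, so the redundancy among decoys must be engineered (via a shared atom whose weight is recounted in the $\epsilon$-perturbations, or via a nested/telescoping coverage structure) so that greedy keeps preferring decoys at every step while the decoys' union value stays at roughly half the optimum; I would handle this with a coverage function whose $k$ decoy sets are "almost nested" (decoy $i$ covers $M$ fresh atoms plus re-covers everything decoy $i{-}1$ covered, each fresh batch shrinking), making greedy's total $\approx M\cdot(1 + 1/2 + \cdots)$-type sum comparable to a single optimal set's $\approx 2M$, i.e. matching within a factor tending to $2$.
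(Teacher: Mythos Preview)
Your reduction to $N=1$ is fatally flawed, and this is precisely why every single-partition gadget you tried collapsed. For a uniform matroid (a single partition with a cardinality bound $k$), the standard greedy algorithm is guaranteed to achieve $1-(1-1/k)^k \geq 1-1/e \approx 0.632$, not $1/2$. So no coverage instance, no nested decoys, no $\epsilon$-perturbation will ever push greedy below $1-1/e$ on a single partition. Your repeated failures were not a matter of tuning the gadget; the target ratio is simply unattainable in that setting. The ``disjoint union'' reduction therefore cannot work: each block would exhibit ratio at least $1-1/e$, and so would the union.

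The $1/2$ tightness for partition matroids is a genuinely \emph{cross-partition} phenomenon. The paper's construction (and the toy example already in the main text with $N=2$, $k_1=k_2=2$) exploits this directly: elements in $U_1$ have slightly higher singleton value ($1/2+\epsilon$) than the corresponding elements in the other partitions ($1/2$), but the tag sets of elements in $U_i$ for $i>1$ are chosen to be \emph{subsets} of tag sets already present in $U_1$. Greedy therefore spends its first $k_1$ picks filling up $U_1$ with the $(1/2+\epsilon)$-elements; once $U_1$ is saturated, every remaining feasible element (which must come from $U_2,\dots,U_N$) has marginal gain zero because its tags are already covered. Greedy stalls at value $k_1(1/2+\epsilon)$. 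The optimum instead takes the \emph{other} half of $U_1$ (the $1/2$-valued disjoint elements) together with $k_1$ elements from $U_2$ whose tags are disjoint from those, achieving value $k_1$. Letting $\epsilon\to 0$ gives the ratio $1/2$.

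The mechanism you were missing is that greedy's mistake is not ``picking redundant elements within one partition'' but ``exhausting one partition's budget on elements that render the other partitions useless.'' You should abandon the disjoint-union plan and instead build a single instance where the partitions interact through shared coverage, as the paper does.
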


We defer the detailed proof of Theorem \ref{thm:hardness} to the supplementary material in Section \ref{appdx:hardness}. We give a brief illustration of the proof by constructing a toy example. Suppose $U=[8]$ which is split into two groups $U_1=\{1,2,3,4\}$, $U_2=\{5,6,7,8\}$. Let $t:U\rightarrow M$ be a function that assigns tags to each element in the universe: $t(1) = t(5) = t(7) = t(8) = "a"$, $t(2) = t(6) = "b"$, $t(3) = "c"$, and $t(4) = "d"$. Define a set cover function $f$ that maps a subset to the number of unique tags covered, $f(S) = |\cup_{s \in S} t(s)|$. Here, the partition matroid is defined by $k_1 = k_2 = 2$. In this case, the standard greedy algorithm might first select elements $1$ and $2$. Subsequently, all remaining elements either become infeasible or have zero marginal gain, yielding a solution set $S$ with $f(S) = 2$, whereas the optimal solution set $OPT = {3,4,5,6}$ achieves $f(OPT) = 4$. Therefore, $f(S) = f(OPT)/2$. Thus we can conclude the proof.

This example highlights a key limitation of the standard greedy algorithm: it greedily adds the element with the highest marginal gain by searching over all feasible elements in each step, which can lead certain partitions to quickly reach their cardinality limits, making remaining elements in those subgroups infeasible for later selections. This strategy prevents the standard greedy algorithm from achieving a better approximation ratio. 

In fact, in most of the continuous methods developed in existing works \citep{badanidiyuru2014fast,calinescu2011maximizing}, the key idea for achieving the optimal approximation ratio of $1-1/e$ is by incrementally increasing some coordinates by small fractions in each step. This technique ensures that all of the elements in the ground set $U$ remain feasible throughout most of the algorithm's execution. Inspired by this insight, the \chunkgreedy algorithm carefully balances the number of elements being added to the solution set across different partitions during each round. In particular, the number of elements selected by \chunkgreedy at each round within each partition $i$ is proportional to the budget capacity $k_i$ of the partition to ensure the feasibility of the elements for the majority of the algorithm's runtime, thereby effectively addressing this limitation. We provide a detailed description of our proposed algorithm for monotone SMP in the next section.
\subsubsection{Subroutine Algorithm for Monotone SMP}
\label{sec:mono_alg_descript}
We propose the subroutine algorithm \blockmono (Algorithm \ref{alg:blockmono}), to be used in \chunkgreedy along with the parameter $\phi=\lfloor\sqrt{\min_{i\in[N]}k_i}\rfloor-1$, and show that it can be used to achieve an approximate solution that is at least as good as the standard greedy and often strictly better. Further, \chunkgreedy makes fewer queries to $f$, depending on the structure of the partition matroid constraint. Here the parameter $r_j$ is defined as $r_j:=\lfloor k_j/\phi\rfloor$ for each $j\in[N]$. 

\begin{algorithm} \caption{\blockmono($S$, $i$, $j$)}\label{alg:blockmono}
    \begin{algorithmic}[1]
        \State \textbf{Input: }$S,i,j$
        \For{$l=1$ \textbf{to} $r_j $}
        \State $S\gets S\cup \arg\max_{x\in U_j}\Delta f(S,x)$\label{line:greedy_mono}
        \EndFor
    \end{algorithmic}
\end{algorithm}

\begin{restatable}{theorem}{thmchunkgreedy}
    \label{thm:chunkgreedy}
    Suppose that \chunkgreedy is run for an instance of monotone \prob with the subroutine algorithm \blockmono as described in Algorithm \ref{alg:blockmono}, 
   then \chunkgreedy
   outputs a solution set $S$ that satisfies an approximation ratio of $1-1/e-\frac{1}{\phi+1}$ where $\phi=\lfloor\sqrt{\min_{i\in[N]}k_i}\rfloor-1$.
\end{restatable}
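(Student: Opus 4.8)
## Proof proposal for Theorem \ref{thm:chunkgreedy}

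\textbf{Overall approach.}
The plan is to mimic the standard analysis of the continuous greedy / greedy-for-matroid argument, but at the granularity of \emph{blocks} rather than single elements. Let $S_i$ denote the solution after the $i$-th outer round (so $S_0 = \emptyset$ and $S = S_\phi$), and let $OPT$ be an optimal solution in $\mathcal{P}$. Within round $i$, partition $j$ contributes $r_j = \lfloor k_j/\phi \rfloor$ greedily chosen elements. The goal is to show a per-round improvement inequality of the form
\begin{align*}
f(S_i) - f(S_{i-1}) \;\geq\; c\bigl(f(OPT) - f(S_i)\bigr)
\end{align*}
for an appropriate constant $c$ (morally $c \approx 1/\phi$), and then unroll the recursion to get $f(S_\phi) \geq \bigl(1 - (1-c)^{\phi}\bigr) f(OPT)$, which with $c \approx 1/\phi$ yields roughly $1 - 1/e$, and the $-\tfrac{1}{\phi+1}$ slack absorbs the rounding in $r_j = \lfloor k_j/\phi\rfloor$ and the endpoint losses.

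\textbf{Key steps in order.}
First I would fix a round $i$ and a partition $j$, and relate the greedy gains inside the inner loop (Line \ref{line:greedy_mono}) to the marginal value of $OPT \cap U_j$. The crucial structural fact is that, because we only ever add $r_j \leq k_j/\phi$ elements per round and there are $\phi$ rounds, at \emph{every} step of the algorithm the number of elements already chosen from $U_j$ is at most $k_j - 1 < |U_j|$ (using $k_j \le |U_j|$), so every element of $OPT \cap U_j$ — and in particular the highest-marginal-value one — remains available as a candidate for the $\arg\max$. Hence each of the $r_j$ greedy picks in partition $j$ during round $i$ has marginal gain at least $\tfrac{1}{|OPT \cap U_j|}\sum_{o \in OPT \cap U_j}\Delta f(S',o)$ for the current prefix $S'$ (averaging argument), and by submodularity $\Delta f(S',o) \geq \Delta f(S_i, o)$ since $S' \subseteq S_i$. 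Summing the $r_j$ gains in partition $j$ and using $r_j \geq k_j/\phi - 1 \geq |OPT\cap U_j|/\phi - 1$ (here we need $|OPT \cap U_j| \le k_j$, which holds since $OPT \in \mathcal{P}$), I would obtain
\begin{align*}
f(S_{i,j}) - f(S_{i,j-1}) \;\geq\; \frac{r_j}{|OPT\cap U_j|}\,\Delta f(S_i, OPT\cap U_j) \;\geq\; \Bigl(\tfrac{1}{\phi} - \tfrac{1}{|OPT\cap U_j|}\Bigr)\Delta f(S_i, OPT\cap U_j),
\end{align*}
where $S_{i,j}$ is the solution after finishing partition $j$ in round $i$. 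Second, I would sum over $j \in [N]$; submodularity gives $\sum_j \Delta f(S_i, OPT \cap U_j) \geq \Delta f(S_i, OPT) \geq f(OPT) - f(S_i)$, and I would need to control the subtracted terms $\sum_j \tfrac{1}{|OPT\cap U_j|}\Delta f(S_i, OPT\cap U_j)$; bounding $\Delta f(S_i, OPT \cap U_j) \le f(OPT)$ (or a cleaner telescoping bound) and using $|OPT \cap U_j| \ge 1$ on the relevant partitions, this error is $O(f(OPT))$ but needs to be shaped so that it collapses to the single $-\tfrac{1}{\phi+1}$ term after unrolling — this is where I would be careful to use $\min_i k_i$ and the definition $\phi = \lfloor \sqrt{\min_i k_i}\rfloor - 1$, so that $1/\phi$ times $\phi$ rounds of a $\sqrt{\min k_i}$-type error stays $O(1/\phi) = O(1/\sqrt{\min k_i})$. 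Third, with the recursion $f(OPT) - f(S_i) \leq (1 - \tfrac1\phi)\bigl(f(OPT) - f(S_{i-1})\bigr) + (\text{error}_i)$ in hand, I would unroll over $i = 1,\dots,\phi$, use $(1-1/\phi)^\phi \le 1/e$, and sum the error terms to conclude $f(S) \geq \bigl(1 - 1/e - \tfrac{1}{\phi+1}\bigr) f(OPT)$.

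\textbf{Main obstacle.}
The routine parts are the averaging/submodularity gain bound and unrolling the recursion; the delicate part is \textbf{bookkeeping the rounding and boundary losses} so that all of them together amount to exactly the clean $-\tfrac{1}{\phi+1}$ correction stated in the theorem, rather than a messier $O(1/\phi)$ term. Concretely, there are three sources of slack — (i) $r_j = \lfloor k_j/\phi\rfloor$ versus $k_j/\phi$, (ii) the $\tfrac{1}{|OPT \cap U_j|}$ deficit in the per-partition bound, and (iii) the gap between $(1-1/\phi)^\phi$ and $1/e$ — and the proof has to combine them in the right way, using $\phi+1 = \lfloor\sqrt{\min_i k_i}\rfloor$ to dominate each one. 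I would handle this by deriving each error as a multiple of $f(OPT)/\phi$ or $f(OPT)/\phi^2$ and then verifying that their sum over the $\phi$ rounds is at most $f(OPT)/(\phi+1)$; a slightly lossy but safe version of this is all that is needed, since the theorem only claims the $-\tfrac{1}{\phi+1}$ bound, not optimality of the constant.
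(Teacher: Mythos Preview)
Your overall plan (per-round recursion plus unrolling) is right, and your per-partition greedy bound $f(S_{i,j}) - f(S_{i,j-1}) \geq \frac{r_j}{|OPT\cap U_j|}\,\Delta f(S_i, OPT\cap U_j)$ is valid. The gap is exactly where you flag it as delicate: your lower bound $\frac{r_j}{|OPT\cap U_j|} \geq \frac{1}{\phi} - \frac{1}{|OPT\cap U_j|}$ becomes useless (indeed negative) whenever $|OPT\cap U_j| < \phi$, and nothing prevents this --- the optimal solution may use only one element from a partition even when $k_j$ is enormous. In that regime your subtracted error $\sum_j \frac{1}{|OPT\cap U_j|}\Delta f(S_i, OPT\cap U_j)$ can be as large as $\sum_j \Delta f(S_i, OPT\cap U_j)$ itself, completely swallowing the main term; there is no way to make it ``collapse to $-\frac{1}{\phi+1}$'' without an additional idea, and the three slack sources you list cannot absorb it either.

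The paper's fix is to compare not to $OPT$ but to $OPT' = \arg\max_{S\in\mathcal{P}'} f(S)$, where $\mathcal{P}' = \{S : |S\cap U_j|\le r_j\phi\ \forall j\}$ is the partition matroid the algorithm \emph{actually} fills. Then $|OPT'\cap U_j| \le r_j\phi$ makes the ratio $\frac{r_j}{|OPT'\cap U_j|} \geq \frac{1}{\phi}$ exact, with \emph{no} error term, and unrolling gives $f(S_\phi) \geq \bigl(1-(\tfrac{\phi}{\phi+1})^\phi\bigr)f(OPT')$. The rounding is handled once, separately, by a combinatorial lemma: the choice $\phi = \lfloor\sqrt{k_{\min}}\rfloor - 1$ guarantees $(\phi+1)\lfloor k_j/\phi\rfloor \geq k_j$ for every $j$, from which one shows $f(OPT') \geq \frac{\phi}{\phi+1}f(OPT)$ (any $S\in\mathcal{P}$ splits into $\phi+1$ pieces each in $\{T:|T\cap U_j|\le r_j\}$, and a greedy argument drops at most a $\frac{1}{\phi+1}$ fraction). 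Multiplying the two factors yields $\frac{\phi}{\phi+1} - (\tfrac{\phi}{\phi+1})^{\phi+1} \geq 1 - \frac{1}{\phi+1} - e^{-1}$, which is the stated bound. The missing idea in your attempt is precisely this decoupling: do the recursion against the rounded optimum, and pay for rounding only once at the end.
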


Intuitively, \chunkgreedy achieves an approximation close to $1-1/e$ when the parameters $k_i$ are large. The reason that the term involving $\phi$ arises in Theorem \ref{thm:chunkgreedy} is because there are a total of $\phi$ rounds in the outer loop of \chunkgreedy. In particular, if $\phi\geq7$, then the approximation guarantee described in Theorem \ref{thm:chunkgreedy} is strictly better than 1/2. To further ensure the bound is better than 1/2, we can greedily add new elements with maximum marginal gain to the returned solution by \chunkgreedy algorithm until the cardinality of the solution set reaches the rank of the partition matroid, in which case the approximation ratio of \chunkgreedy is $\max\{1/2,1-1/e-\frac{1}{\phi+1}\}$ (see Appendix \ref{appdx:mono_result_discuss} for proof).

Notably, the difference between the approximation ratio for \chunkgreedy and the optimal result of $1-1/e$ is bounded by $\mathcal{O}(\frac{1}{\sqrt{k_{\min}}})$ where $k_{\min}=\min_{i\in[N]}k_i$. In particular, in some cases where $k_1=k_2=\cdots=k_N$, the bound can be improved further to $\mathcal{O}(\frac{1}{k_1})$, as shown in the following corollary:

\begin{corollary}
\label{coro:mono}
    Suppose \chunkgreedy with the \blockmono subroutine is run for instance of monotone \prob with $k_1=k_2=...=k_N$. If we set $\phi=k_1$ and $r_j=1$ for each $j$, then \chunkgreedy outputs a solution set $S$ with an approximation ratio of $1-1/e-1/k_1$. 
\end{corollary}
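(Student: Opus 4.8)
\textbf{Proof plan for Corollary \ref{coro:mono}.}

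The plan is to specialize the analysis of Theorem \ref{thm:chunkgreedy} to the uniform case $k_1 = \cdots = k_N = k_1$, where we take $\phi = k_1$ and $r_j = 1$ for every $j$. With these choices, each call to \blockmono$(S,i,j)$ adds exactly one element — the element of $U_j$ with largest marginal gain — so a full round $i$ (ranging $j$ from $1$ to $N$) adds exactly one element per partition, and after all $\phi = k_1$ rounds the solution set $S$ satisfies $|S \cap U_j| = k_1 = k_j$ for every $j$, hence $S$ is a basis of the partition matroid $\mathcal{P}$. In particular feasibility is immediate and there is no rounding loss: with $r_j = \lfloor k_j/\phi\rfloor = 1$ the quantity $k_j - \phi r_j = 0$, which is exactly the source of the extra slack in the general bound.

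First I would set up the per-round recursion. Let $S_i$ denote the solution after the $i$-th round of the outer loop, with $S_0 = \emptyset$. Within round $i$, when processing partition $U_j$ the greedy step picks $x$ maximizing $\Delta f(S, x)$ over $U_j$; since the single element $o_j \in OPT \cap U_j$ (pick one if $|OPT\cap U_j| \le 1$; the argument below handles $OPT$ coordinate-by-coordinate) is a candidate, the gain of that step is at least $\Delta f(S, o_j)$ where $S$ is the current set. Chaining this across $j = 1,\dots,N$ within round $i$, and using submodularity to replace the running set by the end-of-round set $S_i$ in the lower bounds (monotonicity lets us only lose by doing so), I would obtain
\begin{align*}
    f(S_i) - f(S_{i-1}) \;\ge\; \sum_{j=1}^N \Delta f(S_i, o_j) \;\ge\; \Delta f(S_i, OPT) \;\ge\; f(OPT) - f(S_i),
\end{align*}
where the middle inequality is submodularity applied to the disjoint pieces $\{o_j\}$ and the last is monotonicity. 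This is the same recursion that drives the standard $1-1/e$ analysis for cardinality-constrained greedy, now obtained round-by-round for the block algorithm.

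Next I would solve the recursion. Rearranging gives $f(OPT) - f(S_i) \le \tfrac{1}{2}\bigl(f(OPT) - f(S_{i-1})\bigr)$? — no, more carefully: $f(S_i) - f(S_{i-1}) \ge f(OPT) - f(S_i)$ yields $f(OPT) - f(S_i) \le f(OPT) - f(S_{i-1}) - (f(S_i) - f(S_{i-1}))$, i.e. $2\bigl(f(OPT) - f(S_i)\bigr) \le 2\bigl(f(OPT)-f(S_{i-1})\bigr) - \bigl(f(OPT)-f(S_{i-1})\bigr)$ is not quite it either; the clean route is: from $f(S_i) \ge \tfrac{1}{2} f(OPT) + \tfrac{1}{2} f(S_{i-1})$ we get $f(OPT) - f(S_i) \le \tfrac{1}{2}\bigl(f(OPT) - f(S_{i-1})\bigr)$, so by induction over $i = 1,\dots,\phi$ with $\phi = k_1$,
\begin{align*}
    f(S_{\phi}) \;\ge\; \Bigl(1 - \tfrac{1}{2^{k_1}}\Bigr) f(OPT).
\end{align*}
Since $1 - 2^{-k_1} \ge 1 - 1/e - 1/k_1$ for all $k_1 \ge 1$ (indeed $2^{-k_1}$ decays faster than $1/e + 1/k_1$ decays toward the relevant bound; this elementary inequality can be checked directly, e.g. $2^{-k_1} \le 1/k_1$ for $k_1 \ge 1$ and $1/k_1 \le 1/e + 1/k_1$), we conclude the stated $1 - 1/e - 1/k_1$ approximation ratio — in fact the bound $1 - 2^{-k_1}$ we prove is stronger.

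\textbf{Main obstacle.} The only genuinely delicate point is the telescoping step where, within a single round, the running set changes as we move from partition to partition, yet we want every step's gain lower-bounded in terms of the \emph{end-of-round} set $S_i$ so that the per-round increments sum cleanly against $\Delta f(S_i, OPT)$. This requires invoking submodularity (to push the marginal-gain lower bounds down to the larger set $S_i$) in the right direction and being careful that $S_i \supseteq$ the running set at every intermediate step, which holds because elements are only added. A secondary, purely arithmetic obstacle is verifying $1 - 2^{-k_1} \ge 1 - 1/e - 1/k_1$ for all positive integers $k_1$; this reduces to $1/k_1 + 1/e \ge 2^{-k_1}$, which is trivial since $1/k_1 \ge 2^{-k_1}$ already for $k_1 \ge 1$. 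Everything else is a direct specialization of the Theorem \ref{thm:chunkgreedy} argument with the rounding error term $k_j - \phi r_j$ set to zero.
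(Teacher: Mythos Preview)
Your per-round recursion is too strong, and this is a genuine gap. You claim
\[
f(S_i) - f(S_{i-1}) \;\ge\; \sum_{j=1}^N \Delta f(S_i, o_j) \;\ge\; \Delta f(S_i, OPT),
\]
where $o_j$ is a \emph{single} element of $OPT\cap U_j$. But $|OPT\cap U_j|$ can be as large as $k_j = k_1$, not $1$; the set $\{o_1,\dots,o_N\}$ has only $N$ elements while $|OPT|$ can be $Nk_1$. Submodularity gives $\sum_j \Delta f(S_i,o_j)\ge \Delta f(S_i,\{o_1,\dots,o_N\})$, not $\Delta f(S_i,OPT)$. A concrete counterexample: take $N=1$, $k_1=3$, $f$ modular with every singleton worth $1$. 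After round $1$ you have $f(S_1)=1$, yet $f(OPT)-f(S_1)=2$, so your inequality $f(S_1)-f(S_0)\ge f(OPT)-f(S_1)$ fails. Consequently the bound $1-2^{-k_1}$ is simply false here (it would approach $1$, whereas no algorithm can beat $1-1/e$).

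The correct one-round bound, following Lemma~\ref{lem:mono}, compares the single greedy pick in partition $j$ to the \emph{average} of the $|OPT_j|\le r_j\phi = k_1$ candidate marginals, yielding
\[
f(S_{i,j})-f(S_{i,j-1}) \;\ge\; \frac{\Delta f(S_{i,j},OPT_j)}{\phi},
\]
and hence $f(S_i)-f(S_{i-1})\ge \tfrac{1}{\phi}\bigl(f(OPT)-f(S_i)\bigr)$. Unrolling over $\phi=k_1$ rounds gives $f(S_\phi)\ge\bigl(1-(\tfrac{\phi}{\phi+1})^{\phi}\bigr)f(OPT)$, and since $r_j\phi=k_j$ there is no $\mathcal P'$-vs-$\mathcal P$ slack. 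The paper obtains exactly this by invoking Theorem~\ref{thm:greedy-gcd} with $c=\gcd(k_1,\dots,k_N)=k_1$, which then uses $(\tfrac{\phi}{\phi+1})^{\phi}\le 1/e+1/\phi$ to conclude the ratio $1-1/e-1/k_1$. Your telescoping-within-a-round idea and the observation that the rounding error vanishes are both correct; the fix is just to restore the missing $1/\phi$ factor.
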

    
The corollary can be proved by applying Theorem \ref{thm:greedy-gcd}, which is presented and analyzed in Appendix \ref{appdx:mono_result_discuss}.


An additional important benefit to \chunkgreedy compared to the standard greedy algorithm is that its query complexity is potentially much better. This improvement arises because \chunkgreedy selects elements with maximum marginal gain within one partition rather than over the entire universe $U$. In particular, the query complexity of \chunkgreedy is upper bounded by $\sum_{i\in [N]}|U_i|k_i\leq n(\sum_{i\in [N]}k_i)$.

Next, we present the proof of the theorem. First of all, we prove the result in Lemma \ref{lem:mono}.
\begin{lemma}
\label{lem:mono}
    Let us define the partition matroid of $\{S\subseteq U:|S\cap U_j|\leq 
  r_j\phi \}$ as $\mathcal{P}'$, and we define the optimal solution of the problem $\max_{S\in\mathcal{P}'}f(S)$ as $OPT'$. Let us denote the input and output of Algorithm \ref{alg:blockmono} as $S$ and $S'$ respectively, then it follows that 
    \begin{align*}
        f(S')-f(S)\geq\frac{\Delta f(S',OPT_j)}{\phi},
    \end{align*}
    where $OPT_j=OPT'\cap U_j$.
\end{lemma}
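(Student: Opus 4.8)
The plan is to mimic the standard analysis of the greedy subroutine inside a block, but working relative to the optimal solution $OPT'$ of the \emph{extended} partition matroid $\mathcal{P}'$ rather than the original one. The key point is that within the $j$-th inner call of Algorithm~\ref{alg:blockmono}, exactly $r_j$ elements are added one at a time, each chosen to maximize the marginal gain over all of $U_j$. Let me write $S = S^{(0)} \subseteq S^{(1)} \subseteq \cdots \subseteq S^{(r_j)} = S'$ for the intermediate sets produced by the $r_j$ iterations of line~\ref{line:greedy_mono}, so that $S^{(l)} = S^{(l-1)} \cup \{x_l\}$ where $x_l = \arg\max_{x \in U_j} \Delta f(S^{(l-1)}, x)$.

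First I would fix an iteration $l$. Since $OPT_j = OPT' \cap U_j$ satisfies $|OPT_j| \le r_j \phi$, and since every element of $OPT_j$ lies in $U_j$ and is therefore a candidate at step $l$, the greedy choice gives $\Delta f(S^{(l-1)}, x_l) \ge \max_{o \in OPT_j} \Delta f(S^{(l-1)}, o) \ge \frac{1}{|OPT_j|}\sum_{o \in OPT_j}\Delta f(S^{(l-1)}, o)$. By submodularity, $\sum_{o \in OPT_j}\Delta f(S^{(l-1)}, o) \ge \Delta f(S^{(l-1)}, OPT_j) \ge \Delta f(S', OPT_j)$, where the last step uses $S^{(l-1)} \subseteq S'$ and submodularity again. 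Combining, $f(S^{(l)}) - f(S^{(l-1)}) \ge \frac{\Delta f(S', OPT_j)}{|OPT_j|} \ge \frac{\Delta f(S', OPT_j)}{r_j \phi}$. Summing over $l = 1, \ldots, r_j$ telescopes the left side to $f(S') - f(S)$, and the right side becomes $r_j \cdot \frac{\Delta f(S', OPT_j)}{r_j \phi} = \frac{\Delta f(S', OPT_j)}{\phi}$, which is exactly the claimed bound.

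One technical subtlety I would handle carefully: if $\Delta f(S', OPT_j) \le 0$ the inequality is trivial by monotonicity (the left side is nonnegative), so I may assume it is positive, in which case dividing the per-element bound by $|OPT_j| \le r_j\phi$ only weakens it in the right direction. I expect the main obstacle — really more a matter of bookkeeping than difficulty — to be getting the two uses of submodularity in the correct order: once to pass from $\max_o \Delta f(S^{(l-1)}, o)$ down to an averaged sum and then up to $\Delta f(S^{(l-1)}, OPT_j)$, and once to replace the running set $S^{(l-1)}$ by the final block output $S'$ inside the marginal. Both are instances of the diminishing-returns inequality applied to nested sets, so no new idea is needed beyond being precise about which set is the smaller one in each application.
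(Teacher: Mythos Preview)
Your proposal is correct and follows essentially the same argument as the paper's proof: introduce the intermediate sets $S^{(0)},\ldots,S^{(r_j)}$, use the greedy choice to bound each per-step gain by the average marginal over $OPT_j$, invoke submodularity to replace the running set by the final $S'$, and telescope. The only cosmetic difference is the order in which you apply the two submodularity steps (you first aggregate $\sum_o \Delta f(S^{(l-1)},o)\ge \Delta f(S^{(l-1)},OPT_j)$ and then pass to $S'$, whereas the paper first passes each $\Delta f(S^{(l-1)},o)$ to $\Delta f(S^{(r_j)},o)$ and then aggregates), which makes no material difference.
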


\begin{proof}
    For notation simplicity, we define the solution set after the $l$-th step in the for loop of Algorithm \ref{alg:blockmono}  as $S^{(l)}$. By the greedy selection step in Line \ref{line:greedy_mono}, we have that for any $o\in OPT_j:=OPT'\cap U_j$,
   \begin{align*}
       f(S^{(l)})-f(S^{(l-1)})\geq\Delta f(S^{(l-1)}, o).
   \end{align*}
   Therefore,
   \begin{align*}
       f(S^{(l)})-f(S^{(l-1)})&\geq\frac{\sum_{o\in OPT_j}\Delta f(S^{(l-1)}, o)}{|OPT_j|}\\
       &\geq\frac{\sum_{o\in OPT_j}\Delta f(S^{(l-1)}, o)}{r_j\phi}\\
       &\geq\frac{\sum_{o\in OPT_j}\Delta f(S^{(r_j)}, o)}{r_j\phi}\\
       &\geq\frac{\Delta f(S^{(r_j)}, OPT_j)}{r_j\phi},
   \end{align*}
   where the second inequality follows from the fact that $OPT'\in\mathcal{P}'$, and therefore $|OPT_j|\leq r_j\phi$. Summing over all $l\in[r_j]$, it follows that
   \begin{align*}
       f(S^{(r_j)})-f(S^{(0)})\geq\frac{\Delta f(S^{(r_j)}, OPT_j)}{\phi}.
   \end{align*}
   Notice that $S^{(0)}$ is the input of the algorithm and $S^{(l)}$ is the output of the algorithm, so we can prove the result.
\end{proof}
With the result in Lemma \ref{lem:mono}, we can prove the result in Theorem \ref{thm:chunkgreedy}.
\begin{proof}
    
   Let $S_{i,j}$ represent the solution set after executing the subroutine algorithm \blockmono on the 
$j$-th subgroup during the $i$-th iteration of the outer for loop in Line~\ref{line:outer_for_loop_starts} in Algorithm \ref{alg:partition_greedy_general}, and we define $S_{i}$ as the solution set after completing the $i$-th round of the outer for loop in Algorithm \ref{alg:partition_greedy_general}, i.e., $S_i=S_{i,N}$. Then by the result in Lemma \ref{lem:mono}, we have that we have that
   \begin{align*}
        f(S_{i,j})-f(S_{i,j-1})\geq\frac{\Delta f(S_{i,j},OPT_j)}{\phi}
   \end{align*}
   
   Since $S_{i,j}\subseteq S_{i,N}$ for any $j\in[N]$, by submodularity, we have that $\Delta f(S_{i},OPT_j)=\Delta f(S_{i,N},OPT_j)\leq \Delta f(S_{i,j},OPT_j)$. Then 
\begin{align*}
        f(S_{i,j})-f(S_{i,j-1})\geq\frac{\Delta f(S_{i},OPT_j)}{\phi}
   \end{align*}
   Summing over all $j$, it then follows that 
   \begin{align*}
       f(S_{i,N})-f(S_{i,0})&\geq\frac{\sum_{j\in[N]}\Delta f(S_{i}, OPT_j)}{\phi}\\
       &\geq \frac{\Delta f(S_{i},OPT')}{\phi},
   \end{align*}
   where the last inequality follows from submodularity and the fact that $OPT'=\cup_{j\in[N]}OPT_j$.
   Notice that here $S_{i,N}$ is equivalent to $S_{i}$, and that $S_{i,0}$ is equivalent to $S_{i-1}$. Then we get
   \begin{align*}
       f(S_{i})-f(S_{i-1})&\geq\frac{f(OPT'\cup S_i)-f(S_i)}{\phi}\\
       &\geq\frac{f(OPT')-f(S_i)}{\phi}.
   \end{align*}
By rearranging the inequality and by induction, we have that
   \begin{align*}
       f(S_\phi)-f(\emptyset)\geq (1-(\frac{\phi}{\phi+1})^\phi)f(OPT').
   \end{align*}
By the definition of $\phi$ that $\phi=\lfloor\sqrt{\min_{i\in[N]}k_i}\rfloor-1$, we have that $k_i- \phi\lfloor k_i/\phi\rfloor\leq  \lfloor k_i/\phi\rfloor$ for any $i\in[N]$. By Lemma \ref{lem:partial_OPT}, it follows that 
\begin{align*}
    \max_{S\in\mathcal{P}'}f(S)\geq\frac{\phi}{\phi+1}\max_{S\in\mathcal{P}}f(S).
\end{align*}
Therefore, 
\begin{align*}
       f(S_\phi)&\geq (1-(\frac{\phi}{\phi+1})^\phi)f(OPT')\\
       &\geq (1-(\frac{\phi}{\phi+1})^\phi)(\frac{\phi}{\phi+1})f(OPT)\\
       &\geq(1-e^{-1}-\frac{1}{\phi+1})f(OPT).
   \end{align*}
\end{proof}

\subsection{Nonmonotone \prob}
\label{sec:nonmono}
In this section, we propose the algorithm for the problem of nonmonotone \problong (SMP). The proposed algorithm follows the framework in Algorithm \ref{alg:partition_greedy_general} with $\phi=\lfloor\sqrt{\min_{i\in[N]}k_i}\rfloor-1$, and the subroutine algorithm \blocknonmono is described in Algorithm \ref{alg:blocknonmono}. Here the parameter $r_j:=\lfloor k_j/\phi\rfloor$. The algorithm uniformly selects an element from the set of elements with the top $r_j\phi$ marginal gain to add to the solution set. The intuition behind the \blocknonmono algorithm is similar to that of the Random Greedy algorithm proposed in \cite{buchbinder2014submodular}. However, in \blocknonmono, the size of the candidate set considered for inclusion in the solution is adjusted to $r_j\phi$ to ensure an important result that $\mE [f(S_{i}\cup OPT')]\geq(1-\frac{1}{\phi})^if(OPT')$ where $\mathcal{P}'=\{S\subseteq U:|S\cap U_i|\leq r_i\phi, \forall i\in[N]\}$ and $OPT'=\arg\max_{S\in\mathcal{P}'}f(S)$. 

\begin{algorithm}[t] \caption{\blocknonmono($S$, $i$, $j$)}\label{alg:blocknonmono}
    \begin{algorithmic}[1]
        \State \textbf{Input: }$S,i,j$
        \For{$l=1$ \textbf{to} $r_j $}
        \State Let $M\subseteq U/S$ be a set of size $r_j\phi$ maximizing $\sum_{x\in M}\Delta f(S,x)$.
        \State $u\gets$ uniformly sample an element from $M$
        \State $S\gets S\cup \arg\max_{x\in U_j}\Delta f(S,x)$\label{line:greedy_nonmono}
        \EndFor
    \end{algorithmic}
\end{algorithm}

Below we present the main result of \chunkgreedy for the problem of nonmonotone \prob. 
\begin{restatable}{theorem}{thmchunkgreedynonmono}
    \label{thm:chunkgreedynonmono}
    Suppose that \chunkgreedy is run for an instance of nonmonotone \prob with the subroutine algorithm \blocknonmono as described in Algorithm \ref{alg:blockmono}, 
   then \chunkgreedy 
   outputs a solution $S$ that satisfies an approximation ratio of 
   $\frac{1}{e}-\frac{3}{e(\phi+1)}$ where $\phi=\sqrt{\min_{i\in[N]}k_i}-1$ in expectation.
\end{restatable}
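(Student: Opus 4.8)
The plan is to mirror the structure of the proof of Theorem \ref{thm:chunkgreedy} (the monotone case) but replace the deterministic greedy gain analysis with the randomized-greedy machinery used in the proof of Theorem \ref{thm:nonmonobi}. Let $\mathcal{P}'=\{S\subseteq U:|S\cap U_i|\leq r_i\phi,\ \forall i\in[N]\}$ and $OPT'=\arg\max_{S\in\mathcal{P}'}f(S)$. As in the monotone proof, let $S_{i,j}$ be the solution after the subroutine call on partition $j$ in round $i$, and $S_i=S_{i,N}$. First I would establish the one-block marginal bound: by the uniform random selection from the top-$r_j\phi$ set $M$, for each step $l$ within \blocknonmono on partition $j$ we get $\mE[f(S^{(l)})-f(S^{(l-1)})\mid S^{(l-1)}]\geq \frac{\sum_{o\in OPT_j}\Delta f(S^{(l-1)},o)}{r_j\phi}$, exactly as in the nonmonotone SMP subroutine argument; summing over the $r_j$ steps and using submodularity to replace $S^{(l-1)}$ by $S_{i,j}$ yields $\mE[f(S_{i,j})-f(S_{i,j-1})]\geq \frac{\mE[\Delta f(S_{i,j},OPT_j)]}{\phi}\geq \frac{\mE[\Delta f(S_i,OPT_j)]}{\phi}$. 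Summing over $j\in[N]$ and using $OPT'=\cup_j OPT_j$ and submodularity gives
\begin{align*}
    \mE[f(S_i)-f(S_{i-1})]\geq \frac{\mE[f(S_i\cup OPT')-f(S_i)]}{\phi}.
\end{align*}

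The second ingredient is the probabilistic lower bound on $\mE[f(S_i\cup OPT')]$. Here I would reproduce the argument from the proof of Theorem \ref{thm:nonmonobi}: each fixed element of $OPT'$ lies in $U_j$ for some $j$, and by the uniform sampling from a pool of size $r_j\phi$, the probability that it has been added to $S_i$ after $i$ rounds (each contributing $r_j$ draws from a pool of size $r_j\phi$ within $U_j$) is at most $1-(1-\frac{1}{r_j\phi})^{r_j i}\leq 1-(1-\frac{1}{\phi})^i$, using that $(1-1/x)^x$ is increasing. Then Lemma 2.2 of \cite{buchbinder2014submodular} gives $\mE[f(S_i\cup OPT')]\geq (1-\frac{1}{\phi})^i f(OPT')$. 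Plugging this into the recursion yields
\begin{align*}
    \mE[f(S_i)]\geq \frac{\phi}{\phi+1}\mE[f(S_{i-1})]+\frac{1}{\phi+1}\Bigl(1-\frac{1}{\phi}\Bigr)^i f(OPT'),
\end{align*}
and unrolling the induction over $i=1,\dots,\phi$ (as in equation \eqref{eqn:nonmono_induction}) gives $\mE[f(S_\phi)]\geq \frac{\phi}{\phi+1}(1-\frac{1}{\phi})^\phi f(OPT')$, i.e. roughly $\frac{1}{e}(1-O(1/\phi))f(OPT')$. The precise constant in front of the $1/(\phi+1)$ term will come from carefully bounding $\frac{\phi}{\phi+1}(1-\frac1\phi)^\phi\geq \frac1e(1-\frac{c}{\phi+1})$; I expect to use $(1-1/\phi)^{\phi-1}\geq e^{-1}$ together with $\frac{\phi}{\phi+1}\cdot(1-\frac1\phi)=\frac{\phi-1}{\phi+1}$ and then expand to extract the claimed $\frac{1}{e}-\frac{3}{e(\phi+1)}$.

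Finally, I would convert the bound from $f(OPT')$ to $f(OPT)$ using Lemma \ref{lem:partial_OPT}: since $\phi=\lfloor\sqrt{\min_i k_i}\rfloor-1$ ensures $k_i-\phi\lfloor k_i/\phi\rfloor\leq \lfloor k_i/\phi\rfloor$, we have $\max_{S\in\mathcal{P}'}f(S)\geq \frac{\phi}{\phi+1}\max_{S\in\mathcal{P}}f(S)$, so $f(OPT')\geq \frac{\phi}{\phi+1}f(OPT)$, and combining the two $\frac{\phi}{\phi+1}$ factors with $\frac1e$ and a Taylor expansion yields the stated $\frac1e-\frac{3}{e(\phi+1)}$ approximation. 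The main obstacle I anticipate is the bookkeeping in the probabilistic step: unlike the clean single-partition setting of Theorem \ref{thm:nonmonobi}, here each element of $OPT'$ is "exposed" to draws only during the $r_j$ steps per round on its own partition, and one must argue the inclusion-probability bound is uniform across partitions despite the differing pool sizes $r_j\phi$ — the fact that the per-round inclusion probability telescopes to $1-(1-1/\phi)^i$ independently of $j$ is the crux, and getting the constant $3$ (rather than a worse constant) out of the final expansion will require being somewhat careful, possibly needing $\phi$ bounded away from small values or an explicit numerical check for small $\phi$.
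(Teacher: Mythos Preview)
Your proposal is correct and follows essentially the same approach as the paper's proof: establish the per-block bound $\mE[f(S_{i,j})-f(S_{i,j-1})]\geq \frac{\mE[\Delta f(S_i,OPT_j)]}{\phi}$ via Lemma~\ref{lem:nonmono}, sum over $j$, apply the Buchbinder--Feldman sampling lemma to get $\mE[f(S_i\cup OPT')]\geq (1-1/\phi)^i f(OPT')$, unroll the recursion to obtain $\mE[f(S_\phi)]\geq \frac{1}{e}(1-\frac{2}{\phi+1})f(OPT')$, and then invoke Lemma~\ref{lem:partial_OPT} for the extra $\frac{\phi}{\phi+1}$ factor. The final constant comes out cleanly from $(1-\frac{2}{\phi+1})(1-\frac{1}{\phi+1})\geq 1-\frac{3}{\phi+1}$, with no case analysis for small $\phi$ needed.
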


Notice that the approximation ratio is close to $1/e$, which matches the bound of the random greedy algorithm for submodular maximization under cardinality constraint, with the difference bounded by $\mathcal{O}(\frac{1}{\sqrt{k_{\min}}})$, where $k_{\min}=\min_{i\in[N]}k_i$. In this sense, the proposed algorithm achieves an approximation ratio for Nonmonotone \prob that bridges the gap between submodular maximization over cardinality constraint and partition matroid constraint. The proof of Theorem \ref{thm:chunkgreedynonmono} is provided below.


Let us define $\mathcal{P}'=\{S\subseteq U:|S\cap U_i|\leq r_i\phi, \forall i\in[N]\}$ and we denote the optimal solution set of $OPT'=\arg\max_{S\in\mathcal{P}'}f(S)$.  First of all, we prove the following lemma for the subroutine algorithm \blocknonmono.

\begin{lemma}
\label{lem:nonmono}
    Let us denote the input and output of Algorithm \ref{alg:blocknonmono} as $S$ and $S'$ respectively, then it follows that 
    \begin{align*}
        \mE[f(S')-f(S)]\geq\frac{\mE[\Delta f(S',OPT_j)]}{\phi},
    \end{align*}
    where $OPT_j=OPT'\cap U_j$.
\end{lemma}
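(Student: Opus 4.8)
The plan is to mimic the analysis of the Random Greedy subroutine, but tracked one partition at a time and scaled so that the per-round contribution is divided by $\phi$ rather than by $r_j\phi$. Fix the outer round index $i$ and partition index $j$, and let $S^{(0)} = S$ be the input of Algorithm \ref{alg:blocknonmono}, with $S^{(l)}$ the solution after the $l$-th greedy step, so $S' = S^{(r_j)}$. The key per-step estimate I would establish is that, conditioned on $S^{(l-1)}$,
\[
\mE\big[f(S^{(l)}) - f(S^{(l-1)}) \,\big|\, S^{(l-1)}\big] \;\geq\; \frac{1}{r_j\phi}\sum_{o \in OPT_j} \Delta f(S^{(l-1)}, o).
\]
This follows because the candidate set $M$ has size exactly $r_j\phi$ and is chosen to maximize $\sum_{x \in M}\Delta f(S^{(l-1)},x)$; since $OPT_j = OPT' \cap U_j$ has $|OPT_j| \le r_j\phi$ and $OPT_j \subseteq U_j \setminus S^{(l-1)}$ up to already-chosen elements, the total marginal gain of $M$ is at least that of $OPT_j$ (after discarding elements already in $S^{(l-1)}$, which only helps since their marginal gain is nonnegative — wait, $f$ is nonmonotone, so I must be careful here). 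The cleaner route: $\sum_{x\in M}\Delta f(S^{(l-1)},x) \ge \sum_{o \in OPT_j}\Delta f(S^{(l-1)}, o)$ holds because $M$ is the top-$(r_j\phi)$ set and $OPT_j$ is \emph{some} set of at most $r_j\phi$ elements of $U_j$; the marginal gains being summed are of individual elements, so even if some are negative, the maximizing set $M$ dominates. Then uniform sampling of $u$ from $M$ gives the displayed conditional expectation.

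Next I would apply submodularity to replace $S^{(l-1)}$ by $S^{(r_j)} = S'$ inside the marginal gains: for each $o$, $\Delta f(S^{(l-1)}, o) \ge \Delta f(S^{(r_j)}, o)$ since $S^{(l-1)} \subseteq S^{(r_j)}$. Hence
\[
\mE\big[f(S^{(l)}) - f(S^{(l-1)}) \,\big|\, S^{(l-1)}\big] \;\geq\; \frac{1}{r_j\phi}\sum_{o \in OPT_j} \Delta f(S', o) \;\geq\; \frac{\Delta f(S', OPT_j)}{r_j\phi},
\]
where the last inequality is again submodularity applied to the set $OPT_j$ (decomposing $\Delta f(S', OPT_j)$ as a telescoping sum of singleton additions, each bounded above by the corresponding singleton marginal at $S'$). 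Summing over $l = 1,\dots,r_j$ and taking total expectations yields $\mE[f(S') - f(S)] \ge \frac{1}{\phi}\,\mE[\Delta f(S', OPT_j)]$, which is the claim. The telescoping step and the $\Delta f(S',O) \le \sum_{o\in O}\Delta f(S',o)$ bound are exactly the standard submodularity manipulations already used in the proof of Lemma \ref{lem:mono} and Lemma \ref{lem:knapsack}, so they can be invoked quickly.

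The main obstacle I anticipate is the careful handling of the randomness when swapping $S^{(l-1)}$ for $S'$: the inequality $\Delta f(S^{(l-1)}, o) \ge \Delta f(S', o)$ is a pointwise (sample-path) inequality valid for every realization of the random choices, so it passes through the conditional expectation without trouble — but one must state this explicitly, since $S'$ depends on randomness not yet revealed at step $l$. A secondary subtlety is whether $OPT_j$ is genuinely a feasible competitor at each step: some elements of $OPT_j$ may already have been added to $S^{(l-1)}$, in which case their marginal gain is $0$ (or they are simply absent from $U \setminus S^{(l-1)}$), and because $M$ maximizes over $U\setminus S^{(l-1)}$ while we compare against a fixed set $OPT_j$, I would phrase the domination as $\sum_{x\in M}\Delta f(S^{(l-1)},x) \ge \sum_{o\in OPT_j\setminus S^{(l-1)}}\Delta f(S^{(l-1)},o) = \sum_{o\in OPT_j}\Delta f(S^{(l-1)},o)$, the last equality because $\Delta f(S^{(l-1)},o) = 0$ for $o \in S^{(l-1)}$. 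This is the one place where nonmonotonicity could be worrying, but since we only ever sum \emph{singleton} marginals (never use monotonicity of $f$ on sets), the argument goes through.
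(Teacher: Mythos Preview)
Your proposal is correct and follows essentially the same route as the paper's own proof: establish the per-step bound $\mE[f(S^{(l)})-f(S^{(l-1)})]\ge \frac{1}{r_j\phi}\sum_{o\in OPT_j}\Delta f(S^{(l-1)},o)$ via the top-$(r_j\phi)$ maximality of $M$ and uniform sampling, push $S^{(l-1)}$ forward to $S'$ by submodularity, sum over $l$, and collapse the singleton sum to $\Delta f(S',OPT_j)$ again by submodularity. Your extra care about already-selected elements of $OPT_j$ and the sample-path validity of the submodularity replacement is sound and simply makes explicit what the paper leaves implicit.
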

\begin{proof}
      Let us denote the solution set after adding the $l$-th element as $S^{(l)}$. Similar to the proof of Theorem \ref{thm:chunkgreedy} for the monotone \prob, we have that  
    \begin{align*}
        \mE [f(S^{(l)})-f(S^{(l-1)})]\geq\frac{\sum_{a\in OPT'\cap U_j}\Delta f(S^{(l-1)},a)}{r_j\phi}.
    \end{align*}
    By submodularity, we would have that
    \begin{align*}
        \mE [f(S^{(l)})-f(S^{(l-1)})]&\geq\mE [\frac{\sum_{a\in OPT'\cap U_j}\Delta f(S^{(l-1)},a)}{r_j\phi }]\\
        &\geq\mE[\frac{\sum_{a\in OPT'\cap U_j}\Delta f(S',a)}{r_j\phi }].
    \end{align*}
    By summing over all $l\in[r_j]$, it follows that  
    \begin{align*}
        \mE [f(S')-f(S)]&\geq\mE [\frac{\sum_{a\in OPT'\cap U_j}\Delta f(S',a)}{\phi}]\\
        &\geq\mE [\frac{\Delta f(S',OPT_j)}{\phi}].
    \end{align*}
\end{proof}

Next, leveraging the result in Lemma \ref{lem:nonmono}, we prove the approximation ratio in Theorem \ref{thm:chunkgreedynonmono}.
\begin{proof}
  Similar to the proof of Theorem \ref{thm:chunkgreedy}, we denote the solution set after running the subroutine algorithm in $j$-th subgroup during the $i$-th round of the outer for loop in Line \ref{line:outer_for_loop_starts} in Algorithm \ref{alg:partition_greedy_general} as $S_{i,j}$, and we define the solution set after completing the $i$-th round in Algorithm \ref{alg:partition_greedy_general} as $S_{i}$. From Lemma \ref{lem:nonmono}, we have that
  \begin{align*}
        \mE[f(S_{i,j})-f(S_{i,j-1})]&\geq\frac{\mE[\Delta f(S_{i,j},OPT_j)]}{\phi}\\
        &\geq\frac{\mE[\Delta f(S_{i},OPT_j)]}{\phi},
    \end{align*}
  
  By summing over all $j\in[N]$, we have 
    \begin{align*}
        \mE [f(S_{i})-f(S_{i-1})]&\geq\frac{\sum_{j=1}^N\mE[\Delta f(S_{i},OPT_j)]}{\phi}\\
&\geq\frac{\mE[\Delta f(S_{i},OPT')]}{\phi}.
    \end{align*}
   Then it follows that 
    \begin{align*}
        \mE [f(S_{i})-f(S_{i-1})]\geq\mE[\frac{f(S_{i}\cup OPT')-f(S_i)}{\phi}].
    \end{align*}
    Notice that by the greedy selection step, for each group $j$ and each element $a\in OPT'\cap U_j$ appears in $S_i$ with probability at most $1-(1-\frac{1}{r_j\phi })^{r_ji}$. Since $(1-\frac{1}{x})^x$ increases with $x$ in the range of $[1,+\infty)$, we have that $(1-\frac{1}{r_j\phi })^{r_j\phi }\geq (1-\frac{1}{\phi})^\phi$. Therefore, we would get $1-(1-\frac{1}{r_j\phi })^{r_ji}\leq 1-(1-\frac{1}{\phi})^i$. From Lemma 2.2 in \cite{buchbinder2014submodular}, we can conclude that 
    \begin{align*}
        \mE [f(S_{i}\cup OPT')]\geq(1-\frac{1}{\phi})^if(OPT').
    \end{align*}
    By rearranging the above inequality, we can get that 
    \begin{align*}
        \mE [f(S_{i})]\geq\frac{\phi}{\phi+1}\mE [f(S_{i-1})]+\frac{1}{\phi+1}(1-\frac{1}{\phi})^if(OPT').
    \end{align*}

    By induction, we have that the output solution set satisfies that
    \begin{align*}
        \mE [f(S)]&=\mE [f(S_{\phi})]\nonumber\\
        &\geq\frac{1}{\phi+1}\sum_{i=1}^\phi(\frac{\phi}{\phi+1})^{\phi-i}(1-\frac{1}{\phi})^if(OPT')\nonumber\\
        &\geq\frac{1}{\phi+1}\sum_{i=1}^\phi(\frac{\phi-1}{\phi})^{\phi-i}(1-\frac{1}{\phi})^if(OPT')\nonumber\\
        &\geq\frac{\phi}{\phi+1}(1-\frac{1}{\phi})^\phi f(OPT')\geq\frac{1}{e}(1-\frac{2}{\phi+1})f(OPT').
    \end{align*}
    where the last inequality follows from the fact that $(1-\frac{1}{\phi})^{\phi-1}\geq e^{-1}$ for any $\phi>1$. From the definition that $\phi=\lfloor\sqrt{\min_{i\in[N]}k_i}\rfloor-1$, it then follows that $k_i- \phi\lfloor k_i/\phi\rfloor\leq  \lfloor k_i/\phi\rfloor$ for any $i\in[N]$. Therefore, from the result in Lemma \ref{lem:partial_OPT}, we get that
    \begin{align}
    \label{eqn:nonmono_partial_OPT}
        f(OPT')\geq\frac{\phi}{\phi+1}f(OPT)
    \end{align}
    where $OPT$ is the optimal solution to the submodular maximization problem $\arg\max_{S\in\mathcal{P}}f(S)$. By combining (\ref{eqn:nonmono_induction}) and (\ref{eqn:nonmono_partial_OPT}) together, we can prove the result in the Lemma.
\end{proof}

\begin{lemma}
\label{lem:partial_OPT}
    Suppose the ground set $U$ is divided into $N$ disjoint subgroups $U_1$, $U_2$,..., $U_N$, then for any partition matroid $\mathcal{P}=\{S\subseteq U:|S\cap U_j|\leq k_j, \forall j\in[N]\}$, let us define the matroid $\mathcal{P}'=\{S\subseteq U:|S\cap U_j|\leq \lfloor \frac{k_j}{c}\rfloor c, \forall j\in[N]\}$ for some positive integer $c$. If for any $j\in[N]$, it satisfies that $\lfloor \frac{k_j}{c}\rfloor\geq k_j-c\lfloor \frac{k_j}{c }\rfloor$, then it follows that for any submodular function $f$, we have
    \begin{align*}
        \max_{S\in\mathcal{P}'}f(S)\geq\frac{c }{c +1}\max_{S\in\mathcal{P}}f(S).
    \end{align*}
\end{lemma}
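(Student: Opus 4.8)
The plan is to reduce to the optimal solution $OPT$ of the coarser problem $\max_{S\in\mathcal{P}}f(S)$ and carve out of it a feasible solution for $\mathcal{P}'$ that retains a $\tfrac{c}{c+1}$ fraction of the value. Write $k_j = c m_j + s_j$ with $m_j=\lfloor k_j/c\rfloor$ and $0\le s_j<c$; the hypothesis $\lfloor k_j/c\rfloor\ge k_j-c\lfloor k_j/c\rfloor$ is precisely $m_j\ge s_j$, which yields $k_j\le (c+1)m_j$. Setting $n_j:=|OPT\cap U_j|\le k_j\le(c+1)m_j$, the integer $n_j-cm_j$ is at most $n_j/(c+1)$, hence $n_j-cm_j\le\lfloor n_j/(c+1)\rfloor$ (and this quantity is $\le 0$ when $n_j<c+1$, in which case nothing below needs to be removed from partition $j$).

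Next I would partition each $OPT\cap U_j$ into $c+1$ blocks $G_{j,0},\dots,G_{j,c}$ of nearly equal size, so that $|G_{j,r}|\ge\lfloor n_j/(c+1)\rfloor\ge n_j-cm_j$ for every $r$. For each $r\in\{0,\dots,c\}$ set $B_r:=\bigcup_{j\in[N]}G_{j,r}$, so that $\{B_r\}_{r=0}^{c}$ is a partition of $OPT$, and put $A_r:=OPT\setminus B_r$. Then $|A_r\cap U_j| = n_j-|G_{j,r}|\le cm_j=\lfloor k_j/c\rfloor c$, so $A_r\in\mathcal{P}'$ for every $r$.

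It remains to show $\max_r f(A_r)\ge\tfrac{c}{c+1}f(OPT)$ via a standard submodular averaging argument: since $B_0\cup\cdots\cup B_{r-1}\subseteq A_r$, diminishing returns gives $f(OPT)-f(A_r)\le f(B_0\cup\cdots\cup B_r)-f(B_0\cup\cdots\cup B_{r-1})$, and summing over $r$ and telescoping yields $\sum_{r=0}^{c}\bigl(f(OPT)-f(A_r)\bigr)\le f(OPT)-f(\emptyset)\le f(OPT)$. Hence $\sum_{r=0}^{c}f(A_r)\ge c\,f(OPT)$, so some $A_r$ has $f(A_r)\ge\tfrac{c}{c+1}f(OPT)=\tfrac{c}{c+1}\max_{S\in\mathcal{P}}f(S)$, and since $A_r\in\mathcal{P}'$ we are done. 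The one place to be careful is the counting that ensures $A_r\in\mathcal{P}'$ — namely that each partition sheds at most $\lfloor n_j/(c+1)\rfloor$ elements — which is exactly where the hypothesis on $k_j$ enters, and the small cases $n_j<c+1$ deserve a separate check; the averaging step itself is routine given $f(\emptyset)\ge 0$.
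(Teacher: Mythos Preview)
Your argument is correct and takes a genuinely different route from the paper's. The paper introduces an auxiliary matroid $\mathcal{P}_1=\{S:|S\cap U_j|\le (c+1)m_j\}$ (which contains $\mathcal{P}$ by the hypothesis), takes the optimum $OPT''$ of $f$ over $\mathcal{P}_1$, partitions $OPT''$ into $c+1$ pieces $O_1,\dots,O_{c+1}$ each in $\mathcal{P}_0=\{S:|S\cap U_j|\le m_j\}$, orders these pieces greedily by marginal gain, and shows that the union $O_1\cup\cdots\cup O_c\in\mathcal{P}'$ already captures a $\tfrac{c}{c+1}$ fraction of $f(OPT'')\ge \max_{S\in\mathcal{P}}f(S)$. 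You instead work directly with $OPT\in\mathcal{P}$, split it evenly within each group into $c+1$ blocks $B_r$, verify each complement $A_r=OPT\setminus B_r$ lies in $\mathcal{P}'$, and average over the $c+1$ complements via a telescoping bound. Your approach is a bit leaner---no auxiliary matroid, no greedy ordering---at the cost of giving only existence of a good $A_r$ rather than an explicit one; the paper's greedy ordering is a mild extra step but pins down a specific feasible set. Both arguments rely on $f(\emptyset)\ge 0$, which you flag. One small slip in your closing summary: you wrote that each partition ``sheds at most $\lfloor n_j/(c+1)\rfloor$ elements,'' but the point is that it sheds \emph{at least} that many (since $|G_{j,r}|\ge\lfloor n_j/(c+1)\rfloor\ge n_j-cm_j$), which is what forces $A_r\in\mathcal{P}'$; your main argument states this correctly.
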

\begin{proof}
    For notation simplicity, we also define $r_j=\lfloor \frac{k_j}{c }\rfloor$ where $j\in[N]$, and we define two matroids $\mathcal{P}_0=\{S\subseteq U:|S\cap U_j|\leq r_j, \forall j\in[N]\}$ and $\mathcal{P}_1=\{S\subseteq U:|S\cap U_j|\leq r_j(c+1), \forall j\in[N]\}$. Let us denote the optimal solution of $\max_{S\in\mathcal{P}_1}f(S)$ as $OPT''$.  Then by definition, we can see that $OPT''$ can be divided into $(c+1)$ disjoint subsets $O_1,...,O_{c +1}$ such that each $O_i\in\mathcal{P}_0$. Without loss of generality, we assume that the subsets are chosen greedily such that the index satisfies $\Delta f(\cup_{j\in[i-1]}O_j,O_i)\geq\Delta f(\cup_{j\in[i-1]}O_j,O_l)$ for any $1\leq i\leq c$ and $l>i$. It then follows that by submodularity, $\Delta f(\cup_{j\in[i-1]}O_j,O_i)\geq\Delta f(\cup_{j\in[i-1]}O_j,O_l)\geq\Delta f(\cup_{j\in[l-1]}O_j,O_l)$ for any $l>i$. Therefore,
    \begin{align*}
        f(OPT'')-f(\cup_{j\in[c]}O_j)&=\Delta f(\cup_{j\in[c]}O_j,O_{c +1})\\
        &\leq\frac{\sum_{i\in[c]}\Delta f(\cup_{j\in[i-1]}O_j,O_{i})}{c }\\
        &=\frac{f(\cup_{j\in[c]}O_j)-f(\emptyset)}{c }.
    \end{align*}
    By rearranging the above inequality, we would get that $f(\cup_{j\in[c]}O_j)\geq\frac{c }{c +1}f(OPT'')$. Since $\cup_{j\in[c]}O_j\in\mathcal{P}'$, then we have that $\max_{S\in\mathcal{P}'}f(S)\geq\frac{c }{c +1}f(OPT'')$. Notice that $\lfloor \frac{k_j}{c }\rfloor(c+1)\geq k_j$ implies that for any subset $S\in\mathcal{P}$, it also satisfies that $S\in\mathcal{P}_1$. Therefore, $\max_{S\in\mathcal{P}_1} f(S)\geq \max_{S\in\mathcal{P}} f(S)$ and we can conclude the proof.

\end{proof}


\subsection{Proof of Theorem \ref{thm:hardness}}
\label{appdx:hardness}
In this section, we present the omitted proof of Theorem \ref{thm:hardness}. To prove the theorem, we construct a class of instances to demonstrate that the standard greedy algorithm can't achieve an approximation ratio better than $1/2$. We begin by presenting relevant definitions of the set functions and constraints, followed by a formal description of the hardness instance.

Suppose the ground set $U$ is partitioned into $N$ disjoint subsets $U_1, U_2, \dots, U_N$, with each subset $U_i$ containing  $2k_i$  elements. Define a set function  $t: U \to 2^M$, where  $M$  is a finite set, and let  $c: 2^M \to \mathbb{R}_+$  be a non-negative, monotone, modular function. We define the submodular function  $f: 2^U \to \mathbb{R}_+$  as follows:
\[
f(S) = c\left(\bigcup_{s \in S} t(s)\right)=\sum_{x\in\bigcup_{s \in S} t(s)}c(x).
\]
 The partition matroid is denoted as $\mathcal{P}=\{S\subseteq U:|S\cap U_i|\leq k_i, \forall i\in[N]\}$. Without loss of generality, we assume the partition matroid constraint parameters satisfy that $k_1\leq k_2\leq,...\leq k_N$. For notation simplicity, let us denote the $j$-th element in group $U_i$ as $s_j^{(i)}$. The hardness example is defined as follows. Suppose $\epsilon$ is a constant such that $\epsilon\in(0,1/2)$, then
\begin{enumerate}

    \item \textbf{Set Assignments in $U_1 $}: In the first partition $U_1$, assume that the sets $t(s_{j_1}^{(1)})$ and $t(s_{j_2}^{(1)})$ are disjoint for any distinct $j_1, j_2$, i.e., $t(s_{j_1}^{(1)}) \cap t(s_{j_2}^{(1)}) = \emptyset $.
    \item \textbf{Function Values in $U_1 $}: Set the modular function values for the elements in $U_1 $ as follows:
    \begin{itemize}
        \item For $j \leq k_1 $, $c(t(s_j^{(1)})) = \frac{1}{2} + \epsilon $.
        \item For $k_1 < j \leq 2k_1 $, $c(t(s_j^{(1)})) = \frac{1}{2} $.
    \end{itemize}
    \item \textbf{Set Assignments and Function Values for $i > 1 $}: For each $i_1, i_2 $ such that $1 < i_1, i_2 \leq N $, define:
    \begin{itemize}
        \item If $j \leq k_1 $, then $t(s_{j}^{(i_1)}) = t(s_{j}^{(i_2)}) \subseteq t(s_{j}^{(1)}) $.
        \item If $j > k_1 $, then $t(s_{j}^{(i_1)}) = t(s_{j}^{(i_2)}) = t(s_{1}^{(i_1)}) $.
        \item Set $c(t(s_{j}^{(i)})) = \frac{1}{2}$ for any $i > 1 $ and $j \in [2k_i] $.
    \end{itemize}
\end{enumerate}

 Given this construction, the standard greedy algorithm proceeds as follows: for the first $k_1$ steps, the algorithm would add the elements $s_1^{(1)},s_2^{(1)},...,s_{k_1}^{(1)}$ from $U_1 $, yielding a marginal gain of $\frac{1}{2} + \epsilon$ per step. Thus, the total value after these steps is $\frac{k_1}{2} + k_1 \epsilon $. After the first $k_1$ steps, the algorithm can only select elements from partitions $U_i$ where $i > 1$, with a marginal gain of $0$ at each step due to the structure of $f$ under the current set assignments. Therefore, the value of the submodular objective returned by the standard greedy algorithm is $\frac{k_1}{2}+k_1\epsilon$. 
 
 Next, we consider the optimal solution of the constructed example. Notice that for any of the partitions $U_i$ such that $i>1$, $f(U_i)=k_1/2$. Besides, by the construction, we have that $\bigcup_{s\in U_1}t(s)=\bigcup_{s\in U_2}t(s)$ for any $i_1,i_2>1$. Therefore, $f(\bigcup_{i>1}U_i)=k_1/2$. It then follows that for any subset $S\subseteq U$, \begin{align}\label{eqn:hardness_exp_proof}
     f(S\cap\bigcup_{i>1}U_i )\leq k_1/2.
 \end{align}
 Next, we claim that the optimal solution should satisfy that $f(OPT)\leq k_1 $. We prove the claim by considering the following cases. For any $S\in\mathcal{P}$, then
 \begin{enumerate}
     \item If $f(S\cap\bigcup_{i>1}U_i )= k_1/2$, which means $\bigcup_{s\in S\cap\bigcup_{i>1}U_i} t(s) =\bigcup_{j\leq k_1} t(s_{j}^{(2)})$, then for each $j\leq k_1$, the marginal gain of adding the $j$-th element in the first partition to the the set $S\cap\bigcup_{i>1}U_i$ satisfies that $\Delta f(S\cap\bigcup_{i>1}U_i ,s_{j}^{(1)}) = c(t(s_{j}^{(1)}))-c(t(s_{j}^{(2)}))=\epsilon$ , and for each $j> k_1$, $\Delta f(S\cap\bigcup_{i>1}U_i ,s_{j}^{(1)})=f(s_{j}^{(1)})=1/2$  . It then follows that $\Delta f(S\cap\bigcup_{i>1}U_i ,S\cap U_1)\leq\sum_{s\in S\cap U_1}\Delta f(S\cap\bigcup_{i>1}U_i ,s)\leq k_1/2$. Therefore, $f(S)=\Delta f(S\cap\bigcup_{i>1}U_i ,S\cap U_1)+ f(S\cap\bigcup_{i>1}U_i )\leq k_1$.
     \item If $f(S\cap\bigcup_{i>1}U_i )< k_1/2$, then we have that there exists at least one element $s_j^{(2)}$ for $j\leq k_1$ such that the set $t(s_{j}^{(2)})\notin \bigcup_{s\in S\cap\bigcup_{i>1}U_i} t(s) $ . Let us denote $E=\{j\leq k_1:t(s_{j}^{(2)})\notin S\cap\bigcup_{i>1}U_i \}$. It then follows that $f(S\cap\bigcup_{i>1}U_i )=\frac{k_1-|E|}{2}$. For each $j\leq k_1$, if $j\in E$, $\Delta f(S\cap\bigcup_{i>1}U_i ,s_{j}^{(1)}) = c(t(s_{j}^{(1)}))=1/2+\epsilon$, if $j\notin E$, then $\Delta f(S\cap\bigcup_{i>1}U_i ,s_{j}^{(1)}) = c(t(s_{j}^{(1)}))-c(t(s_{j}^{(2)}))=\epsilon$. For each $j> k_1$, $\Delta f(S\cap\bigcup_{i>1}U_i ,s_{j}^{(1)})=f(s_{j}^{(1)})=1/2$. Similarly, we have that $\Delta f(S\cap\bigcup_{i>1}U_i ,S\cap U_1)\leq\sum_{s_{j}^{(1)}\in S\cap U_1}\Delta f(S\cap\bigcup_{i>1}U_i ,s_{j}^{(1)})\leq |E|(1/2+\epsilon)+(k_1-|E|)/2=|E|\epsilon+k_1/2$. Therefore, we can conclude that $f(S)=\Delta f(S\cap\bigcup_{i>1}U_i ,S\cap U_1)+ f(S\cap\bigcup_{i>1}U_i )\leq k_1 -|E|(1/2-\epsilon)\leq k_1$.
 \end{enumerate}

 It then follows that the $f(\text{OPT}) \leq k_1$. Notice that the set $O=\{s_{k_1+1}^{(1)},...s_{2k_1}^{(1)}, s_{1}^{(2)},...,s_{k_1}^{(2)}\}$ achieves an objective value of: $f(O) = k_1$. Therefore, $f(\text{OPT}) = k_1$. 
    Consequently, the approximation ratio of the standard greedy algorithm should be $\frac{k_1/2+k_1\epsilon}{k_1}=1/2+\epsilon$. When $\epsilon$ approaches $0$, then the approximation ratio goes to $1/2$.

\subsection{Discussion on Theorem \ref{thm:chunkgreedy}}
\label{appdx:mono_result_discuss}
In this portion of the appendix, we illustrate the results of Theorem \ref{thm:chunkgreedy}. First of all, we discuss the difference between the approximation ratio of our proposed algorithm \chunkgreedy and the optimal approximation ratio $1-1/e$ achieved by the previous continuous method. In particular, the difference is $\mathcal{O}(\frac{1}{\sqrt{k_{\min}}})$ with $k_{\min}=\min_{i\in[N]}k_i$. 
In fact, we notice that this difference results from the fact that it scales in the order of $\mathcal{O}(\frac{1}{\phi})$. In the algorithm \chunkgreedy with \blockmono as the subroutine in Section \ref{sec:mono}, $\phi$ is set to be $\phi = \lfloor\sqrt{k_{\min}}\rfloor-1$, which is designed to bound the difference between $k_i$ and $\lfloor\frac{k_i}{\phi}\rfloor\phi$ to ensure that the optimal value of the monotone $\max_{S\in\mathcal{P}'}f(S)$ approximates the optimal value of $\max_{S\in\mathcal{P}}f(S)$ where $\mathcal{P}':=\{S\subseteq U:|S\cap U_i|\leq \lfloor\frac{k_i}{\phi}\rfloor\phi,\forall i\in[N]\}$ and $\mathcal{P}:=\{S\subseteq U:|S\cap U_i|\leq k_i,\forall i\in[N]\}$. 

This motivates the following result: in some cases, if we can design the parameter $\phi$ such that $\lfloor\frac{k_i}{\phi}\rfloor=\frac{k_i}{\phi}$ for any $i\in[N]$, then the partition matroid $\mathcal{P}=\mathcal{P}'$ and we don't need to bound the difference of $\max_{S\in\mathcal{P}'}f(S)$ and $\max_{S\in\mathcal{P}}f(S)$. Therefore, we can further refine the difference between the approximation ratio of \chunkgreedy and the optimal result of $1-1/e$. The result is stated as follows.

\begin{restatable}{theorem}{thmgreedy-nonuniform}
    \label{thm:greedy-gcd}
    Suppose that $gcd(k_1,k_2,\ldots,k_N)=c$, and that \chunkgreedy with \blockmono as a subroutine and $\phi=c$ and $r_j=k_j/c$  for each $j\in[N]$ is run for an instance of monotone SMP over partition matroid $\mathcal{P}:=\{S\subseteq U:|S\cap U_i|\leq k_i, \forall i\in[N]\}$, 
   then \chunkgreedy
   outputs a solution set $S$ that satisfies an approximation ratio of $1-1/e-1/c$.

\end{restatable}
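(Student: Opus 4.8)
The plan is to mirror the monotone analysis leading to Theorem~\ref{thm:chunkgreedy}, but exploit the special choice $\phi = c = \gcd(k_1,\dots,k_N)$, which produces a genuine simplification. The crucial observation is that since $c \mid k_j$ for every $j$, the value $r_j = k_j/c$ is a positive integer---so \blockmono is well-defined---and moreover $r_j\phi = (k_j/c)\cdot c = k_j$ \emph{exactly}. Consequently the auxiliary partition matroid $\mathcal{P}' := \{S \subseteq U : |S\cap U_j|\le r_j\phi,\ \forall j\in[N]\}$ appearing in Lemma~\ref{lem:mono} coincides with $\mathcal{P}$ itself, and in particular $OPT' = OPT$. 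This is exactly the situation anticipated in the discussion preceding the theorem: one no longer needs Lemma~\ref{lem:partial_OPT} to descend from $\max_{S\in\mathcal{P}'}f(S)$ to $\max_{S\in\mathcal{P}}f(S)$, and that is precisely the step that introduced the extra $\tfrac{1}{\phi+1}$ loss in Theorem~\ref{thm:chunkgreedy}. I would first record feasibility: partition $U_j$ contributes $r_j$ elements in each of the $\phi = c$ rounds, hence $|S\cap U_j|\le k_j$ (using $k_j\le|U_j|$), so $S\in\mathcal{P}=\mathcal{P}'$.

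The recursion then goes as in Theorem~\ref{thm:chunkgreedy}. Lemma~\ref{lem:mono} applies verbatim---its proof uses only the greedy rule of \blockmono and the bound $|OPT_j|\le r_j\phi$---so, writing $S_{i,j}$ for the solution after the subroutine has processed partition $j$ in round $i$ (with $S_{i,0}=S_{i-1}$ and $S_{i,N}=S_i$), we have $f(S_{i,j})-f(S_{i,j-1})\ge \Delta f(S_{i,j},OPT_j)/\phi$ with $OPT_j = OPT\cap U_j$. Using $S_{i,j}\subseteq S_i$ and submodularity to replace $S_{i,j}$ by $S_i$, then summing over $j\in[N]$ and using $OPT=\bigcup_j OPT_j$ with submodularity once more, gives
\[
f(S_i)-f(S_{i-1}) \ \ge\ \frac{\Delta f(S_i,OPT)}{\phi} \ =\ \frac{f(S_i\cup OPT)-f(S_i)}{\phi} \ \ge\ \frac{f(OPT)-f(S_i)}{\phi},
\]
the last step by monotonicity. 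Rearranging to $f(S_i)\ge \tfrac{\phi}{\phi+1}f(S_{i-1})+\tfrac{1}{\phi+1}f(OPT)$ and unrolling over $i=1,\dots,\phi$ with $f(S_0)=f(\emptyset)\ge 0$ yields $f(S_\phi)\ge \bigl(1-(\tfrac{\phi}{\phi+1})^\phi\bigr)f(OPT)$.

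It remains to substitute $\phi=c$ and show $\bigl(\tfrac{c}{c+1}\bigr)^c \le e^{-1}+\tfrac1c$, which gives the claimed ratio $1-1/e-1/c$. For this I would use $\ln(1+\tfrac1c)\ge \tfrac{1/c}{1+1/c}=\tfrac{1}{c+1}$ to get $\bigl(\tfrac{c}{c+1}\bigr)^c = e^{-c\ln(1+1/c)}\le e^{-c/(c+1)} = e^{-1}e^{1/(c+1)}$, then bound $e^{1/(c+1)}\le 1+\tfrac{1}{c+1}+\tfrac{1}{(c+1)^2}\le 1+\tfrac{2}{c+1}$ (valid since $\tfrac{1}{c+1}\le\tfrac12$ and $e^x\le 1+x+x^2$ on $[0,\tfrac12]$), whence $\bigl(\tfrac{c}{c+1}\bigr)^c\le e^{-1}+\tfrac{2}{e(c+1)}\le e^{-1}+\tfrac1c$, the last inequality because $\tfrac{2c}{e(c+1)}\le 1$ for all $c\ge 1$ (equivalently $(2-e)c\le e$). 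The only genuinely new ingredients compared to Theorem~\ref{thm:chunkgreedy} are the identity $\mathcal{P}'=\mathcal{P}$ and this elementary estimate; the main thing to be careful about is choosing the intermediate bound on $e^{1/(c+1)}$ tightly enough that the slack lands below $1/c$ rather than settling for the weaker ratio $1-1/e-\tfrac{2}{e(c+1)}$. No probabilistic or continuous-relaxation machinery is needed beyond what already appears in the monotone analysis.
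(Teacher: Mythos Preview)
Your proof is correct and follows essentially the same approach as the paper: both arguments hinge on the observation that with $\phi=c$ and $r_j=k_j/c$ one has $r_j\phi=k_j$ exactly, so $\mathcal{P}'=\mathcal{P}$ and $OPT'=OPT$, after which the recursion from the proof of Theorem~\ref{thm:chunkgreedy} gives $f(S)\ge\bigl(1-(\tfrac{\phi}{\phi+1})^\phi\bigr)f(OPT)$ directly, bypassing Lemma~\ref{lem:partial_OPT}. Your write-up is in fact more complete than the paper's, since you supply a careful justification of the elementary bound $(\tfrac{c}{c+1})^c\le e^{-1}+\tfrac1c$, which the paper simply asserts.
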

\begin{proof}
Following the similar proof of Theorem \ref{thm:chunkgreedy}, we can get that the output solution set $S$ satisfies 
\begin{align*}
       f(S)-f(\emptyset)\geq (1-(\frac{\phi}{\phi+1})^\phi)f(OPT'),
   \end{align*}
   where $OPT'=\arg\max_{S\in\mathcal{P}'}f(S)$ and $\mathcal{P}':=\{S\subseteq U:|S\cap U_i|\leq r_i\phi, \forall i\in[N]\}$. By the assignment of $r_i$ and $\phi$ in this case, we can get that $r_i\phi=k_i$. It then follows that $\mathcal{P}'=\mathcal{P}$ and that $OPT'$ is also the optimal solution to our problem. Therefore, 
   \begin{align*}
       f(S)&\geq(1-(\frac{\phi}{\phi+1})^\phi)f(OPT)\\
       &\geq(1-1/e-1/\phi)f(OPT)=(1-1/e-1/c)f(OPT).
   \end{align*}
\end{proof}

In particular, if $c=\mathcal{O}(k_{\min})$ such as in the case where $k_1=k_2=,...=k_N=k$, we have that the approximation ratio is $1-1/e-1/k$. Therefore, the difference between the approximation ratio and the optimal one is decreased to $\mathcal{O}(\frac{1}{k_{min}})$. The result is stated in Corollary \ref{coro:mono}

Next, we prove that we can improve the approximation ratio of \chunkgreedy algorithm by adding more elements to the solution set. First, we notice that there are two drawbacks of the proposed algorithm \chunkgreedy compared with the standard greedy algorithm. First of all, the approximation ratio of $1-1/e-\frac{1}{\lfloor\sqrt{\min_{i\in[N]}k_i}\rfloor}$ is only better than the approximation ratio of the standard greedy algorithm, which is $1/2$, when the capacity $k_i$ within partition $U_i$ satisfies that $k_i\geq 64$ for each $i\in[N]$. 

Second, the output solution satisfies that $|S\cap U_i|\leq 
  r_i\phi$ for each $i\in[N]$. Notice that $r_i\phi\leq k_i$. If $r_i\phi< k_i$, we can add more elements to the solution set $S$ until it reaches the full rank of the partition matroid. Since the objective function is monotone, we can see that adding more elements would not incur a decrease in the marginal gain. In the following part, we claim that if the standard greedy procedure (Algorithm \ref{alg:greedy}) is applied to the output of \chunkgreedy with \blockmono as the subroutine, the resulting solution set achieves an approximation ratio of $\max\{1/2,1-1/e-\frac{1}{\phi+1}\}$.

  \begin{algorithm}
    \caption{\greedy}\label{alg:greedy}
    \begin{algorithmic}[1]
        \State \textbf{Input: }the output solution set $S$ obtained by running \chunkgreedy with \blockmono as the subroutine and $\phi=\lfloor\sqrt{\min_{i\in[N]}k_i}\rfloor-1$ and $r_j:=\lfloor k_j/\phi\rfloor$
        \State \textbf{Output: }$A\in U$
        \State $A\gets S$
    \While{$\exists x$ such that $A\cup\{x\}\in\mathcal{P}$}
        \State $A\gets A\cup \arg\max_{x\in U, A\cup\{x\}\in\mathcal{P}}\Delta f(A,x)$
        \EndWhile
        \textbf{return }$A$
    \end{algorithmic}
\end{algorithm}

\begin{theorem}
\label{thm:standard_greedy}
    Suppose we run the standard greedy algorithm in Algorithm \ref{alg:greedy} with input being the output solution set of the \chunkgreedy algorithm, then the output solution set achieves an approximation ratio of $\max\{1/2,1-1/e-\frac{1}{\phi+1}\}$ where $\phi=\lfloor\sqrt{\min_{i\in[N]}k_i}\rfloor-1$.
\end{theorem}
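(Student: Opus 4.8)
The claimed ratio is a maximum of two quantities, and I would establish each bound separately: $f(A)\ge(1-1/e-\tfrac{1}{\phi+1})f(\mathrm{OPT})$ and $f(A)\ge\tfrac12 f(\mathrm{OPT})$, where $\mathrm{OPT}=\arg\max_{S\in\mathcal{P}}f(S)$ is the optimum of the underlying monotone SMP instance. The first bound is immediate: Algorithm~\ref{alg:greedy} only ever adds elements, so $A\supseteq S$, and monotonicity together with Theorem~\ref{thm:chunkgreedy} gives $f(A)\ge f(S)\ge(1-1/e-\tfrac{1}{\phi+1})f(\mathrm{OPT})$. So all the work is in the $1/2$ bound, which I plan to obtain by running the classical greedy-for-a-matroid argument, carefully adapted to the two-phase construction of $A$.

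Enumerate $A=\{a_1,\dots,a_m\}$ in insertion order --- first across the $\phi$ rounds of \chunkgreedy (through \blockmono), then across the augmentation loop of \greedy --- and write $A_t=\{a_1,\dots,a_t\}$, with $a_t\in U_{j_t}$. The key claim I would prove is: for every $t$ and every $x\in U_{j_t}$ with $A_{t-1}\cup\{x\}\in\mathcal{P}$, we have $\Delta f(A_{t-1},a_t)\ge\Delta f(A_{t-1},x)$. If $a_t$ was added by \greedy this is the defining property of the step. If $a_t$ was added by \blockmono, the step takes the argmax of $\Delta f(\cdot,\cdot)$ over all of $U_{j_t}$, and just before $a_t$ is inserted part $j_t$ holds at most $r_{j_t}\phi-1<k_{j_t}$ elements of the current set (since \blockmono adds $r_{j_t}$ elements to part $j_t$ in each of the $\phi$ rounds), so every $x\in U_{j_t}\setminus A_{t-1}$ is feasible at that moment and the argmax property gives the claim.

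Next I would build an injection $\sigma\colon\mathrm{OPT}\setminus A\to A\setminus\mathrm{OPT}$ that respects the partition, writing $\sigma(o)=a_{t(o)}$. Since the while-loop of \greedy stops only when every part is saturated (using the standing assumption $k_j\le|U_j|$), $A$ is a basis of the partition matroid and $|A\cap U_j|=k_j\ge|\mathrm{OPT}\cap U_j|$; hence $|\mathrm{OPT}\cap U_j\setminus A|\le|A\cap U_j\setminus\mathrm{OPT}|$, and choosing an arbitrary injection inside each $U_j$ and gluing yields a $\sigma$ with $o$ and $\sigma(o)$ always lying in the same part. For a matched pair, say $o,\sigma(o)=a_t\in U_j$: just before step $t$ we have $|A_{t-1}\cap U_j|<k_j$, so $A_{t-1}\cup\{o\}\in\mathcal{P}$, and the key claim together with submodularity gives $\Delta f(A_{t-1},a_t)\ge\Delta f(A_{t-1},o)\ge\Delta f(A,o)$. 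The standard chain then closes the proof: by monotonicity $f(\mathrm{OPT})-f(A)\le f(\mathrm{OPT}\cup A)-f(A)$; by submodularity this is at most $\sum_{o\in\mathrm{OPT}\setminus A}\Delta f(A,o)$; by the pairwise bound and injectivity of $\sigma$ (all marginals being nonnegative by monotonicity) this is at most $\sum_{t=1}^{m}\Delta f(A_{t-1},a_t)=f(A)-f(\emptyset)\le f(A)$, so $f(A)\ge\tfrac12 f(\mathrm{OPT})$.

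The main obstacle is exactly the mismatch between greedy rules: the elements of $S$ are chosen by a \emph{within-part} argmax, not a global one, so the textbook matching --- which pairs each $o\in\mathrm{OPT}$ with the globally greedy element selected while $o$ was still available --- is not directly applicable. The fix sketched above is to carry out the matroid exchange matching part by part, where \blockmono's rule is precisely the relevant local greedy rule and feasibility of the matched $o$ is automatic from the capacity bookkeeping; the remaining point requiring care is that $A$ is a genuine basis with $|A\cap U_j|=k_j$ for all $j$, which is where the assumption $k_j\le|U_j|$ enters.
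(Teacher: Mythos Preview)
Your proof is correct and takes a genuinely different route from the paper's. The paper splits the partition matroid into two pieces, $\mathcal{P}'=\{S:|S\cap U_j|\le r_j\phi\}$ and $\mathcal{P}''=\{S:|S\cap U_j|\le k_j-r_j\phi\}$, and analyzes the two phases separately: it re-derives from the \chunkgreedy analysis that $f(S)-f(\emptyset)\ge\Delta f(S,O)$ for every $O\in\mathcal{P}'$, then uses a matroid-exchange argument on $\mathcal{P}''$ for the augmentation phase to get $f(A)-f(S)\ge\Delta f(A,O')$ for every $O'\in\mathcal{P}''$, and finally adds the two and observes that every set in $\mathcal{P}$ decomposes as $O\cup O'$. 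You instead run a single matching argument over the full insertion sequence, exploiting the observation that \emph{both} phases select an element that is argmax within its own part (globally greedy being a fortiori within-part greedy), so a part-respecting injection $\sigma:\mathrm{OPT}\setminus A\to A\setminus\mathrm{OPT}$ suffices. Your approach is more elementary and avoids the auxiliary matroids entirely; the paper's decomposition is more modular in that it cleanly reuses the earlier \chunkgreedy bound rather than re-examining the per-step greedy rule. One small remark: in your key-claim paragraph, the feasibility bookkeeping during \blockmono is actually irrelevant to the claim itself (the argmax over all of $U_{j_t}$ already delivers it), though it is of course needed later when you apply the claim with $x=o$.
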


  \begin{proof}
    First of all, notice that $S\subseteq A$, by the result of Theorem \ref{thm:chunkgreedy}, we can see that $f(S)\geq(1-1/e-\frac{1}{\phi+1})f(OPT)$. Since $f$ is monotone, $f(A)\geq f(S)\geq(1-1/e-\frac{1}{\phi+1})f(OPT)$. Then to prove the result in the Theorem \ref{thm:standard_greedy}, it suffices to prove that $f(A)\geq f(OPT)/2$. Here we use the same notations as in the proof of Theorem \ref{thm:chunkgreedy}, which means that we define the partition matroid of $\{S\subseteq U:|S\cap U_j|\leq 
  r_j\phi \}$ as $\mathcal{P}'$, and we define the optimal solution of the problem $\max_{S\in\mathcal{P}'}f(S)$ as $OPT'$. 
   Denote the solution set after completing the $i$-th round of the outer for loop in Line \ref{line:outer_for_loop_starts} in Algorithm \ref{alg:partition_greedy_general} as $S_{i}$. Following the similar idea in the proof of Theorem \ref{thm:chunkgreedy}, we can see that for any $O\in\mathcal{P}'$, it holds that
   \begin{align*}
       f(S_{i})-f(S_{i-1})
       &\geq \frac{\Delta f(S_{i},O)}{\phi}\\
       &\geq \frac{\Delta f(S,O)}{\phi},
   \end{align*}
   where the last inequality follows from submodularity and the fact that $S_i\subseteq S_{\phi}=S$. Summing over all $i$, then we get
   \begin{align}
   \label{eqn:result_of_block_g}
       f(S)-f(\emptyset)&\geq\Delta f(S,O).
   \end{align}

Let us define the partition matroid $\mathcal{P}'':=\{S\subseteq U:|S\cap U_i|\leq k_i-r_i\phi, \forall i\in[N]\}$. Let us define the solution set $A$ before the $i$-th round in Algorithm \ref{alg:greedy} as $A_i$, and the element added in the $i$-th round as $a_i$. Since $\mathcal{P}''$ is a matroid, we have that for any $O'\in \mathcal{P}''$, there exists an ordering of $O'=\{o_1',o_2',...,o_t'\}$ such that for each $i\in[t]$, $A_i/S\cup \{o_i'\}\in\mathcal{P''}$. Therefore, for each $i\in[t]$, $A_i\cup\{o_i'\}\in\mathcal{P}$. By the greedy selection rule in Algorithm \ref{alg:greedy}, we have that 
\begin{align*}
    \Delta f(A_i,a_i)\geq \Delta f(A_i,o_i')\geq \Delta f(A,o_i'),
\end{align*}
where the second inequality follows from the fact that $A$ is the output of Algorithm \ref{alg:greedy} and that $A_i\subseteq A$. 
Summing over all $i$, we can get that 
\begin{align}
\label{eqn:standard_greedy}
    f(A)-f(S)\geq\sum_{i}\Delta f(A,o_i')\geq \Delta f(A,O').
\end{align}
Summing over (\ref{eqn:result_of_block_g}) and (\ref{eqn:standard_greedy}), we can get that 
\begin{align*}
    f(A)-f(\emptyset)&\geq \Delta f(A,O')+\Delta f(S,O)\\
    &\geq\Delta f(A,O')+\Delta f(A,O)\\
    &\geq\Delta f(A, O'\cup O).
\end{align*}

Since the above inequality holds for any $O\in\mathcal{P}'$ and $O'\in\mathcal{P}''$. Therefore, 
\begin{align*}
    f(A)-f(\emptyset)&\geq\max_{O\in\mathcal{P}',O'\in\mathcal{P}''}\Delta f(A, O'\cup O).
\end{align*}
Notice that any set in $\mathcal{P}$ can be decomposed into the union of a set in $\mathcal{P}'$ and a set in $\mathcal {P}''$. It then follows that $\max_{O\in\mathcal{P}',O'\in\mathcal{P}''}\Delta f(A, O'\cup O)\geq\Delta f(A, OPT)$. Therefore, $f(A)\geq f(OPT)/2$.
\end{proof}

\section{Appendix for Section \ref{sec:exp}}
 \label{appdx:exp}
 In this section, we present the additional experimental setup and results omitted in Section \ref{sec:exp}. In particular, we present additional details about the experimental setup in Section \ref{appdx:setup}, and additional experimental results in Section \ref{appdx:results}.

 \subsection{Experimental setup}
\label{appdx:setup}
In this section, we provide additional details about the applications used to evaluate our algorithms, which include set cover, max cover, and graph cut. Below, we define each application and describe the associated setup in detail.

In the application of set cover, the function $f$ is defined to be the number of tags covered by the elements in a subset. The problem is defined as follows.
\begin{definition}{(\textbf{Set Cover})}
Suppose there are a total of $n$ elements denoted as $U$. Let $T$ be a set of tags. Each element in $U$ is tagged with a set of elements from $T$ via a function $t: U\rightarrow 2^T$. The function $f$ is defined as
\begin{align*}
    f(S)=|\cup_{s\in S}t(s)|,\qquad\forall S\in U.
\end{align*}
\end{definition}
Next, we introduce the definition of max cover, which is a monotone submodular function defined on graphs.
\begin{definition}{(\textbf{Max Cut})}
Let $G = (V, E)$ be a graph, and $w:E\rightarrow \mathbb{R}_{\geq 0}$ be a function that assigns a weight to every edge in the graph. The function $f:2^V\rightarrow \mathbb{R}_{\geq 0}$ maps a subset of vertices $X\subseteq V$ to the total weight of edges between $X$ and $V\backslash X$. More specifically,
\begin{align*}
    f(X)=\sum_{x\in X \text{ or } y\in  X}w(x,y).
\end{align*}
\end{definition}
We also evaluate our experiments on the instance of image summarization. For this task, we use a subset of the ImageNet dataset (ImageNet\_50).
\begin{definition}{(\textbf{Image Summarization})}
Let \( N \subseteq \mathbb{R}^d \) denote the ground set, where each item \( x \in N \) (e.g., an image) is represented by a feature vector. The objective is to maximize the \textit{Determinantal Point Process (DPP)} function \citep{iyer2015submodular}, which is a monotone submodular function defined as:
\[
f(S) = \log \det (I + K_S),
\]
where \( I \) is the identity matrix, \( K \in \mathbb{R}^{|N| \times |N|} \) is a positive semidefinite kernel matrix, and \( K_S \) denotes the principal submatrix of \( K \) indexed by the subset \( S \subseteq N \).
\end{definition}

For general SCP, where $f$ can be nonmonotone, the application we consider is where $f$ is a graph cut function, which is a submodular but not necessarily monotone function. 
\begin{definition}{(\textbf{Graph Cut})}
Let $G = (V, E)$ be a graph, and $w:E\rightarrow \mathbb{R}_{\geq 0}$ be a function that assigns a weight to every edge in the graph. The function $f:2^V\rightarrow \mathbb{R}_{\geq 0}$ maps a subset of vertices $X\subseteq V$ to the total weight of edges between $X$ and $V\backslash X$. More specifically,
\begin{align*}
    f(X)=\sum_{x\in X, y\in V\backslash X}w(x,y).
\end{align*}
\end{definition}

Next, we present more details about the experimental setup in the order of the problems we consider. For the experiments on nonmonotone SCP, the dataset is the email-Euall dataset, where the dataset is partitioned into $5$ different subgroups based on the synthetic labels of the dataset. The group proportions are uniform, i.e., the parameter $p_j$ used in this experiment satisfies $p_1=p_2=\cdots=p_5=1/5$. To speed up the experiments, the conversion algorithm's subroutine is parallelized across 10 threads.  Additional details about the values of the parameters in the experiments are presented as follows. The parameter $\alpha=0.2$, $\epsilon = 0.05$ and $\delta = 0.1$.  

For the experiments on monotone SCKP, we run the experiments on two instances, which include max cover and set cover. For the max cover instance, we use a subset of the Twitch Gamers dataset~\citep{twitch}, selecting 2,000 users speaking six major languages which include English, German, French, Spanish, Russian, and Chinese. For the set cover instance, we use two datasets here. The first one is the core dataset, which is the Corel5k set of
images in \cite{duygulu2002object} ($n=4500$). We assign a label to each element in the dataset uniformly selected from $\{0,1,2,3,4\}$. Another dataset we use here is the synthetic dataset. The synthetic dataset is generated with $5$ partitions with $40*i+200$ number of elements in partition $i$ for each $i\in[4]$. The synthetic dataset has a similar structure as the tightness example in Section \ref{appdx:hardness} in the appendix. In the first partition, each element is mapped to a disjoint set of tags. For the elements partition $i$ where $i>1$, the mapped sets of tags of $100$ elements are the same, with the other elements mapped to disjoint sets of $25$ tags. The cost of each element in the synthetic dataset and in the twitch dataset is generated randomly in the range of $[0.001,10]$. The other parameters in the experiments comparing different values of $\tau$ include: $\alpha = 0.2$, $\epsilon=0.05$. The parameters for the experiments comparing different values of $\alpha$ include: threshold value $\tau=700$, $\epsilon=0.05$. The parameter $p_j$ used in this experiment satisfies $p_1=p_2=\cdots=p_5=1/5$.  Next, we illustrate the two algorithms used in the experiments. The GREEDY algorithm uses the converting theorem in Algorithm \ref{alg:conver} with the subroutine being a greedy algorithm. In particular, the subroutine greedy algorithm adds the element $s=\arg\max_{x:S\cup x\in \mathcal{P}}\frac{\Delta f(S,x)}{c(x)}$ to the solution set $S$, where $\mathcal{P}=\{S\subseteq U:c(S\cap U_j)\leq p_jv,\forall j\in[N]\}$. Here the subroutine algorithm of GREEDY is not guaranteed with any approximation ratio. In this sense, this algorithm can be regarded as a heuristic algorithm. The GREEDY-Knapsack algorithm runs by iteratively adding the element with the highest density of marginal gain, i.e., $s=\arg\max_{x\in U}\frac{\Delta f(S,x)}{c(x)}$ until $f(S)\geq(1-\epsilon)\tau$.

 For the SCF experiments, we consider the same synthetic dataset and the corel dataset used in the experiment for SCKP. Apart from these datasets, we also consider the image summarization task, where the goal is to select a diverse and representative image subset across all classes.  The dataset used here is ImageNet \citep{deng2009imagenet}, consisting of 50 classes and 26,112 images (ImageNet\_50). Each image is represented by a feature vector extracted using ResNet-18.  Additionally, in our experiment, we set $K$ as a Gaussian kernel matrix such that $K_{ij} = e^{-||x_i - x_j||^2 / \sigma^2}$.
 
 To ensure a fair comparison among the used algorithms, we keep the approximation ratio on the function value $f$ the same by setting  $\varepsilon = 0.05$ for THRES-Fair and $\varepsilon = 0.1$ for GREEDY-Fair and BLOCK-G-Fair while keeping the other parameters the same. 
\subsection{Additional experimental results}
\label{appdx:results}
The additional experimental results comparing different algorithms for different problems are presented in Figure \ref{fig:exp_result_appdx_knapsack}, \ref{fig:exp_result_appdx_knapsack_alpha} and \ref{fig:exp_result_appdx}. The additional experimental results for SCKP algorithms on the corel dataset and the synthetic dataset are presented in Figure \ref{fig:exp_result_appdx_knapsack} and \ref{fig:exp_result_appdx_knapsack_alpha}. The results demonstrate that our algorithm, BLOCK-G, achieves a slightly lower cost compared to GREEDY and significantly outperforms GREEDY-Knapsack in this regard. Additionally, BLOCK-G requires substantially fewer function queries and has a much faster runtime than GREEDY, highlighting its practical efficiency. The function values $f$ for different algorithms are similar, which is because all algorithms (including baselines) in the experiments are designed to terminate once the function value reaches $(1-\epsilon)\tau$. For the experiments on different values of $\alpha$, we can see that increasing 
$\alpha$ significantly reduces the number of evaluations, leading to dramatic improvements in runtime. This also corresponds to the results of query complexity of our converting algorithm (Algorithm \ref{alg:convk}) as proved in Theorem \ref{thm:convk}.

Further results on the query complexity and runtime for the non-monotone SCP problem are provided in Figures~\ref{fig:euall_q} and~\ref{fig:euall_time}. From the results, we can see that the BLOCK-G algorithm runs faster than the STREAM algorithm and the GUIDED-RG algorithm, which demonstrates the efficiency of our algorithm.

The additional results of comparing different algorithms in terms of the query complexity for the experiments on the SCF problem are presented in Figure \ref{fig:cover5000_fair_q} and \ref{fig:synthetic_fair_q}. From the results, we can see that the query complexity of the BLOCK-G-Fair algorithm is better than that of the GREEDY-Fair algorithm, and is worse than THRES-Fair. This is because these three algorithms differ in the subroutine algorithm used in the converting algorithm in Algorithm 1 in \cite{chen2024fair} developed to convert an algorithm for SMF to SCF. Specifically, THRES-Fair used the threshold greedy algorithm, which runs in time complexity of $\mathcal{O}(\frac{n}{\epsilon}\log\frac{n}{\epsilon})$ while the subroutine algorithms for BLOCK-G-Fair and GREEDY-Fair both run in time $\mathcal{O}(nk_g\beta)$ where $k_g$ is the guess of $|OPT|$ and the parameter $\beta$ refers to the approximation ratio. Therefore, the query complexity of BLOCK-G-Fair is lower than GREEDY-Fair because the parameter $\beta$ for BLOCK-G-Fair is $\frac{\ln\frac{1}{\epsilon}}{\ln 2}$, which is smaller than the GREEDY-Fair, which is $\mathcal{O}(\frac{1}{\epsilon})$. 

The results of the function values for different assignments of $\tau$ on the experiments of SCF are presented in Figure \ref{fig:synthetic_fair_f} and \ref{fig:cover5000_fair_f}. 
From the plots, we can see that the function values of the returned solutions of different algorithms are almost the same, and are linear in the threshold value $\tau$. This aligns with our theoretical guarantee of different algorithms, which requires that $f(S)\geq0.9\tau$ for all of the algorithms. Finally, we also provide the results of the execution time of running different algorithms in Figure \ref{fig:cover5000_fair_time},  and \ref{fig:synthetic_fair_t}. 

The additional experimental results on the ImageNet\_50 dataset are presented in Figure \ref{fig:exp_result_imagebet}. From these results, we observe that block-greedy consistently achieves significantly better fairness performance and, in many cases, returns solutions with lower or comparable cost to baselines. This demonstrates its practical effectiveness, especially in fairness-sensitive applications.

Finally, we also plot the distribution of different labels in the solutions produced by these algorithms on the corel dataset with $\tau = 300$, as is presented in Figure \ref{fig:threshold_fair_diff}, \ref{fig:greedy_fair_differ}, and \ref{fig:blockG_fair_differ}. From the plots, we can see that over $30 \%$ of the elements in the solution returned by GREEDY-Fair and THRES-Fair have the label $1$, which indicates a lack of fairness in the output distribution. While the solutions produced by our algorithm BLOCK-G-Fair exhibit significantly fairer distributions across different labels, demonstrating the effectiveness of our proposed algorithms. 

\begin{figure*}[t!]
    \centering
    \hspace{-0.5em}
     \subfigure[corel time (SCKP)]
{\label{fig:cover5000_time_knapsack}\includegraphics[width=0.24\textwidth]{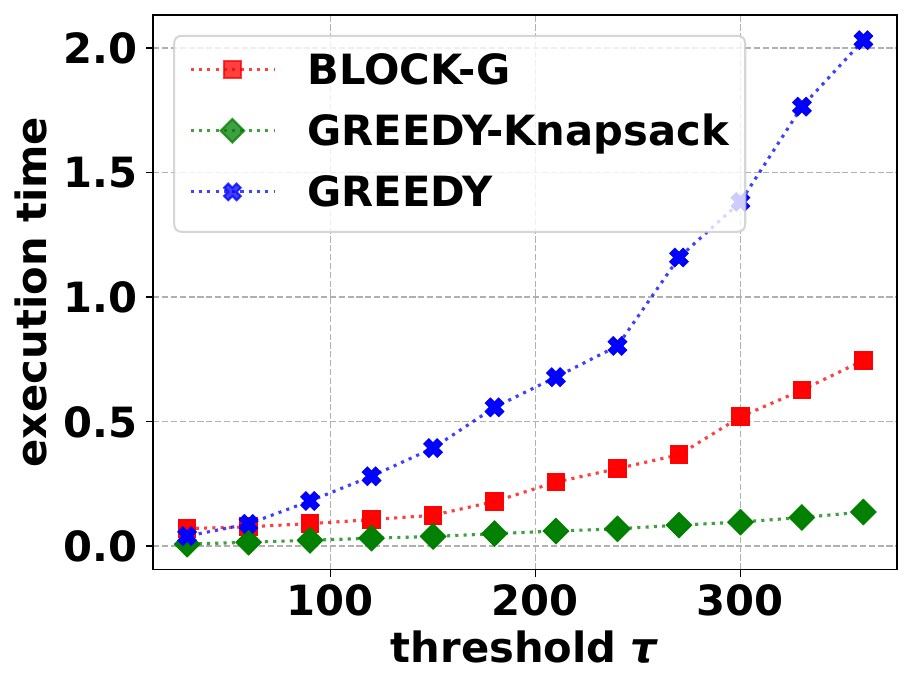}} 
\hspace{-0.5em}
\subfigure[corel $f$ (SCKP)]
{\label{fig:cover5000_f_knapsack}\includegraphics[width=0.24\textwidth]{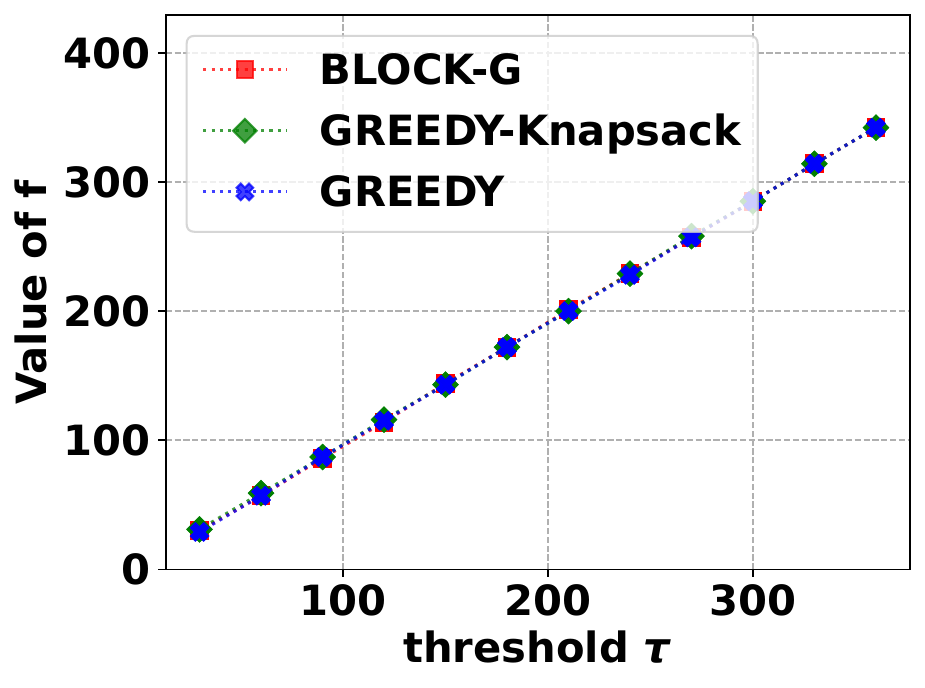}}
\hspace{-0.5em}
\subfigure[corel queries (SCKP)]
{\label{fig:cover5000_q_knapsack}\includegraphics[width=0.24\textwidth]{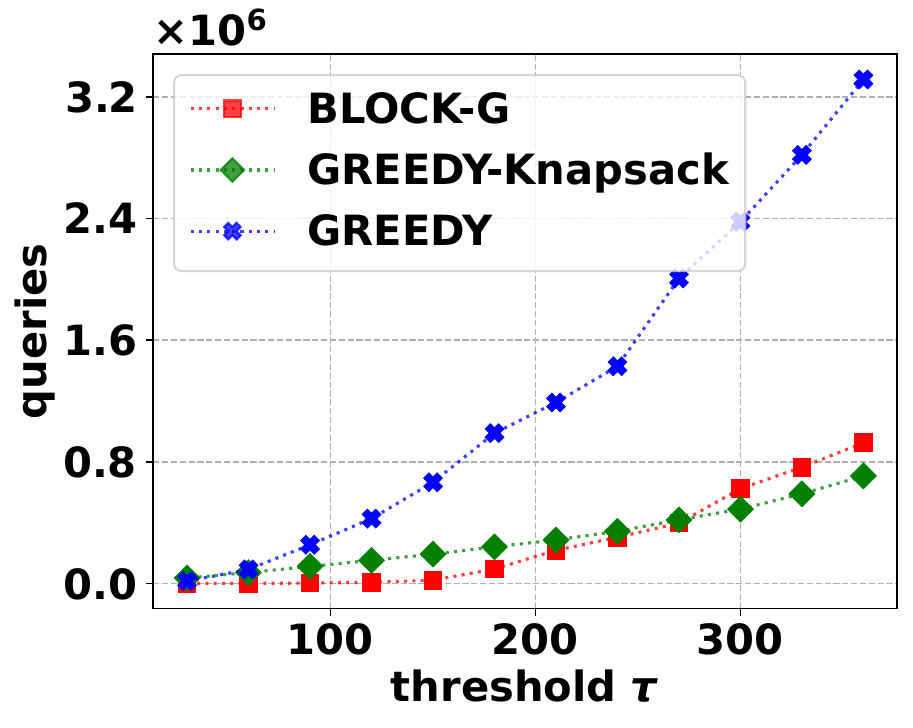}}
\hspace{-0.5em}
     \subfigure[corel budget (SCKP)]
{\label{fig:cover5000_c_knapsack}\includegraphics[width=0.24\textwidth]{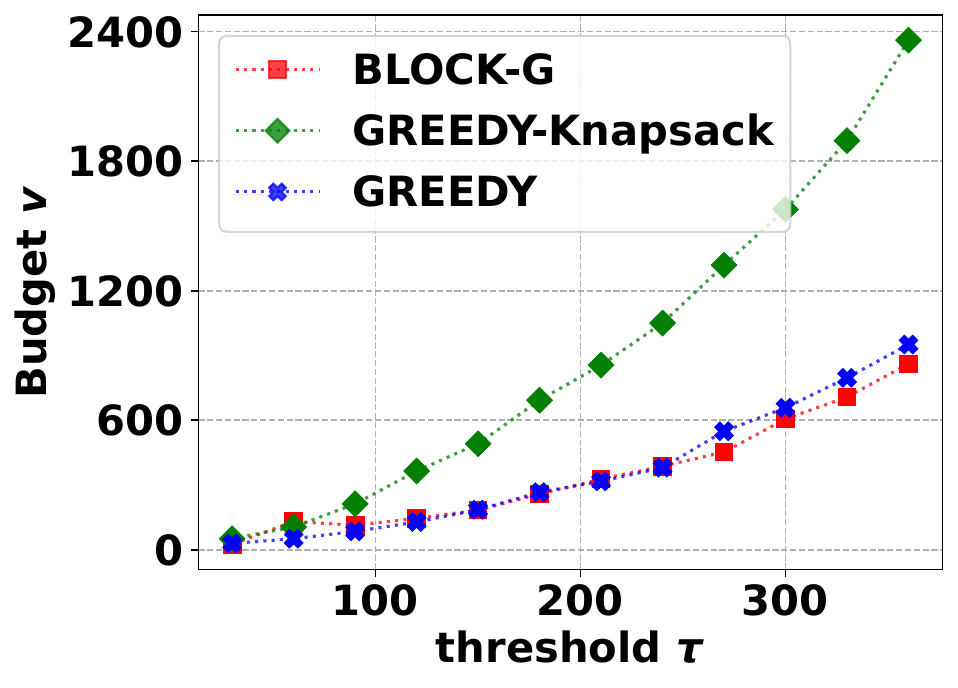}} 
\hspace{-0.5em}
     \subfigure[synthetic time (SCKP)]
{\label{fig:synthetic_time_knapsack}\includegraphics[width=0.24\textwidth]{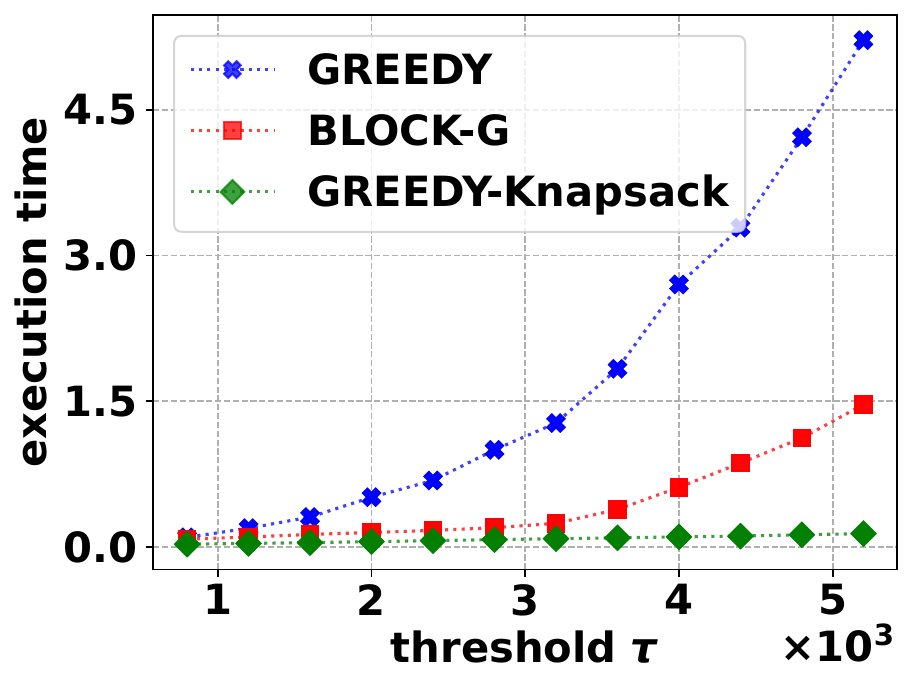}} 
\hspace{-0.5em}
\subfigure[synthetic $f$ (SCKP)]
{\label{fig:synthetic_f_knapsack}\includegraphics[width=0.24\textwidth]{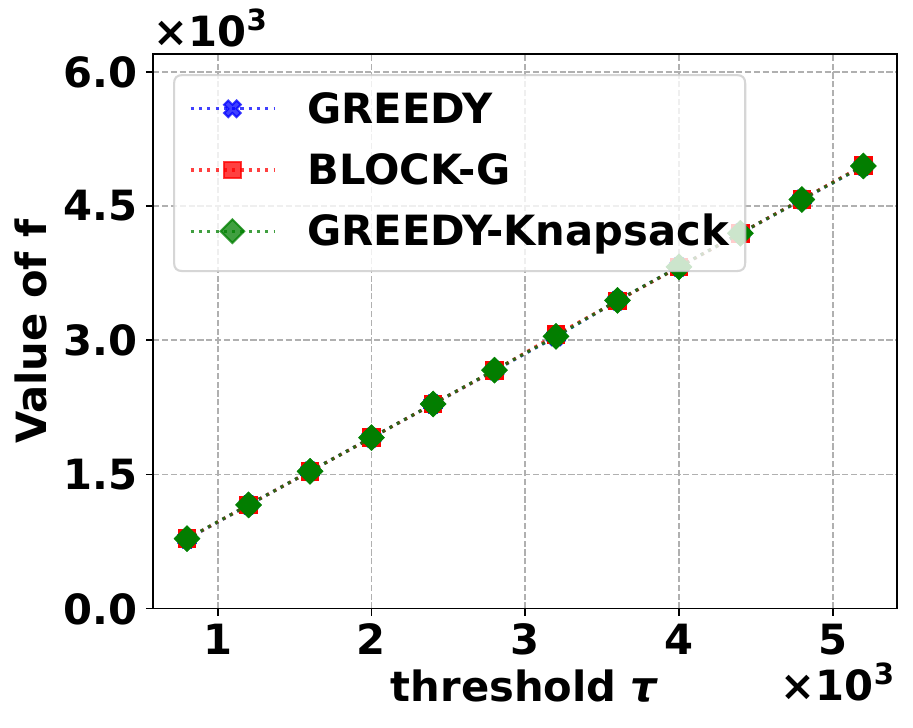}}
\hspace{-0.5em}
\subfigure[synthetic queries (SCKP)]
{\label{fig:synthetic_q_knapsack}\includegraphics[width=0.24\textwidth]{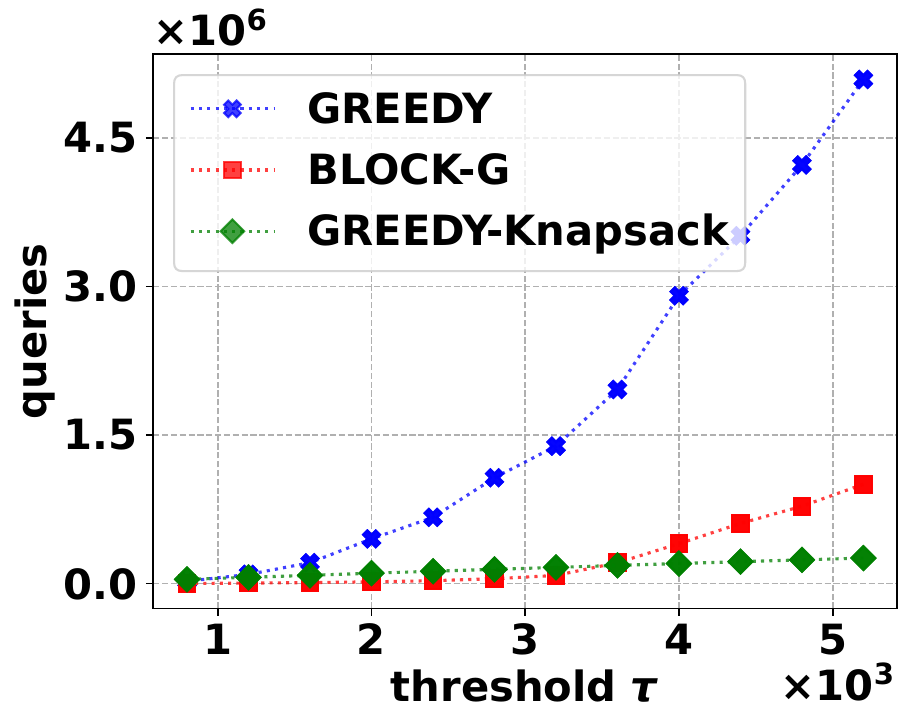}}
\hspace{-0.5em}
     \subfigure[synthetic budget (SCKP)]
{\label{fig:synthetic_c_knapsack}\includegraphics[width=0.24\textwidth]{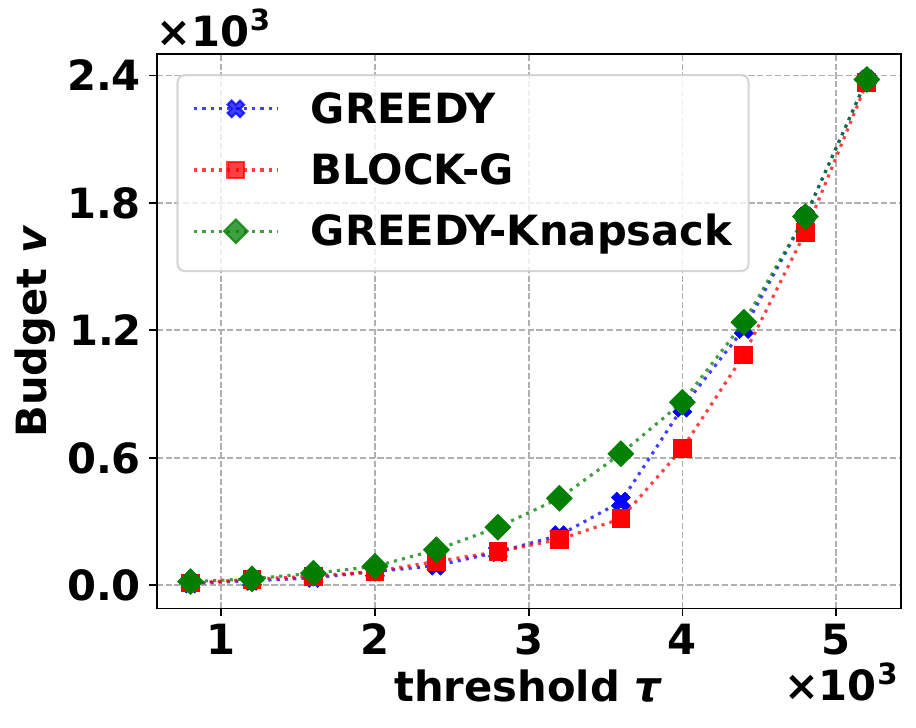}} 
\caption{The experimental results of running the algorithms on the Corel5k dataset and the synthetic dataset. Samples: the number of queries. Budget: $\max_{i\in[N]}\frac{c(S\cap U_i)}{p_i}$.}
        \label{fig:exp_result_appdx_knapsack}
\end{figure*}
\begin{figure*}[t!]
    \centering
    \hspace{-0.5em}
     \subfigure[twitch time (SCKP)]
{\label{fig:twitch_alpha_time_knapsack}\includegraphics[width=0.24\textwidth]{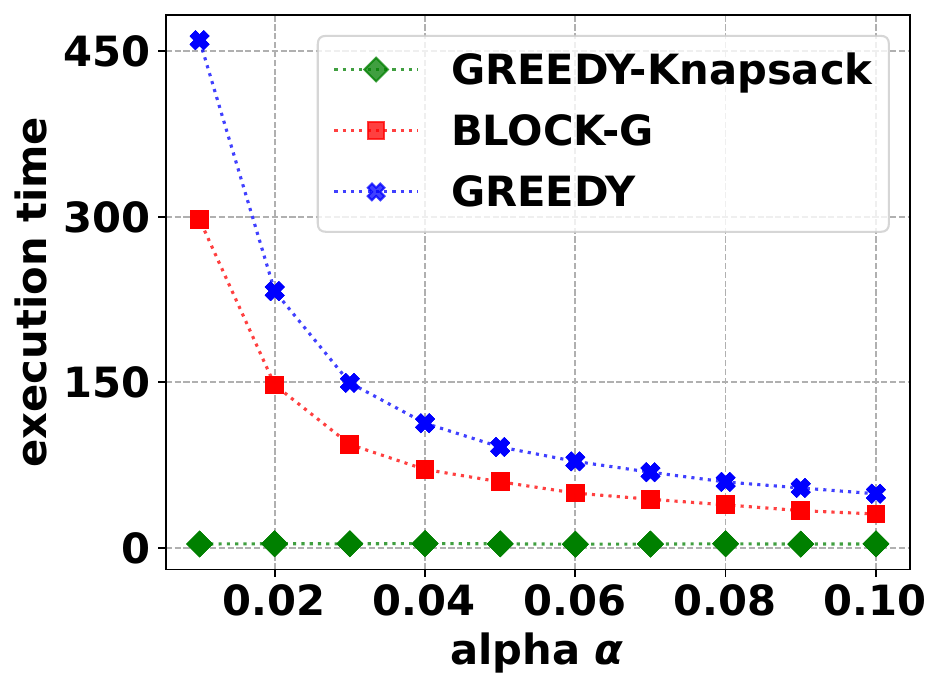}} 
\hspace{-0.5em}
\subfigure[twitch $f$ (SCKP)]
{\label{fig:twitch_alpha_f_knapsack}\includegraphics[width=0.24\textwidth]{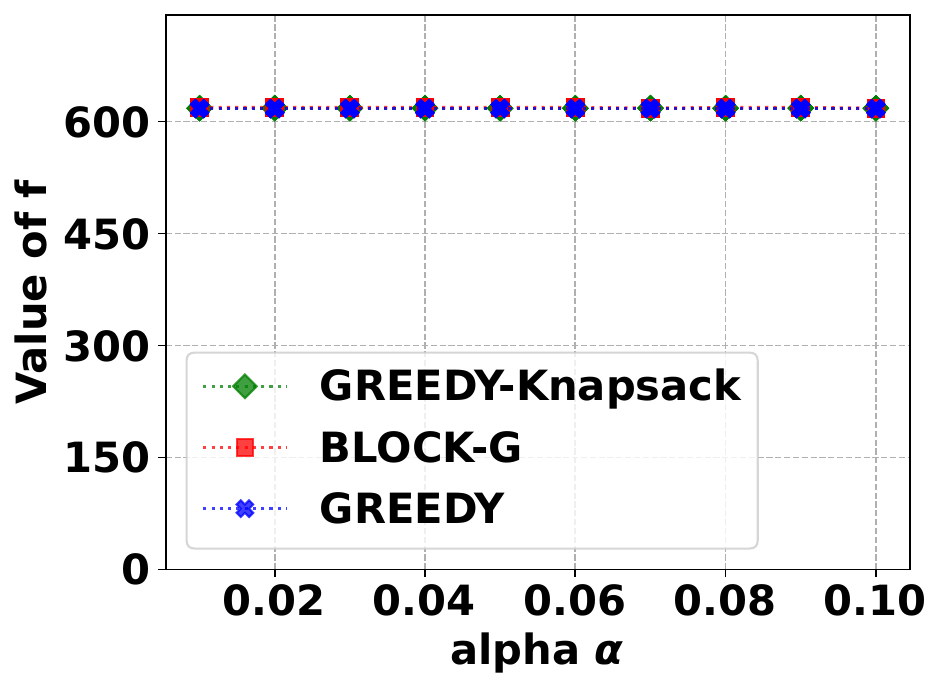}}
\hspace{-0.5em}
\subfigure[twitch queries (SCKP)]
{\label{fig:twitch_alpha_q_knapsack}\includegraphics[width=0.24\textwidth]{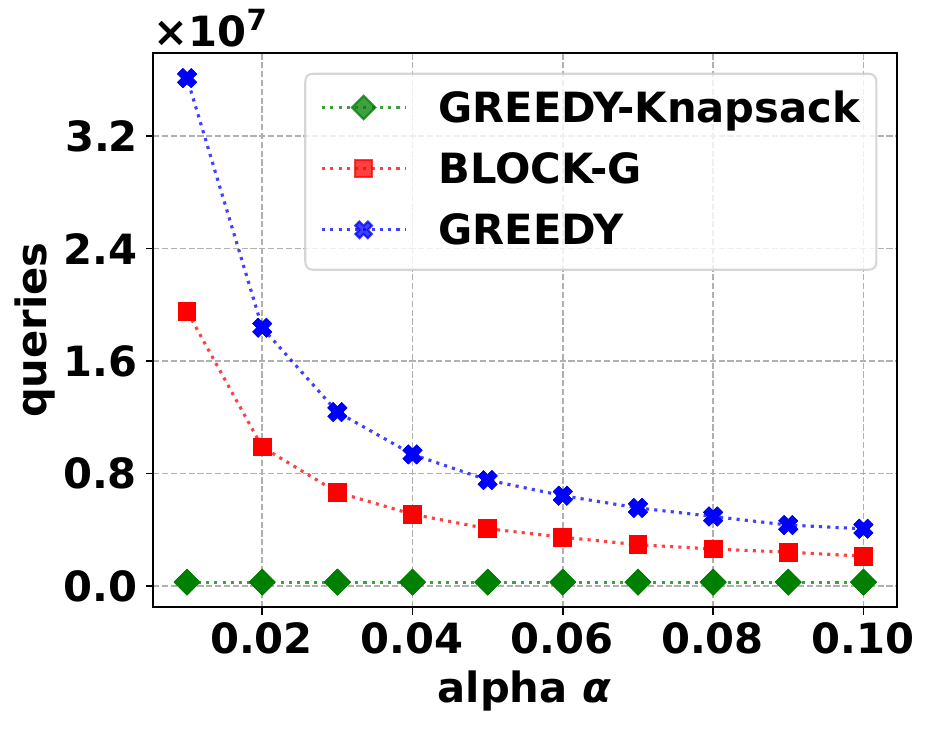}}
\hspace{-0.5em}
     \subfigure[twitch budget (SCKP)]
{\label{fig:twitch_alpha_c_knapsack}\includegraphics[width=0.24\textwidth]{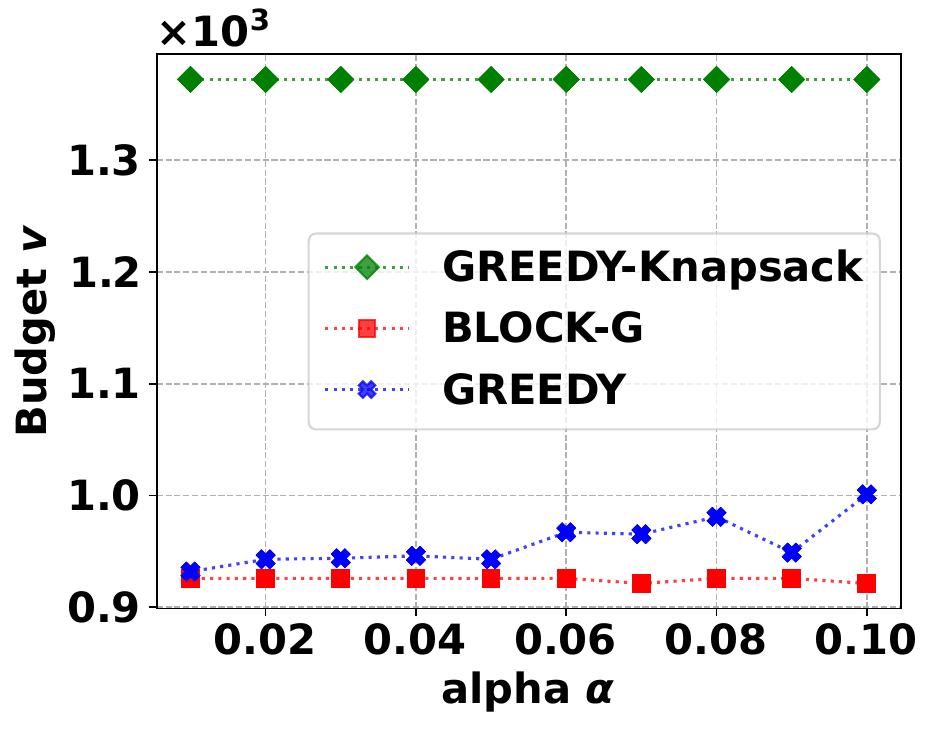}} 
\hspace{-0.5em}
    
\caption{The experimental results for the SCKP problem on the Twitch dataset across different $\alpha$ values. Samples: the number of queries. Budget: $\max_{i\in[N]}\frac{c(S\cap U_i)}{p_i}$.}
        \label{fig:exp_result_appdx_knapsack_alpha}
\end{figure*}
\begin{figure*}[t!]
    \centering
    \hspace{-0.5em}
     \subfigure[twitch $q$ (SCKP)]
{\label{fig:twich_q}\includegraphics[width=0.24\textwidth]{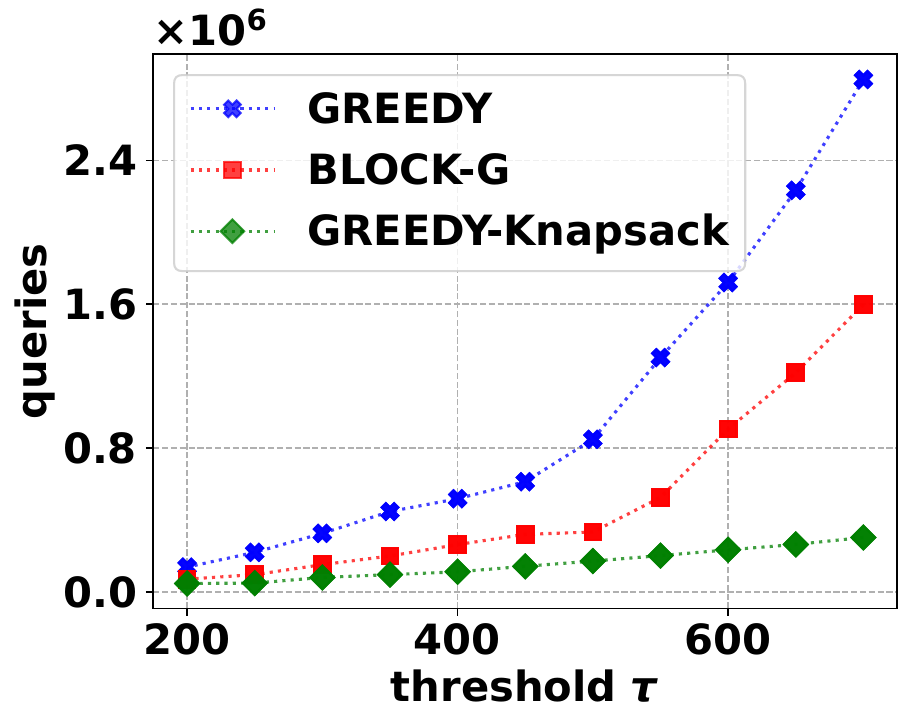}} 
\hspace{-0.5em}
\subfigure[corel $f$ (SCF)]
{\label{fig:cover5000_fair_f}\includegraphics[width=0.24\textwidth]{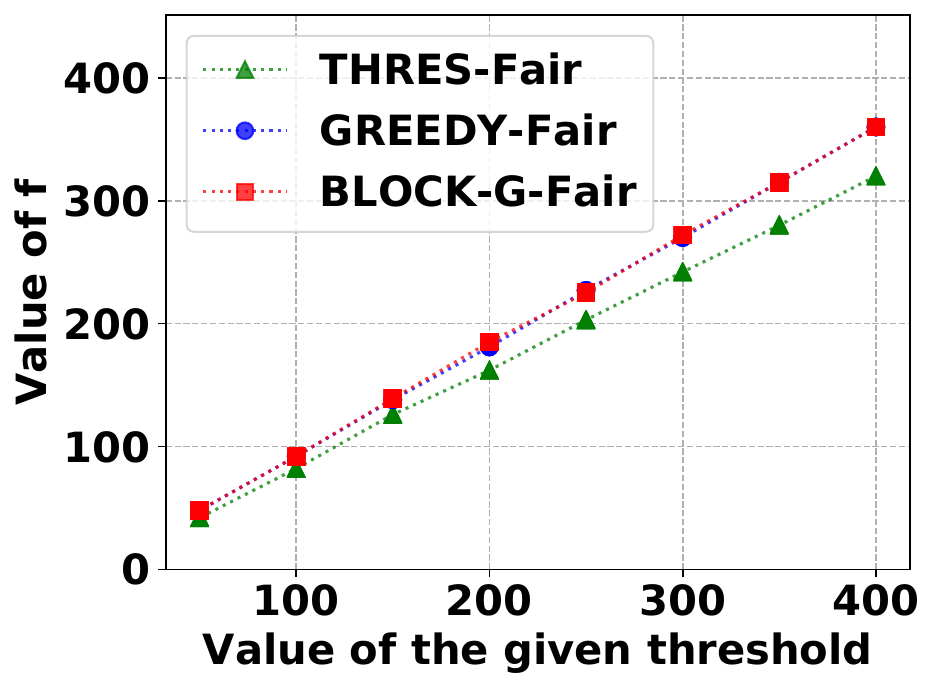}}
\hspace{-0.5em}
\subfigure[corel queries (SCF)]
{\label{fig:cover5000_fair_q}\includegraphics[width=0.24\textwidth]{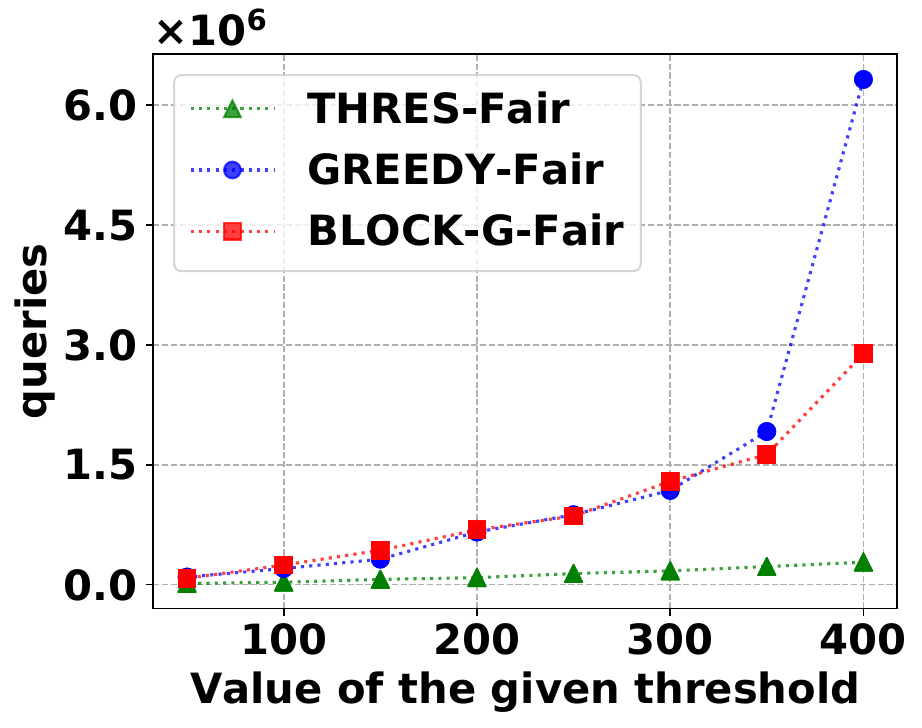}}
\hspace{-0.5em}
     \subfigure[corel time (SCF)]
{\label{fig:cover5000_fair_time}\includegraphics[width=0.24\textwidth]{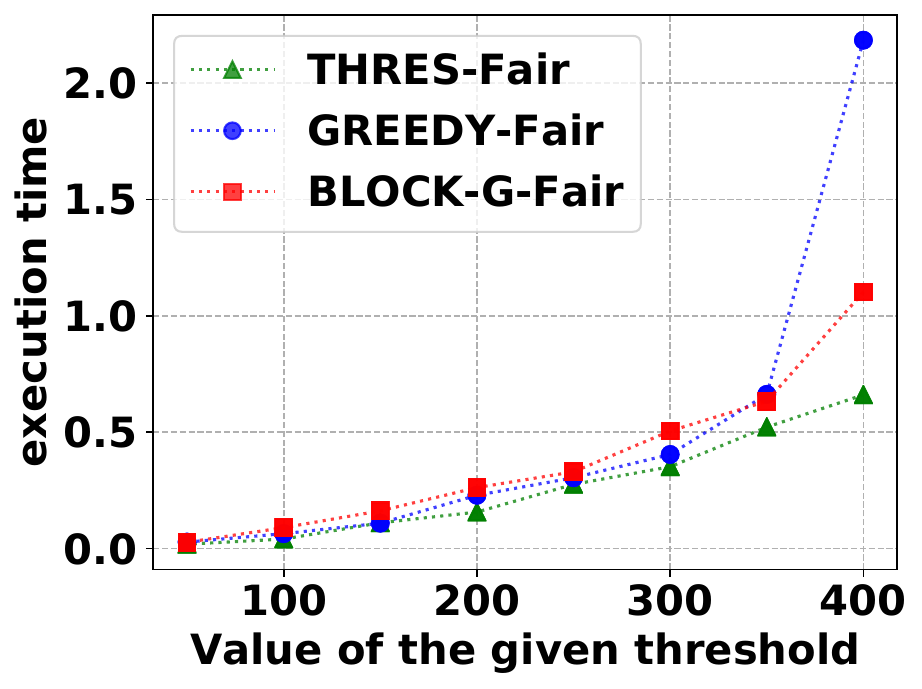}} 
\hspace{-0.5em}
\subfigure[twitch time (SCKP)]
{\label{fig:twitch_time}\includegraphics[width=0.2\textwidth]{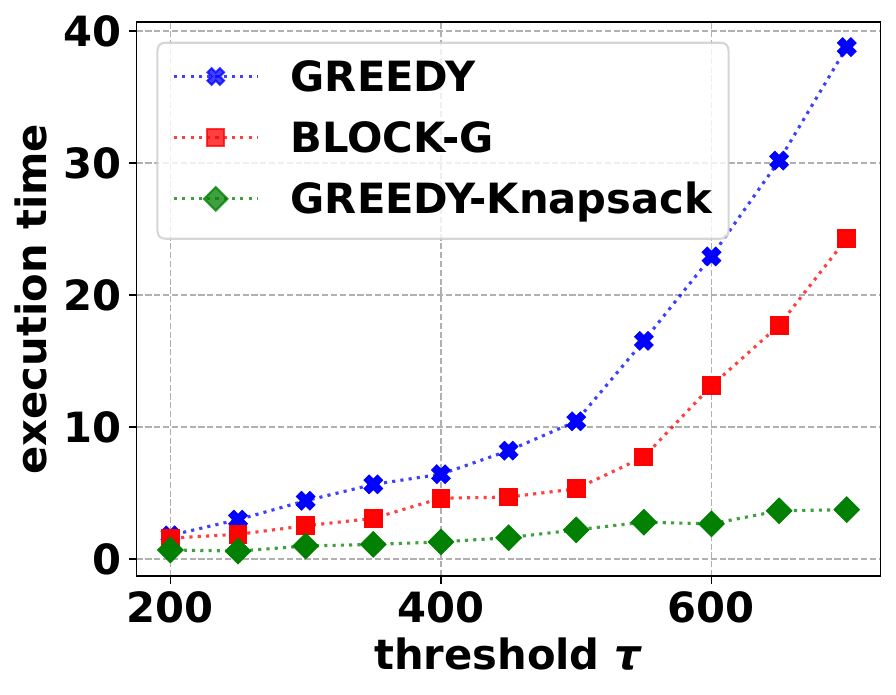}}
\hspace{-0.5em}
\subfigure[synthetic $f$ (SCF)]
{\label{fig:synthetic_fair_f}\includegraphics[width=0.24\textwidth]{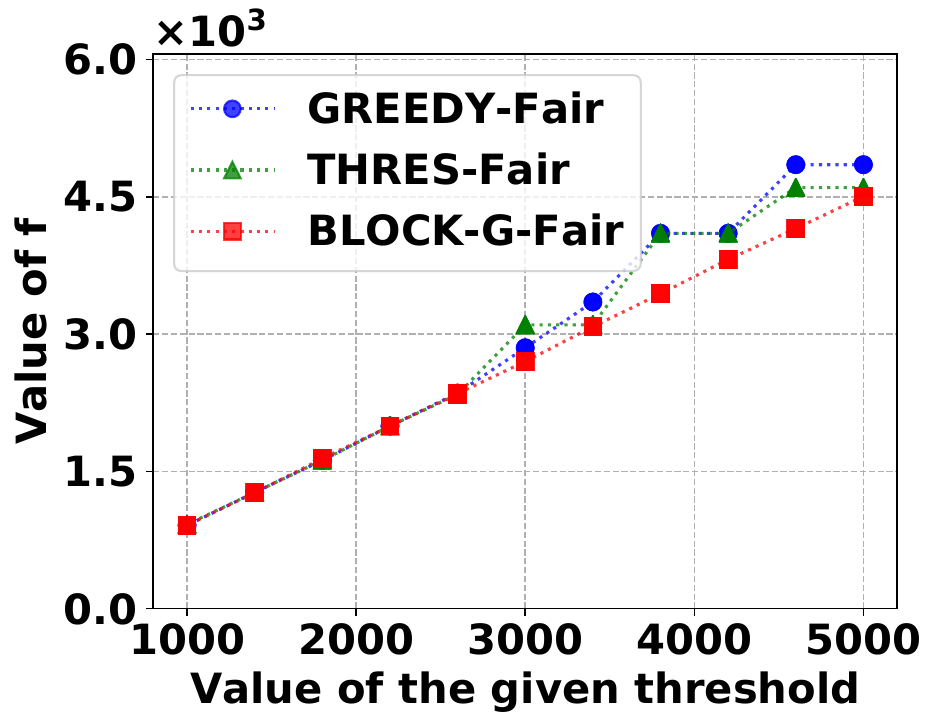}}
\hspace{-0.5em}
\subfigure[synthetic queries (SCF)]
{\label{fig:synthetic_fair_q}\includegraphics[width=0.24\textwidth]{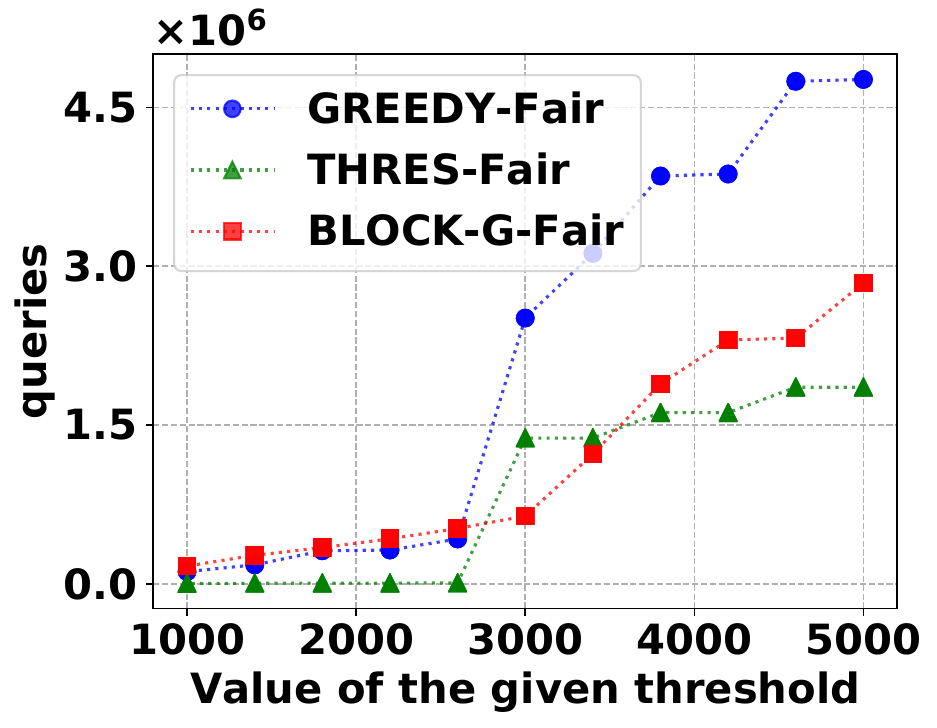}}
\hspace{-0.5em}
\subfigure[synthetic time (SCF)]
{\label{fig:synthetic_fair_t}\includegraphics[width=0.24\textwidth]{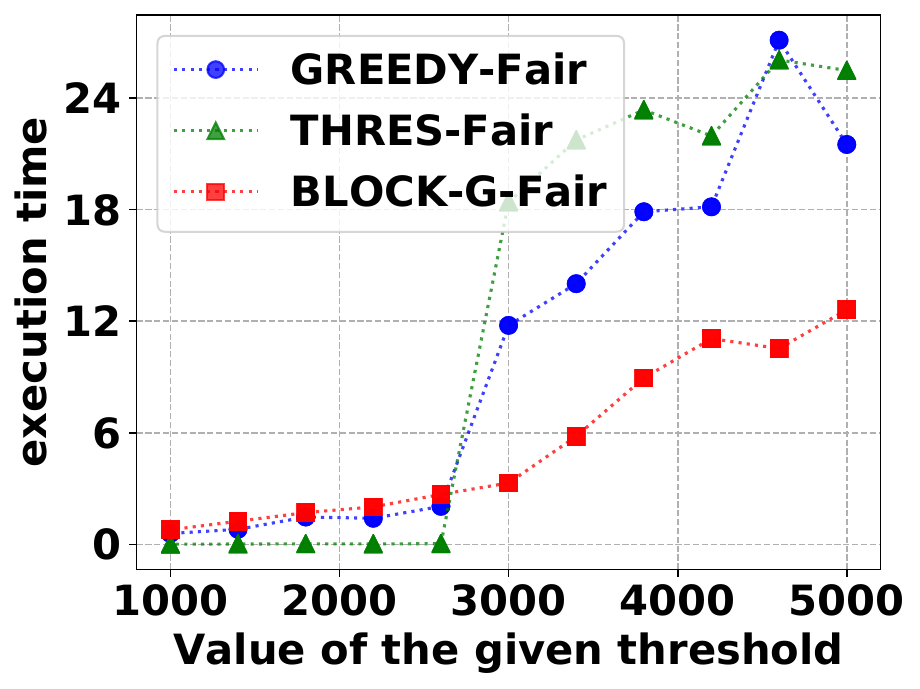}}
\hspace{-0.5em}
\subfigure[euall queries (nonmonotone SCP)]
{\label{fig:euall_q}\includegraphics[width=0.24\textwidth]{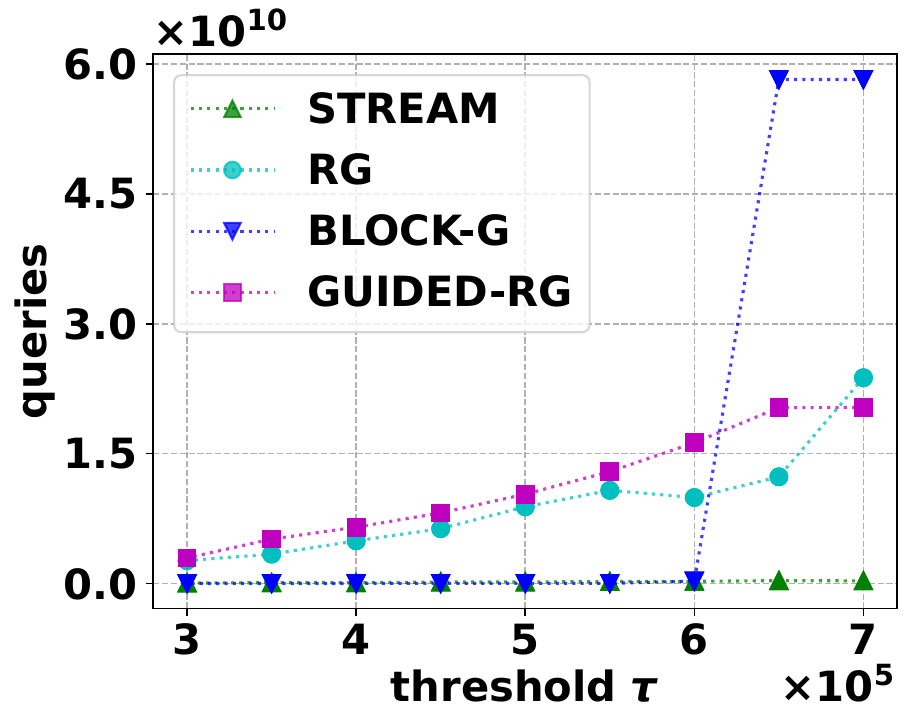}}
\hspace{-0.5em}
\subfigure[euall time (nonmonotone SCP)]
{\label{fig:euall_time}\includegraphics[width=0.24\textwidth]{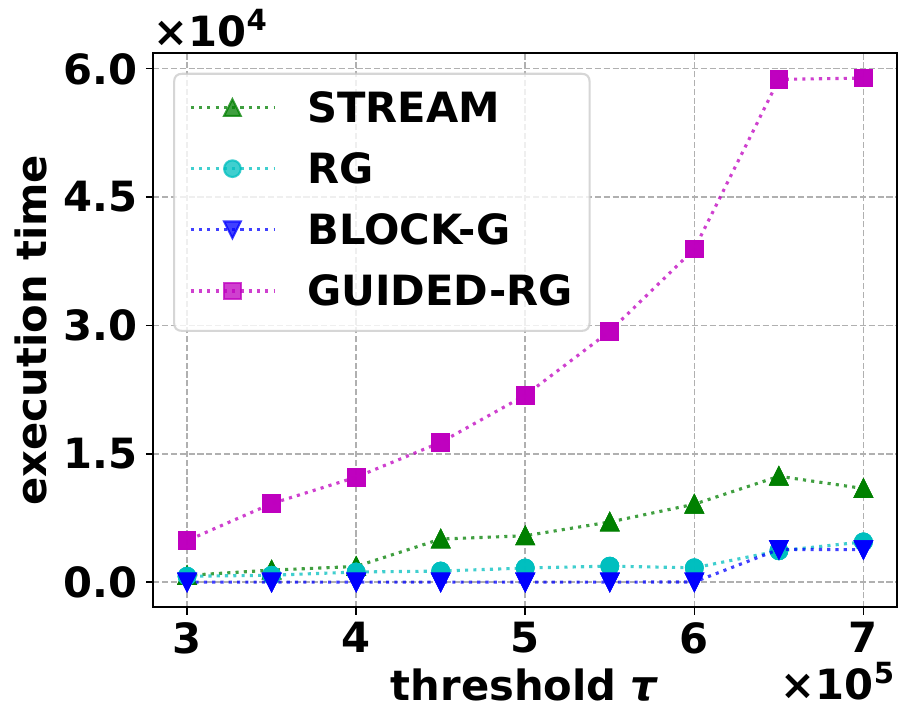}}
\caption{The experimental results of running the algorithms on the Corel5k dataset and the synthetic dataset. Samples: the number of queries. Cost: the size of the returned solution. Budget: $\max_{i\in[N]}\frac{c(S\cap U_i)}{p_i}$. Fairness difference: $(\max_c |S \cap U_c| - \min_c |S \cap U_c|) / |S|$.}
        \label{fig:exp_result_appdx}
\end{figure*}
\begin{figure*}[t!]
    \centering
    \hspace{-0.5em}
\subfigure[ImageNet\_50 $f$ (SCF)]
{\label{fig:imagenet_fair_f}\includegraphics[width=0.24\textwidth]{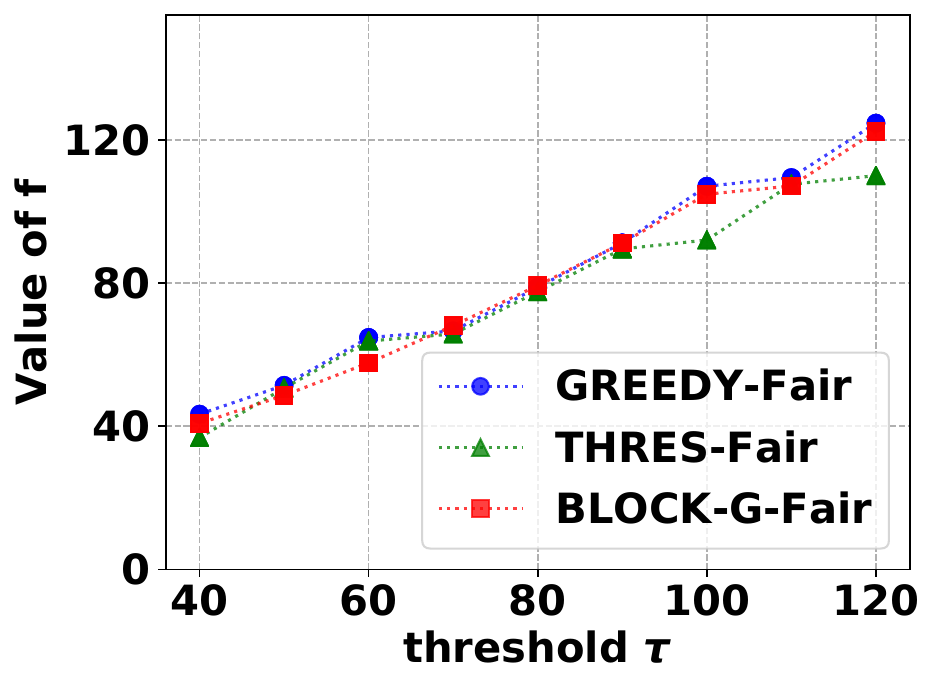}}
\hspace{-0.5em}
\subfigure[ImageNet\_50 queries (SCF)]
{\label{fig:imagenet_fair_q}\includegraphics[width=0.24\textwidth]{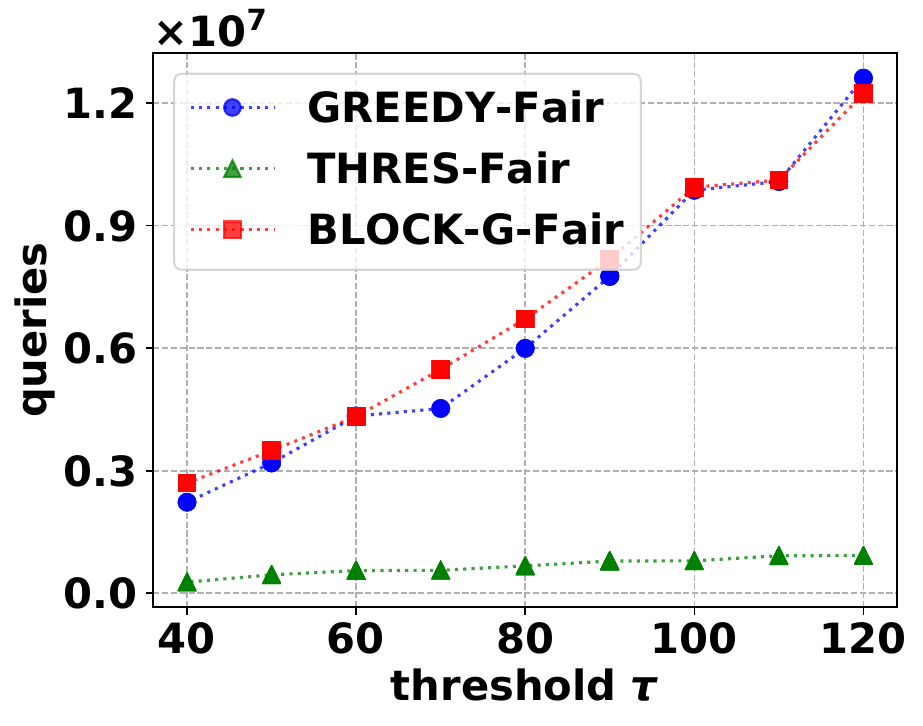}}
\hspace{-0.5em}
\subfigure[ImageNet\_50 fairness difference (SCF)]
{\label{fig:imagenet_fair_difference}\includegraphics[width=0.24\textwidth]{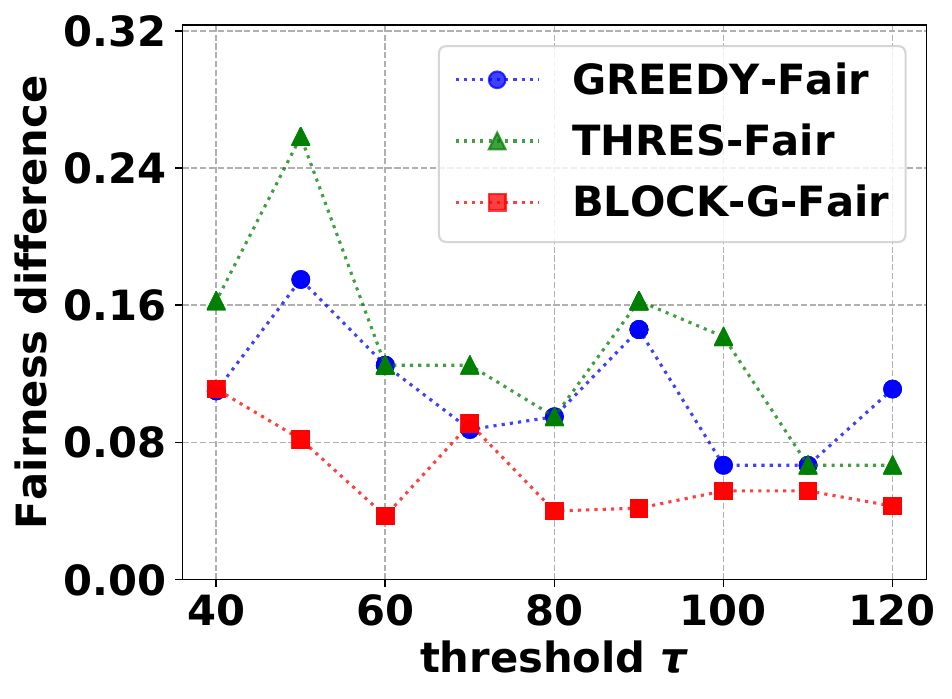}}
\hspace{-0.5em}
\subfigure[ImageNet\_50 cost (SCF)]
{\label{fig:imagenet_fair_cost}\includegraphics[width=0.24\textwidth]{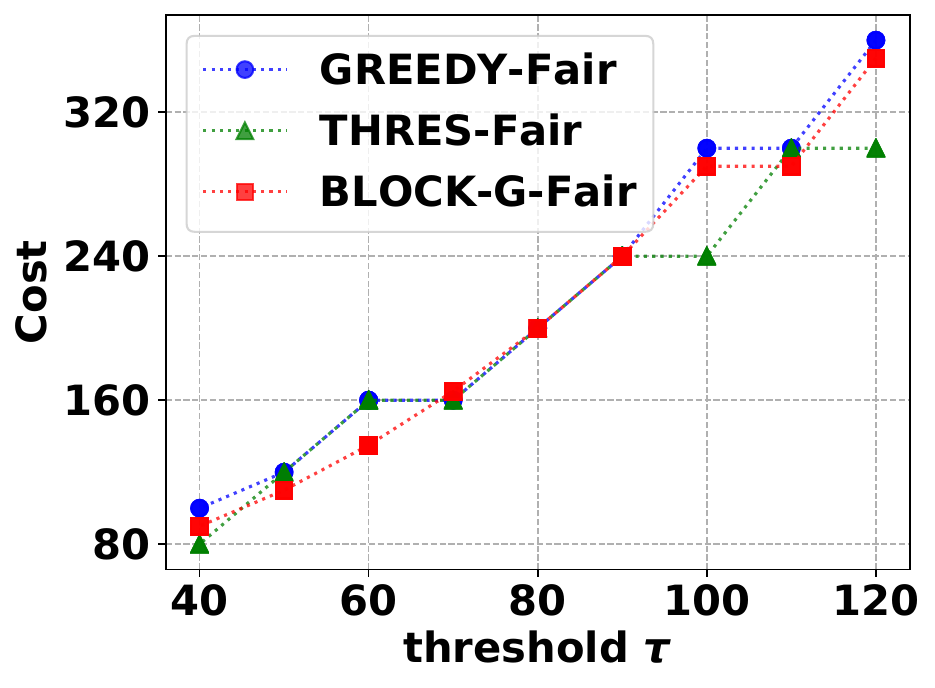}}

\caption{The experimental results of running the algorithms on the ImageNet\_50 dataset on the SCF problem. Samples: the number of queries. Cost: the size of the returned solution. Fairness difference: $(\max_c |S \cap U_c| - \min_c |S \cap U_c|) / |S|$.}
        \label{fig:exp_result_imagebet}
\end{figure*}
\begin{figure*}[t!]
    \centering
\hspace{-0.5em}
\subfigure[THRES-Fair]
{\label{fig:threshold_fair_diff}\includegraphics[width=0.3\textwidth]{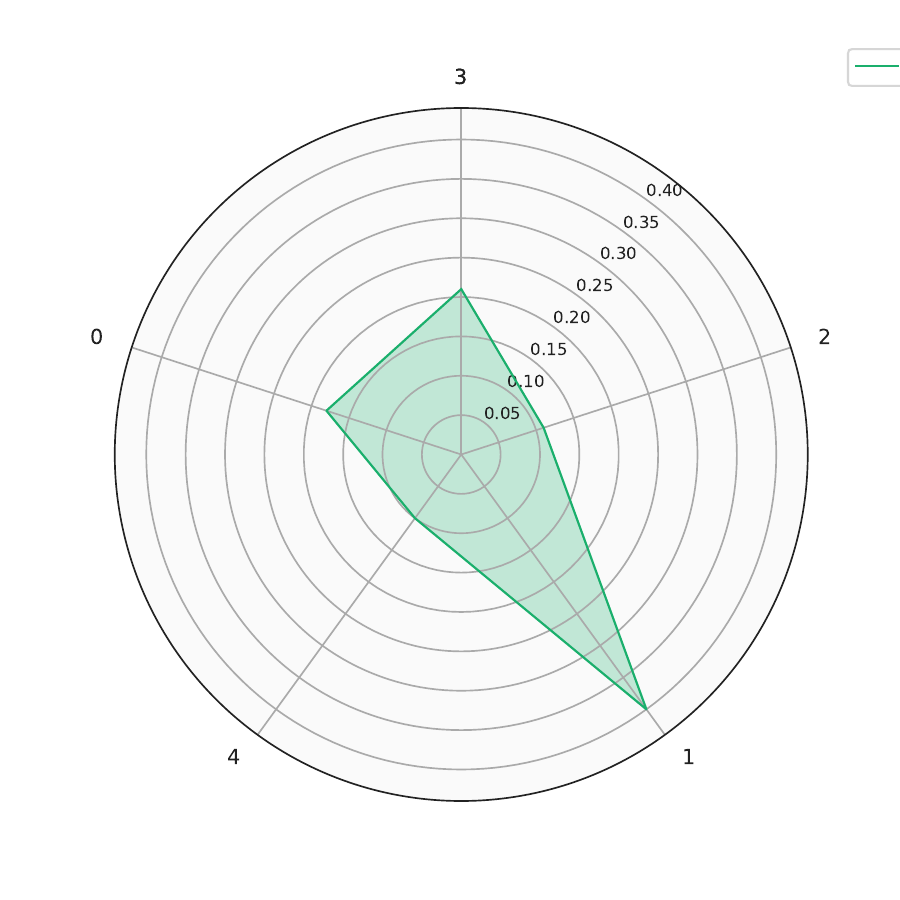}}
\hspace{-0.5em}
\subfigure[GREEDY-Fair]
{\label{fig:greedy_fair_differ}\includegraphics[width=0.3\textwidth]{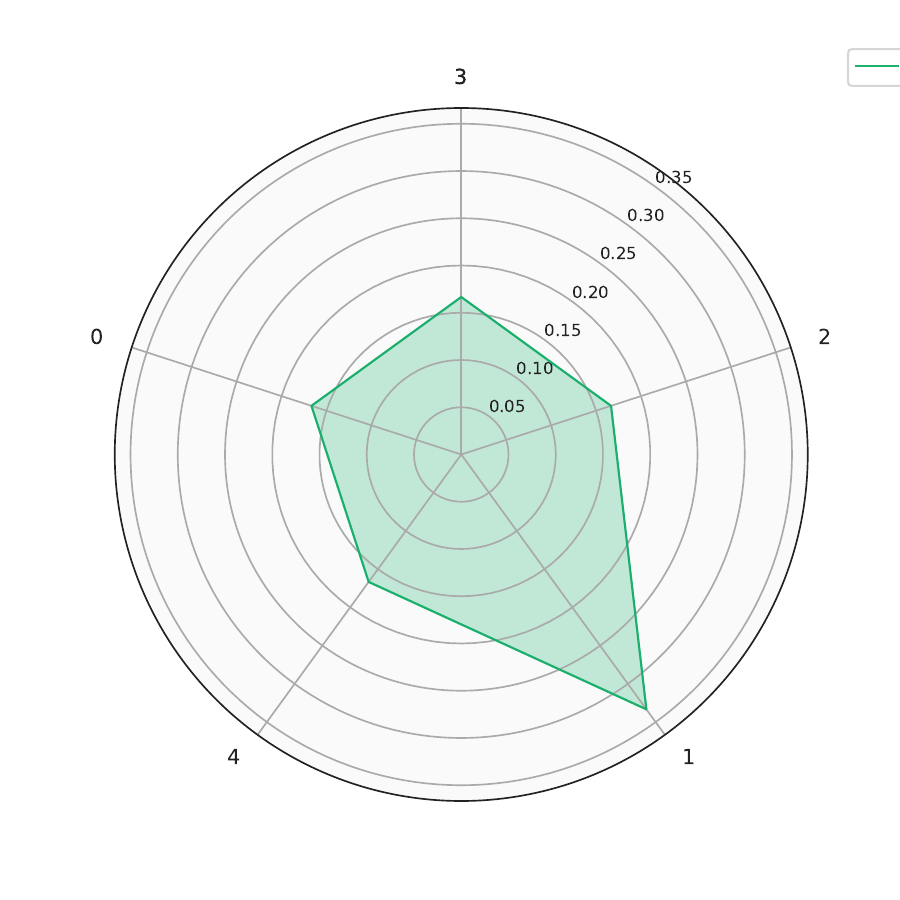}}
\hspace{-0.5em}
\subfigure[BLOCK-G-Fair]
{\label{fig:blockG_fair_differ}\includegraphics[width=0.3\textwidth]{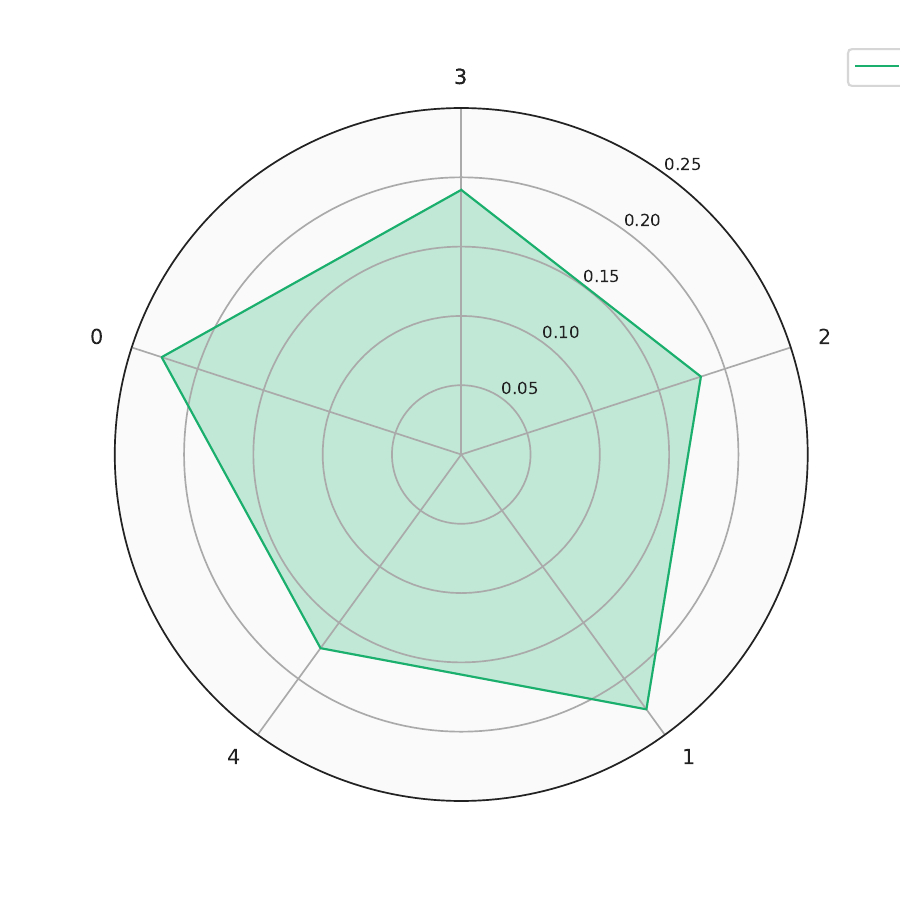}}
\caption{Radar plots of the label distributions for the experiments on the instance of SCF on the corel dataset with $\tau=300$.}
        \label{fig:exp_result_appdx2}
\end{figure*}

\section{Broader Impact}\label{apdx:braoder-impact}
This paper presents work whose goal is to advance the field of Machine Learning. There are many potential societal consequences of our work, none of which we feel must be specifically highlighted here.
\end{document}